\title{Asymptotic Freedom in the BV Formalism}
\author{Chris Elliott, Brian Williams, and Philsang Yoo}
\date{\today}
\numberwithin{equation}{subsection}
\newcommand{\del}{\partial}
\newcommand{\cO}{\mathcal{O}}
\newcommand{\cE}{\mathcal{E}}
\newcommand{\cY}{\mathcal{Y}}
\def\<{\langle}
\def\>{\rangle}
\def\d{{\rm d}}
\def\cF{{\mc F}}
\def\cE{{\mc E}}
\def\cS{{\mc S}}
\def\Sym{{\rm Sym}}
\def\oloc{\cO_{\rm loc}}
\newcommand{\tensor}{\otimes}
\newcommand{\Obs}{\mathrm{Obs}}
\newcommand{\GF}{\mathrm{GF}}
\begin{document}
\maketitle

\begin{abstract}
We define the $\beta$-function of a perturbative quantum field theory in the mathematical framework introduced by Costello -- combining perturbative renormalization and the BV formalism -- as the cohomology class of a certain functional measuring scale dependence of the effective interaction. We show that the one-loop $\beta$-function is a well-defined element of the obstruction-deformation complex for translation-invariant and classically scale-invariant theories, and furthermore that it is locally constant as a function on the space of classical interactions and computable as a rescaling anomaly, or as the logarithmic one-loop counterterm.  We compute the one-loop $\beta$-function in first-order Yang--Mills theory, recovering the famous asymptotic freedom for Yang--Mills in a mathematical context.
\end{abstract}
\tableofcontents

\section{Introduction} \label{intro_section}

Costello \cite{CostelloBook} has given a systematic framework for perturbative quantum field theory, accessible to mathematicians, based on effective field theory and BV quantization.  One of the high points of his book is a cohomological proof of the renormalizability of pure Yang--Mills theory. In this paper we deepen this formalism so as to understand and demonstrate asymptotic freedom, providing techniques for interpreting the one-loop $\beta$-function as a class in (a close cousin of) the cohomology group Costello used. Furthermore, we identify it as the anomaly class with respect to the rescaling symmetry, allowing us to accordingly introduce the notion of quantum scale-invariance into this mathematical framework.

The program of obtaining exciting mathematical structures out of perturbative quantum field theory using Costello and Gwilliam's interpretation has been successful, see, for instance, \cite{CostelloLiBCOV}  \cite{CostelloYangian}  \cite{GradyGwilliam} \cite{ChanLeungLi} \cite{LiLi}  \cite{GGW}. On the other hand, the program of mathematically understanding more physical aspects of quantum field theories in this framework has not been much pursued since the original work of Costello (see, however, \cite{Nguyen2dYM}). In this paper, we demonstrate how one can calculate the one-loop $\beta$-function of Yang--Mills theory under this framework, which in particular leads to the celebrated asymptotic freedom result of Gross and Wilczek \cite{GrossWilczek} and Politzer \cite{Politzer}. This computation is well-known in the physics community, but the relevant Feynman diagrams in our formulation are different from the usual ones considered in the subject since we work with the first-order formulation in the BV formalism; we of course end up producing the exact same numbers as physicists do. In the physics community the one-loop $\beta$-function has been computed in first-order Yang--Mills (but not in the BV formalism) by Martellini and Zeni \cite{MartelliniZeni}.

Renormalization has been studied mathematically in various guises (see for instance \cite{Salmhofer} \cite{Langlands} \cite{Mastropietro} \cite{Brydges}), especially in the context of statistical physics.  Renormalizability for Yang--Mills and related theories in the BV-BRST formalism has been extensively studied from a cohomological point of view in the work of Barnich, Brandt, and Henneaux (see the surveys \cite{HenneauxSurvey, BBHSurvey} and citations therein).  Asymptotic freedom has also been investigated for the 2d Gross--Neveu theory by Gaw\k{e}dzki and Kupiainen \cite{GawedzkiKupiainen}, and by Feldman, Magnen, Riviaissen, and S\'en\'eor \cite{FMRS}. In Costello's approach to effective BV quantization, Nguyen \cite{Nguyen} has previously studied the notion of the $\beta$-function for the two-dimensional nonlinear sigma model. In this work, we carefully discuss the behavior of the one-loop $\beta$-function on the space of field theories. In this regard, our work could be thought of as a verification that it is feasible to do explicit calculations of well-studied objects in QFT in Costello's framework. 

Let's briefly review the notion of the $\beta$-function of a perturbative quantum field theory, and recall what it means for a theory to be asymptotically free.  The $\beta$-function is usually thought of as measuring how the coupling constants in a renormalizable quantum field theory change as one changes the energy scale at which one performs measurements.  If the $\beta$-function is strictly negative then the coupling vanishes in the high energy limit, and one says the theory is \emph{asymptotically free}.  This is exactly what Gross, Wilczek, and Politzer verified in the case of $\SU(3)$-Yang--Mills theory with matter in three copies of the fundamental representation: the quark model describing strong interactions.

From the point of view of this paper, we describe the $\beta$-function as derived from a more fundamental object: the \emph{$\beta$-functional} describing the rate of change of the effective interaction with respect to the action of the local RG flow. This idea is natural from the point of view of effective field theory as established by of Wilson \cite{Wilson} and Polchinski \cite{Polchinski}; in the physics literature, the article of Hughes and Liu \cite{HughesLiu}, for example, explains how this approach applies to the $\beta$-function with a detailed account of the example of $\phi^4$ theory.  From our point of view, the $\beta$-functional is a collection of functionals on the space of fields, describing a first-order deformation of an effective theory, and one obtains the usual $\beta$-function by taking its cohomology class. One connects this to the usual physical description of the $\beta$-function by identifying the cohomology classes in the complex of local functionals with coupling constants in the field theory.  For instance we can see this very explicitly in the example of Yang--Mills theory.  We establish a number of interesting fundamental properties of the $\beta$-function from this point of view.  In particular we show that the $\beta$-function is locally constant over the space of quantum field theories (with fixed classical BV complex), and at the one-loop level the $\beta$-function is independent of the choice of renormalizable quantization of a classical field theory, so locally constant over the space of \emph{classical} field theories admitting a renormalizable quantization. Finally we demonstrate that the $\beta$-function admits a natural interpretation as the anomaly for the rescaling symmetry, using the language developed by Costello and Gwilliam in \cite{CostelloGwilliam2}.

In \cite{CostelloBook}, Costello introduced a conceptual way to perform renormalization, keeping track of choices one needs to make, and proved that pure Yang--Mills theory on $\RR^4$ has a unique renormalizable quantization, reproducing the fundamental result of Veltman and 't Hooft \cite{VeltmantHooft} by a cohomological argument (using different but related language to the argument of Barnich, Brandt, and Henneaux \cite{BBHSurvey}). In this paper, we prove that his proof also works for the theory with matter and analyze the one-loop $\beta$-function of Yang--Mills theory with arbitrary gauge group and matter representation, which in particular shows the asymptotic freedom in the framework.  The local constancy of the one-loop $\beta$-function over the space of classical theories shows that one can compute it using first-order Yang--Mills theory, which is a theory equivalent to ordinary Yang--Mills as a classical BV theory.  We prove that we can do this calculation using some of the techniques used by physicists, by proving that the one-loop $\beta$-function is computable using logarithmic counterterms.

\subsection{Outline of the Paper} \label{outline_section}
We begin in Section \ref{perturbative_section} by reviewing Costello's work on perturbative field theory.  We'll recall the definition of a perturbative field theory from this point of view, using the heat kernel approach to regularization.  Then we'll recall how one can describe gauge theory in this language using the BV formalism; we require a field theory to satisfy the quantum master equation.

In Section \ref{beta_function_section} we'll explain how to talk about the $\beta$-function from this point of view.  The usual $\beta$-function arises from a functional that describes the behavior of a perturbative field theory under the renormalization group flow by taking its cohomology class.  We prove that this functional is compatible with the renormalization group flow, and that it's closed with respect to the quantum BV differential.  We also verify that the functional is a homotopy invariant in the simplicial set of quantum field theories, and that at the one-loop level it's independent of the choice of quantization of a classical field theory. 
We explain another perspective on the $\beta$-functional at the one-loop level, as the anomaly obstructing a classical scale-invariant theory from being scale-invariant at the quantum one-loop level.  Finally we explain how to compute the $\beta$-function: at one-loop it can be computed in terms of appropriate counterterms.

In Section \ref{BV_section} we apply the quantum BV formalism to Yang--Mills theory.  We explain, following Chapter 6 of{\cite{CostelloBook}, the classical equivalence of first- and second-order Yang--Mills theory, and explain why this means the one-loop $\beta$-function can be equivalently computed starting from either classical theory.  Then we extend a Gel'fand--Fuchs cohomology calculation of Costello's to describe the cohomology of the obstruction-deformation complex in first-order Yang--Mills theory with matter.  This means, in particular, that we can identify the $\beta$-function as a function of a single coupling constant for each simple summand of the gauge group.

We conclude with Section \ref{one_loop_section}, in which we calculate the one-loop $\beta$-function in Yang--Mills theory in this framework, and recover the classical result obtained by Gross and Wilczek \cite{GrossWilczek} and Politzer \cite{Politzer}.  In particular we verify in Costello's mathematical framework for perturbative quantum field theory the famous asymptotic freedom of quantum chromodynamics.

\subsection{Notation and Terminology}
Let's set up some notation for the rest of the paper, following the conventions for functional analysis used in \cite{CostelloGwilliam1}.
\begin{itemize}
\item Our infinite-dimensional vector spaces will always be nuclear Fr\'echet spaces, and the tensor product $\otimes$ will always refer to the completed projective tensor product.
\item If $E$ is a graded vector bundle $E$ on a manifold $M$, we'll write $\mc E$ for its sheaf of sections.  We'll write $E^\vee$ for the dual vector bundle, and $E^!$ for $E^\vee \otimes \mr{Dens}_M$, where $\mr{Dens}_M$ is the sheaf of densities.  Finally, we'll write $\mc E^*$ for the continuous linear dual of the sheaf $\mc E$, we can identify it with $\bar\cE^!_c$, the compactly supported distributional sections of the bundle $E^!$.
\item The space of \emph{local action functionals} $\oloc  (\cE)$ on $\cE$ is defined by
\[\oloc  (\cE)= \mr{Dens}_M \otimes_{D_M} \widehat \Sym^\bullet_{C^\infty_M}(\mr{Jet}(\cE)^\vee),\]
where $D_M$ is the sheaf of differential operators, $\mr{Jet}(\cE)$ is the sheaf of sections of the jet bundle $J(E)$ of $E$, and $(-)^\vee$ stands for the dual in the category of sheaves of $C^\infty_M$ modules.
\end{itemize}

\subsection{Acknowledgements}
We would like to thank Kevin Costello for suggesting the topic of this paper and for helpful conversations on various aspects of this project.  We would also like to thank Owen Gwilliam for several helpful comments and suggestions.

This research was supported in part by Perimeter Institute for Theoretical Physics. Research at Perimeter Institute is supported by the Government of Canada through Industry Canada and by the Province of Ontario through the Ministry of Economic Development \& Innovation. CE acknowledges the support of IH\'ES.  The research of CE on this project has received funding from the European Research Council (ERC) under the European Union's Horizon 2020 research and innovation programme (QUASIFT grant agreement 677368). BW enjoyed support from the National Science Foundation as a graduate student research fellow under Award DGE-1324585.

\section{Perturbative Quantum Field Theory} \label{perturbative_section}

In this section we recall the formalism and terminology of perturbative quantum field theory developed by Costello in \cite{CostelloBook} and Costello--Gwilliam in \cite{CostelloGwilliam1,CostelloGwilliam2}. The main ingredient necessary for our analysis is the BV formalism, which is for simplicity first discussed in a finite-dimensional case. In cases of interest where we need to deal with infinite-dimensional spaces, we have to do regularization and renormalization. Here we briefly review a mathematical framework developed by Costello \cite{CostelloBook}. For more detail with full discussion, we urge the readers to refer to the book.

\subsection{Toy Example}

Let's start with a toy example, where we take as our input an $n$-dimensional vector space $V$ of fields, along with an action functional $S \colon V \to \bb R$. The principle of least action says that all of the classical physics of the system appears over the critical locus $\mr{Crit}(S) =  \{\d S = 0 \}$. Understood in modern language, the classical BV formalism amounts to requiring that one should understand the classical equations of motion in a derived way: we regard $\d S$ as a section $\d S \colon V \rightarrow T^* V$ and then define the \emph{derived critical locus} to be the derived intersection 
\[\mr{dCrit}(S) := \d S(V) \overset {\mr h}{\times}_{T^*V} V\] 
inside $T^* V$, where $V$ is embedded in $T^*V$ by the zero section.  This derived intersection is readily computed by taking the Koszul resolution of $\OO(\d S(V))$, yielding $\mr{dCrit}(S) = (T^*[-1]V , \iota_{\d S})$; this is a space with functions $\OO(T^*[-1]V) = \Sym_{\OO(V)}(T_V[1])$ equipped with a vector field $\iota_{\d S}$ understood as a differential on it, given by contraction with $\d S$.

In some situations, the free part of an action functional is not non-degenerate unless we take the quotient of the full space of fields by some additional symmetries. As we'll see shortly, non-degeneracy will be an essential condition for the heat kernel regularization we'll describe, so in such a situation we should consider the space of fields to be of the form $\mc E = V/G$ so that the free part becomes non-degenerate.

Now let us extend our representative toy example to this situation, where $V$ is an $n$-dimensional representation of a Lie group $G$. In perturbative field theory, we only consider a formal neighborhood of $0 \in V$, and then we don't lose anything by replacing the group $G$ by its Lie algebra $\gg$. Applying the BV philosophy to this situation, we should replace the space of functions on $V/G$ by the space of \emph{derived} $\gg$-invariants of functions on $V$, that is $ R\mr{Hom}_{U\gg}(\RR,\OO(V))$, where $\RR$ is understood as the trivial representation. A nice model for the derived invariants is given by the Chevalley--Eilenberg cochain complex $C^\bullet(\gg, \OO(V)) $; as a graded vector space $C^\bullet(\gg, \OO(V)) = \Sym^\bullet(\gg^*[-1]) \otimes \OO(V)$ and the differential $d$ comes from the action of $\gg$ on $V$ and the Lie bracket on $\mathfrak{g}$. In other words, we model functions on the formal neighborhood of $0 \in V/G$ as functions on a graded manifold $\gg[1] \oplus V$ together with a differential, which is a vector field of degree 1. Elements of $\gg[1]$ are called ghost fields, which account for the gauge symmetry. That is, the gauge symmetry is thought of as part of the space of fields, rather than a separate piece of data.

Combining these two steps, the derived critical locus in the toy example of $V/G$  takes the form 
\[\mc E= T^*[-1](\gg[1] \oplus V) = \gg[1]\oplus V \oplus V^*[-1] \oplus \gg^* [-2]\] 
with a nontrivial differential $Q_{\mr{BV}}$ such that the quadratic part of the action is given by $S_2(F) = \omega(F ,Q_{\mr{BV}}F )$, where $\omega$ is the natural symplectic pairing of degree $-1$ given by the dual pairing between $V,V^*$ and $\gg,\gg^*$. Elements of $V^*[-1]$ are called ``antifields'' and elements of $\gg^*[-2]$ are called ``antighosts''; we think of $\mc E$ as the space of fields in the BV sense. Thus, we in particular obtain a $(-1)$-shifted symplectic dg vector space $(\cE, Q)$.  This induce a bracket $\{-,-\}$ on functions $\cO(\cE)$ of cohomological degree one.  The part of the interaction of degree higher than two survives as a function $I$ on the space of fields satisfying the \emph{classical master equation}
\[Q_{\mr{BV}}(I) + \frac 12 \{I,I\} = 0.\]
The complex of {\em classical observables} is defined to be the complex
\[
\left(\cO(\cE), Q + \{I,-\}\right) 
\]
which is a model for functions on the derived critical locus. Note that the classical master equation ensures that the differential of this complex is square zero.  The commutative product of functions and the $(-1)$-shifted bracket $\{-,-\}$ provides the structure of a $P_0$-{\em algebra} on the complex of classical observables. This is the structure we wish to quantize. 

Although we came to consider the space $\cE = T^*[-1](\gg[1]\oplus V)$ from a natural story starting from a vector space $V$ with a $G$-action, for the following discussion it makes no difference if we replace $\cE$ by an arbitrary finite-dimensional $(-1)$-shifted symplectic dg vector space equipped with a function $I \in \cO(\cE)$ of degree zero satisfying the classical master equation.  Here the $(-1)$-shifted symplectic pairing defines an element in $\wedge^2 \cE^*$ which is closed under the differential. Let $K \in \cE \tensor \cE$ denote the integral kernel for the identity map $\cE \to \cE$ where we use the symplectic form to identify $\cE$ with $\cE^*[-1]$. In other words, $K$ is the image of $\omega$ under the isomorphism $\wedge^2 \cE^* \cong \Sym^2(\cE)[2]$. Note that since the symplectic form has degree $-1$, the element $K$ has degree $+1$ in $\Sym^2(\cE)$.

The key concept needed to define a quantization of a classical field theory in this language, or to make sense of the path integral, involves the structure.

\begin{definition}
A \emph{Beilinson--Drinfeld algebra} (or \emph{BD algebra} for short) $(A, \Delta, \{\; , \;\})$ is a graded commutative algebra $A$ flat over $\RR [[\hbar]]$ with a differential $\Delta$ and a degree 1 Poisson bracket $\{\; ,\; \}$ such that for any $a,b \in A$, one has
\[\Delta (a\cdot b) = (\Delta a)\cdot b+(-1)^{|a|} a\cdot (\Delta b)+ \hbar \{a,b\}.\]
\end{definition}

In the toy model we care about we can define an operator $\Delta$, called the {\em BV Laplacian} on $\cO(\cE)[[\hbar]]$ by contraction with the element $K$. The formula for $\Delta$ is 
\[\Delta(f_1\cdots f_p) = \sum_{i,j} (\pm) K(f_i\otimes f_j) f_1\cdots \Hat{f_i} \cdots \Hat{f_j} \cdots f_p\]
where $f_i \in \cE^*$.  The terminology for the BV Laplacian is partially due to the fact that for $\cE \cong T^*[-1] (\RR^N)$ with the coordinates $\{x_i\}$ for $\RR^N$ and $\{\xi_j\}$ for the odd cotangent fiber, the BV Laplacian $\Delta$ has the form of the usual Laplacian $\sum_{i=1}^N \partial_{x_i} \partial_{\xi_i}$.  

The operator $\Delta$ equips the cochain complex 
\[(\cO(\cE)[[\hbar]] , Q + \hbar \Delta)\]
with the structure of a BV algebra over $\RR [[\hbar]]$.  In the free case, i.e. when the interaction term $I = 0$ this complex provides a {\em quantization} of the space of classical observables.  This means that the complex reduces, modulo $\hbar$, to the classical complex and the bracket $\{-,-\}$ satisfies $\{a,b\} = \lim_{\hbar \to 0} [\Tilde{a},\Tilde{b}]$ for any lifts $\Tilde{a},\Tilde{b}$ of classical observables.  In the interacting case, we must make a further choice, which may not always exist.  By definition, a quantization of the classical $I$ interaction is an element $I^{\mr{q}} \in \cO(\cE)[[\hbar]]$ that reduces to $I$ modulo $\hbar$ and satisfies the {\em quantum master equation}
\[
Q I^{\mr{q}} + \hbar \Delta I^{\mr{q}} + \frac{1}{2} \{I^{\mr{q}},I^{\mr{q}}\} = 0 .
\]
The functional $I^{\mr{q}}$ equips the complex
\[
(\cO(\cE)[[\hbar]], Q + \hbar \Delta + \{I^{\mr{q}},-\})
\]
with the structure of a quantization of the classical observables. Note that the quantum master equation is equivalent to the differential of the above complex to be square zero.

For an explanation of how this structure allows one to make sense of a version of the path integral in a rigorous way, we refer to the reader to Chapter 7 of \cite{CostelloGwilliam2}. 

All of the above discussion involved the toy model of a finite dimensional space of fields. We immediately run in to complications when we consider field theories of more interesting nature, such as when the space of fields is the sections of a graded vector bundle on some smooth manifold. Even classically, the bracket $\{-,-\}$ will not be fully defined in this generality. There are further problems when discussing quantizations. We outline the approach of \cite{CostelloBook} and \cite{CostelloGwilliam1, CostelloGwilliam2} of an effective formulation of perturbative quantum field theories. 

\subsection{Free BV Theories}

We start with the idea of a {\em free BV theory} which is the correct notion of a $(-1)$-shifted symplectic dg vector space in the case that the space of fields are functions (or sections of a vector bundle) on a smooth manifold. 

\begin{definition}
A \emph{free BV theory} on an oriented manifold $M$ is a complex of vector bundles on $M$
\[\cdots \xrightarrow{Q} E^{-1} \xrightarrow{Q} E^{0} \xrightarrow{Q} E^1 \xrightarrow{Q} \cdots\]
together with a pairing
\[(-,-) : E \otimes E \to {\rm Dens}_M \]
that is non-degenerate and graded skew-symmetric of degree $-1$. Moreover, we require that $(E, Q)$ is an elliptic complex as a complex of vector bundles.  
\end{definition}

In particular, the differential $Q$ is a differential operator. The space of smooth sections ${\mc E} = \Gamma(M, E)$ is the space of fields of the free BV theory, which we sometimes call the \emph{classical BV complex}. Let $\omega = \int_M \circ (-,-)$ be the induced bilinear pairing on $\cE_c$. While $(-,-)$ is non-degenerate, the pairing $\omega$ does not induce an identification of $\cE[1]$ and $\cE^*$. This is because the integral kernel $K_0 = \omega^{-1}$ defined by $\omega$ is a distributional section of $E \boxtimes E$ supported on the diagonal $M \hookrightarrow M \times M$. 

The classical observables of the free theory are defined by
\[\Obs^{\mr{cl}} (M) = \left(\Sym(\cE^*), Q\right)\]
where $\cE^*$ denotes the continuous linear dual of $\cE$ and $Q$ is the induced differential extended to the symmetric algebra by the rule that it is a derivation. The central idea of BV quantization suggests a naive two-step process:
\begin{itemize}
\item[(1)] tensor the underlying graded vector space of $\Obs^{\mr{cl}}(M)$ by $\RR[[\hbar]]$ and
\item[(2)] modify the differential $Q$ to $Q + \hbar \Delta$
\end{itemize}
where $\Delta$ is defined by contraction with the integral kernel $K_0$. The obstacle is that the operator $\Delta$ is actually ill-defined when acting on $(\cE^*)^{\otimes k}$ as it would involve pairing distributional sections.

We solve the above via a homotopical version of renormalization, following \cite{CostelloBook}. Note that the kernel $K_0$ is $Q$-closed since the symplectic form defining $K_0$ is. By elliptic regularity there exists a {\em smooth} (not distributional) section $K_L$ of $E \boxtimes E$ such that the difference $K_L-K_0$ is $Q$-exact. That is
\[
K_0 = K_L + Q(P_L)
\]
for some distributional section $P_L$ of $E \boxtimes E$. The operator $P_L$ is called a ``parametrix''. Although it is not smooth, the difference of two parametrices $P(L,L') := P_L - P_{L'}$ is smooth since $Q(P(L,L')) = K_L - K_{L'}$ (again by elliptic regularity).  We'll now explain a procedure for computing these regularized kernels $K_L$ explicitly.

\subsection{Regularization via Gauge Fixing} \label{regularization_section}
The concept of a parametrix was introduced to interpolate between the smooth kernels $K_L$ and the distributional limit $K_0$. Often we can define a parametrix using the following regularization technique. It relies on the existence of a {\em gauge fixing operator}. A gauge fixing operator is a differential operator $Q^{\GF} : \cE \to \cE$ of chomological degree $-1$ and of square zero that satisfies:
\begin{itemize}
\item[(1)] $Q^{\GF}$ is self-adjoint for the pairing $\<-,-\>$ defining the classical BV theory, and 
\item[(2)] the commutator $[Q,Q^{\GF}]$ is a generalized Laplacian, in the sense of \cite{BGV}. 
\end{itemize}

Given such a gauge fixing operator we can regularize as follows. For $L > 0$ we can regard the heat kernel $K_L$ as the integral kernel for the operator $e^{-L[Q,Q^{\GF}]}$. In this circumstance, the parametrix, or \emph{propagator}, is given by
\[
P(\eps,L) = \int_{t = \eps}^L (Q^{\GF} \otimes 1) K_t \d t 
\] 
and has the following properties.
\begin{lemma} The integral kernel $K_L$ for $e^{-L[Q,Q^{\GF}]}$ is smooth for each $L > 0$. Moreover, for each $\eps,L$ one has
\[
Q \left(P(\eps,L)\right) = K_L - K_\eps .
\]
\end{lemma}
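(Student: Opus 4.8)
The plan is to treat the two assertions separately, since they rest on rather different inputs. The smoothness of $K_L$ is not special to this setting: it is precisely the statement that a generalized Laplacian has a smooth heat kernel for $L>0$. Since $[Q,Q^{\GF}]$ is a generalized Laplacian by hypothesis, I would simply invoke the standard parametrix construction of the heat kernel for such operators, as in \cite{BGV}: the leading symbol being (a multiple of) the metric forces the approximate heat kernel to converge to an honest smooth section of $E\boxtimes E$ over $M\times M$ for each $L>0$, with the familiar short-time asymptotic expansion on the diagonal. On a non-compact $M$ one must be slightly more careful about existence and decay, but in the translation-invariant situations of interest the kernel is assembled explicitly from the scalar Gaussian heat kernel, so smoothness away from $L=0$ is immediate.

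For the identity relating $Q(P(\eps,L))$ to $K_L$ and $K_\eps$ — where $Q$ denotes the total differential $Q\otimes 1 + 1\otimes Q$ on sections of $E\boxtimes E$ — I would proceed in three steps. First, $K_t$ is closed for this differential for every $t\ge 0$: it holds at $t=0$ because $K_0=\omega^{-1}$ comes from the $Q$-closed symplectic pairing, and $Q$ commutes with $e^{-t[Q,Q^{\GF}]}$ since $[Q,[Q,Q^{\GF}]]=0$ (an immediate consequence of $Q^2=0$); hence $(Q\otimes 1+1\otimes Q)K_t=0$ for all $t$. Second, I would record the operator identity
\[
(Q\otimes 1 + 1\otimes Q)(Q^{\GF}\otimes 1) = \big([Q,Q^{\GF}]\otimes 1\big) - (Q^{\GF}\otimes 1)(Q\otimes 1 + 1\otimes Q),
\]
which follows from $Q^2=0$ together with the (anti)commutation of operators acting on different tensor factors; evaluated on $K_t$ the last term drops out by the first step, leaving $(Q\otimes 1+1\otimes Q)(Q^{\GF}\otimes 1)K_t = ([Q,Q^{\GF}]\otimes 1)K_t$. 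Third, since $K_t$ is the heat kernel of $e^{-t[Q,Q^{\GF}]}$ it obeys the heat equation $\partial_t K_t = -([Q,Q^{\GF}]\otimes 1)K_t$, so the integrand defining $P(\eps,L)$ is, up to sign, a total $t$-derivative. Commuting $Q$ past the $t$-integration (legitimate because $K_t$ is smooth for $t>0$ and $Q$ is a continuous differential operator) and applying the fundamental theorem of calculus then gives
\[
Q(P(\eps,L)) = \int_{\eps}^{L}(Q\otimes 1+1\otimes Q)(Q^{\GF}\otimes 1)K_t\,\d t = -\int_{\eps}^{L}\partial_t K_t\,\d t = K_\eps - K_L ,
\]
which is the asserted relation up to the overall sign fixed by one's conventions for how $Q$ acts on $E\boxtimes E$ and for $\omega^{-1}$ (and consistent with the earlier relation $K_0=K_L+Q(P_L)$ upon sending $\eps\to 0$).

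The genuinely nontrivial issues here are analytic, not algebraic. One needs $K_t$ and $(Q^{\GF}\otimes 1)K_t$ to be sufficiently regular in $t$ on $(0,\infty)$ to justify pulling $Q$ through the integral and to guarantee that $\int_\eps^L$ converges — this is exactly where smoothness of $K_t$ for $t>0$ and the strictly positive lower cutoff $\eps$ are used; the $\eps\to 0$ version producing a genuine parametrix $P_L$ would additionally require the elliptic-regularity statement recalled before the lemma to control the short-time limit. Apart from that, the principal obstacle in a clean write-up is pure bookkeeping: pinning down every Koszul sign so that the final equality comes out with the sign the rest of the paper expects.
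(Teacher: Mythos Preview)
The paper does not actually prove this lemma: it is stated as a standard fact about heat kernels of generalized Laplacians and left without argument. Your proposal supplies exactly the proof one would expect and is correct. The smoothness claim is indeed nothing more than the heat-kernel theory of \cite{BGV}, and the identity follows from the heat equation together with $Q$-closedness of $K_t$, precisely as you outline.

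One small observation on the sign: your computation yields $Q(P(\eps,L)) = K_\eps - K_L$, the opposite of what the lemma states. You are right that this is a convention issue, and in fact the paper itself is not internally consistent here --- the sentence just before the lemma reads $K_0 = K_L + Q(P_L)$, which upon setting $\eps = 0$ gives $Q(P(0,L)) = K_0 - K_L$, matching your sign rather than the lemma's. So the discrepancy you flag is a genuine (harmless) typo in the paper, not a flaw in your argument.
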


\subsection{Interacting Theories}\label{effective_field_theory}

\begin{definition}
A \emph{classical interaction} for a free BV theory $(E,Q)$ is a local functional $I^{\mr{cl}} \in \oloc(\cE)$ satisfying the classical master equation (CME)
\[
Q I^{\mr{cl}} + \frac{1}{2}\{I^{\mr{cl}},I^{\mr{cl}}\} = 0 .
\]
A \emph{classical field theory} is a free BV theory $(\cE, Q)$ together with a choice of a classical interaction.  The full classical action then has the form
\[
S^{\mr{cl}}(\varphi) = \int_M (\varphi, Q \varphi) + I^{\mr{cl}}(\varphi)
\]
The first term is quadratic and is called the free part of the action. Clearly, it only depends on the underlying free BV theory. 
\end{definition}

\begin{remark}\label{abstract_BV_remark}
Such data defining a classical field theory can be encoded in terms of an $L_\infty $-algebra structure on a shift of the classical BV complex $\mc E[-1]$, where the differential corresponds to the free part of the action, and (higher) brackets correspond to the classical interaction. Rather than providing a detailed treatment of this claim, we explain an example below; for a more detailed discussion, one should refer to \cite[Chapter 5]{CostelloGwilliam2}.
\end{remark}

\begin{example}[3-dimensional abelian Chern--Simons theory]
Let $M^3$ be a 3-manifold and $G$ be a semisimple Lie group. In perturbation theory near the trivial flat connection on the trivial $G$-bundle on $M$, the space of fields (in the physicists' terminology) is $\Omega^1(M;  \gg)$ and the Chern--Simons action functional $S_{\mr{CS}}$ on $\Omega^1(M; \gg)$ is given by \[S_{\mr{CS}}(A) = \frac{1}{2}\int_M \langle A, dA \rangle_{\gg} + \frac{1}{6}\int_M \langle A,[A,A]\rangle_{\gg} ,\] where $\langle\;,\;\rangle_{\gg}$ stands for a fixed invariant pairing on $\gg$. This admits $\Omega^0(M ; \gg)$ as the algebra of infinitesimal gauge symmetries, because the action functional is invariant under the action of $X \in \Omega^0(M;  \gg)$ given by $A\mapsto  [X,A] + \d X$. This yields the classical BV complex \[\cE= \left( \xymatrix{\Omega^0(M;\gg)[1]  \ar[r]^-d & \Omega^1(M;\gg)[0] \ar[r]^-d& \Omega^2(M; \gg)[-1] \ar[r]^-d & \Omega^3(M;\gg)[-2] } \right),\] whose differential encodes the free part of the action functional. On its shift
\[  \left( \xymatrix{\Omega^0(M;\gg) [0] \ar[r]^-d & \Omega^1(M;\gg)[-1] \ar[r]^-d& \Omega^2(M; \gg)[-2] \ar[r]^-d & \Omega^3(M;\gg)[-3] } \right),\] 
the only additional nontrivial $L_\infty$-structure is the bracket \[\ell_2= [\;,\;] \colon \Omega^i(M;\gg) \otimes \Omega^j(M;\gg) \to \Omega^{i+j}(M;\gg),\] which comes from the cubic interaction term. 
\end{example}

We wish to study {\em quantizations} of classical theories defined by a free BV theory together with a local functional $I$. Roughly, these are elements $I^{\mr{q}}$ of $\cO_{\mr{loc}}(\cE)[[\hbar]]$ that reduce modulo $\hbar$ to $I^{\mr{cl}}$ that are compatible with the BV operator $\Delta$ and our choice of regularizing parametrix.
 
The key insight of effective field theory is that we should view an interaction $I^{\mr{q}}$ quantizing $I^{\mr{cl}}$ as a family of interactions $I^{\mr{q}}[L]$ defined for each parametrix $P_L$. This is compatible with the idea of effective field theory in physics which is that interactions at the length scale $L$ (and larger) is encoded by an effective Lagrangian for each $L \in \RR_{>0}$. This leads to the notion of a \emph{prequantum field theory}, which is defined to be a collection $\{ I[L] \in  \OO^+( \mc E )[[\hbar]] \}$ of functionals that are at least cubic modulo $\hbar$ satisfying the renormalization group equations $I[L_2] = W(P(L_1,L_2),I[L_1])$ for $L_1<L_2$ and a certain locality condition. 

\begin{remark}
Here the notation $W(P,I)$ refers to the sum of the weights of Feynman diagrams with propagators $P$ attached to the internal edges and interaction terms $I$ to the vertices.  We refer to \cite[Chapter 2]{CostelloBook} for a precise definition.
\end{remark}

If we use the heat kernel regularization scheme defined above in terms of a gauge fixing operator and generalized Laplacians, one should think of $I[L]$ as an interaction term for those processes occuring at a scale larger than $L$. The renormalization group equation encodes the natural compatibility condition that $I[L_2]$ can be deduced from $I[L_1]$ for every $L_1 < L_2$.

\begin{theorem}[{\cite[Chapter 2]{CostelloBook}}]
There is a bijection between the space of prequantum field theories and the space $ \OO_{\mr{loc} }(\mc E)[[\hbar]]$ of local action functionals which are at least cubic modulo $\hbar$. 
\end{theorem}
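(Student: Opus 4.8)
The plan is to construct the bijection explicitly in both directions, with the forward map — sending a local functional to a prequantum field theory — carrying essentially all of the content. Given $I \in \oloc(\cE)[[\hbar]]$ which is at least cubic modulo $\hbar$, one would like to set $I[L] = W(P(0,L),I)$, where $P(0,L) = \lim_{\eps \to 0} P(\eps,L)$, but this fails: the propagator $P(0,L)$ is too singular and the Feynman weights diverge. The remedy is renormalization by counterterms. Relative to a fixed choice of which functions of $\eps$ count as singular, one constructs, by induction over Feynman graphs ordered by number of vertices and loops, a family $I^{\mr{CT}}(\eps)$ of local functionals valued in a fixed space of \emph{purely singular} functions of $\eps$ (functions admitting no $\eps \to 0$ limit, e.g.\ combinations of $\eps^{-k}$ and $\log \eps$), characterized uniquely by the requirement that
\[ I[L] := \lim_{\eps \to 0} W\!\left(P(\eps,L),\, I - I^{\mr{CT}}(\eps)\right) \]
exists for every $L > 0$. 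This $\{I[L]\}$ is the prequantum field theory attached to $I$.

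The analytic heart of the construction is the claim that, for each connected Feynman graph $\gamma$ with vertices labelled by $I$, the weight $W_\gamma(P(\eps,L),I)$ admits an asymptotic expansion as $\eps \to 0$ of the form $\sum_i f_i(\eps)\,\Phi_i^\gamma[L]$, where the $f_i$ range over a fixed class of functions closed under the relevant operations and, crucially, the coefficients $\Phi_i^\gamma$ of the \emph{singular} $f_i$ are \emph{local} functionals. Locality of these coefficients is exactly what permits the counterterms $I^{\mr{CT}}(\eps)$ to be taken local. The input is the short-time asymptotic expansion of the heat kernels $K_t$ on $M$ (following \cite{BGV}), combined with an induction over subgraphs that organizes the subtraction of nested subdivergences — the analogue in this setting of the BPHZ forest formula. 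Once $I[L]$ is in hand, the renormalization group equation $I[L_2] = W(P(L_1,L_2),I[L_1])$ is formal: because $P(\eps,L_2) = P(\eps,L_1) + P(L_1,L_2)$ and the weight construction is compatible with splitting the propagator on each internal edge as a sum, gluing $P(\eps,L_1)$ and then $P(L_1,L_2)$ computes the same thing as gluing $P(\eps,L_2)$; letting $\eps \to 0$ gives the RG equation, and the required locality axiom for $\{I[L]\}$ is read off from the same asymptotic expansion.

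For the reverse map, a prequantum field theory $\{I[L]\}$ comes with a locality axiom ensuring that as $L \to 0$ the functionals $I[L]$ are approximated, order by order, by local functionals; extracting the leading local approximations (equivalently, flowing the RG equation down to scale $0$ after reinstating counterterms) produces a well-defined $I \in \oloc(\cE)[[\hbar]]$, at least cubic modulo $\hbar$. One then verifies that the two composites are the identity: starting from a local $I$, forming $\{I[L]\}$, and extracting the associated local functional returns $I$, because the counterterms were uniquely pinned down as the purely singular part of the weights; and starting from $\{I[L]\}$, the reconstructed $I$ regenerates the same family, by uniqueness of the solution of the RG equation with prescribed scale-$0$ asymptotics.

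The main obstacle is the analytic and combinatorial core of the forward direction: showing that the $\eps \to 0$ divergences of all Feynman weights can be absorbed, simultaneously and coherently, into \emph{local} counterterms. This couples two nontrivial ingredients — the short-time heat-kernel expansion on the manifold $M$ and the forest-formula bookkeeping required to handle overlapping and nested subdivergences inductively. By contrast, the renormalization group compatibility, the locality axiom, the reverse construction, and the inverse checks are all essentially formal once this core analytic statement is available.
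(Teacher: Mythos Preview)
Your proposal is correct and follows the same approach the paper sketches (which in turn summarizes Costello's argument): construct local counterterms $I^{\mr{CT}}(\eps)$, depending on a choice of renormalization scheme, so that $I[L] = \lim_{\eps \to 0} W(P(\eps,L), I - I^{\mr{CT}}(\eps))$ exists and defines a prequantum field theory, with the inverse given by extracting the scale-$0$ local asymptotics. The paper itself does not prove this theorem but only outlines it and cites \cite[Chapter 2]{CostelloBook}; your sketch correctly identifies the analytic core (locality of the singular coefficients in the heat-kernel asymptotic expansion, organized by an induction over subgraphs) and the formal verifications (RG equation from the additivity of the propagator, and the inverse checks).
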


In order to prove this theorem, one has to check that for such a local action functional $I$, there exist local counterterms $I^{\mr{CT} }(\eps)$ such that the collection $\{I[L]\}$ with $I[L]: = \displaystyle\lim_{\eps \to 0} W( P(\eps, L) , I - I^{\mr{CT}}(\eps) ) $ defines a prequantum field theory. Indeed, such a construction of counterterms, and therefore the bijection in the statement of the theorem, depends on the choice of a way to extract the singular part of certain functions of one variable, which is called a \emph{renormalization scheme}. In particular, two different choices would provide an automorphism of the space of local action functionals. 

To be a full quantum field theory there is a required extra compatibility of the effective family with the BV operator $\Delta$, or rather with the family of BV operators $\Delta_L$. 

\begin{definition} \label{qme} A {\em quantum field theory} is a prequantum field theory $\{I[L]\}$ satisfying the quantum master equation (QME) at scale $L$:
\[
Q I[L] + \frac{1}{2} \{I[L], I[L]\}_L + \hbar \Delta_LI[L] = 0
\]
for each $L > 0$. 
\end{definition}


One says that a quantum field theory is a quantization of the classical theory described by the local functional if
\[I^{\mr{cl}} = \lim_{L \to 0} \left(I[L] \mod \hbar\right).\]
In practice, we begin with a given classical field theory and ask when we can find a quantization. If the answer is positive, we are also interested in how unique quantizations are. This problem is standard in the context of deformation theory. The {\em deformation complex} of a classical field theory $(\cE, Q, I)$ is the complex $\left(\cO_{\mr{loc}}(\cE), Q + \{I,-\}\right)$.  The first cohomology $H^1(\OO_{\mr{loc}})$ describes the space of anomalies (up to homotopy equivalence); this is where obstructions to quantizations live. Similarly, if the obstruction to quantization vanishes, $H^0(\OO_{\mr{loc}})$ is the space of quantizations (up to homotopy equivalence). 

\section{The $\beta$-function} \label{beta_function_section}

\subsection{Classical Local RG flow} \label{classical_RG_section}

\label{RG_flow_section}
First, let's fix a free BV theory $(E, Q)$ on $\RR^n$. The group of translations of $\RR^n$ acts on such a theory, and we impose the condition that the theory is invariant with respect to this action. The group $\RR_{>0}$ also acts on $\mc E$ from the rescaling action on $\RR^n$. Let us describe this action explicitly. 
 
The space of fields $\mc E$ is the sections of the (trivial) graded vector bundle $E$ on $\RR^n$, so we have an identification
\[\mc E = C^\infty(\RR^n) \otimes E_0\]
where $E_0$ is the fiber of the trivial vector bundle over $0 \in \RR^n$. Furthermore, it is equipped with a symplectic pairing of degree $-1$, which we think of as a pairing on $E_0$ taking values in the bundle of densities, denoted by $\omega_0= \bigwedge^{n} \RR^n$. That is $\<-,-\>_0 : E_0 \otimes E_0 \to \omega_0[-1]$.

We choose an action $\rho^0$ of $\RR_{>0}$ on $E_0$, therefore an element $\rho^0_\lambda \in \mr{End}(E_0)$ for every $\lambda \in \RR_{>0}$, in such a way that the pairing $\<-,-\>_0$ is $\RR_{>0}$-equivariant. Here, the action of $\RR_{>0}$ on $\RR^n$ has weight $-1$; in particular, the action on the line $\omega_0$ has weight $-n$. Furthermore, we assume that the action of $\RR_{>0}$ on $E_0$ is diagonalizable with rational powers (in practice the powers will either be integral or half-integral).

The action of $\lambda \in \RR_{>0}$ on $\mc E = C^\infty(\RR^n) \otimes E_0$ is then defined by
\[\rho_\lambda \cdot (\varphi (x) \otimes e_0) := \varphi(\lambda^{-1}x) \otimes \rho_{\lambda}^0(e_0) .\]
This defines an action of $\RR_{>0}$ on the space of functionals $\mc O(\mc E)$ on $\cE$ that we continue to denote by $\rho_\lambda$. Moreover, this action preserves the space of local functionals $\OO_{\mr{loc}}(\mc E)$. 

In good circumstances, one can choose the action of $\RR_{>0}$ on $E_0$ such that the free part of a given action is invariant; in this way, it acts on the space of interactions on $\RR^n$. For this purpose, we might treat the mass terms as if they are interacting terms.

\begin{definition}
\begin{itemize}
\item A classical field theory described by an interaction term $I \in \OO_{\mr{loc}}(\mc E)$ is \emph{renormalizable} if $\rho_\lambda(I)$ flows to a fixed point as $\lambda\to 0$.
\item A classical field theory described by an interaction term $I \in \OO_{\mr{loc}}(\mc E)$ is \emph{scale-invariant} if $\rho_\lambda(I) = I$ for all $\lambda \in \RR_{>0}$.
\end{itemize}
\end{definition}

Note that from the scaling action $x \mapsto \lambda^{-1}x$, if $\lambda <1$, then we are zooming in the theory on $\RR^n$, whereas if $\lambda>1$ we're zooming out; in particular renormalizability means that the classical theory behaves well at high energy. 

\begin{example}
\begin{itemize}
\item Consider a scalar field theory on $\RR^n$, so the classical BV complex is $\mc E = \left( C^\infty(\RR^n)\stackrel{\Delta}{\longrightarrow} C^\infty(\RR^n) \right)$ concentrated in degrees 0 and 1. By requiring $\rho_\lambda(\int \phi \Delta \phi  ) =\int \phi \Delta \phi$, we obtain $\rho_\lambda(\phi)(x) = \lambda^{ \frac{2-n}{2} } \phi(\lambda^{-1} x)$. In order for the symplectic pairing to have weight $-n$, a field $\psi$ in degree 1 must satisfy $\rho_\lambda(\psi)(x) = \lambda^{ \frac{-n-2}{2} } \psi(\lambda^{-1} x)$. By construction, the massless free scalar field theory is classically scale-invariant. On the other hand, if we introduce a mass, then we have the term $\int m \phi^2  \mapsto \int m\phi^2 =  \lambda^2 \int m\phi^2$; the massive free scalar field theory is not scale-invariant but classically renormalizable. More generally, for an interaction term of the form $I_k (\phi) = \int \phi^k$, one has $\rho_\lambda (I_k) = \lambda^{ n + \frac{k(2-n)}{2} }I_k$. For instance, if $n=4$, the interaction $\int_{\RR^4} \phi^4$ is scale-invariant, and if $n=6$, the interaction $\int_{\RR^6} \phi^3$ is scale-invariant.
\item Consider Chern--Simons theory on $\RR^3$ so that 
\[\mc E = \left( \xymatrix{\Omega^0(M;\gg) [1] \ar[r]^-d & \Omega^1(M;\gg)[0] \ar[r]^-d & \Omega^2(M; \gg)[-1] \ar[r]^-d & \Omega^3(M;\gg)[-2] }\right).\] 
In a similar way, we obtain $(\rho_\lambda A)(x) = \lambda^{-k} A(\lambda^{-1}x)$ for $A \in \Omega^k(M;\gg)$. In particular, the interaction term is scale-invariant. 
\end{itemize}
\end{example}

A natural question is what it means for a \emph{quantum} field theory to be renormalizable or scale-invariant. The rest of this section is devoted to formulating an answer to this question and developing a general theory around the notion.

\subsection{Quantum Local RG flow} \label{quantum_RG_section}

We have just described an action of the group $\RR_{> 0}$ on classical field theories on $\RR^n$. We now turn to defining the local RG flow on quantum field theories. We fix a translation-invariant free BV theory $(E, Q)$ on $\RR^n$ as above, together with a choice of a gauge fixing operator $Q^{\GF}$. Fix an action $\rho_\lambda$ of $\RR_{>0}$ on the free theory. We suppose that the induced action of $\rho_\lambda$ on $Q^{\GF}$ is of the following form
\begin{equation}\label{gf1}
\rho_\lambda \cdot Q^{\GF} := \rho_\lambda Q^{\GF} \rho_{\lambda^{-1}} = \lambda^k Q^{\GF}
\end{equation}
for some $k \in \QQ$. We have already discussed how $\RR_{>0}$ acts on the space of functionals $\OO(\cE)$; this action will also be denoted by $\rho_\lambda$. 

\begin{definition} Let $\{I[L]\}$ be an effective family of translation-invariant functionals. Define a new, rescaled effective family $\{I_\lambda [L]\}$ by
\[I_\lambda [L] := \rho_\lambda (I[\lambda^{-k} L]).\]
\end{definition}

\begin{lemma} \label{RG_flow_QME_lemma}
Suppose $\RR_{>0}$ acts on the gauge fixing operator as in (\ref{gf1}). Then $\{I[L]\}$ satisfies homotopy RG flow and the quantum master equation if and only if $\{I_\lambda[L]\}$ does.
\end{lemma}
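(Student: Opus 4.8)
The plan is to show that the rescaling operation $\{I[L]\} \mapsto \{I_\lambda[L]\}$ is an automorphism of the structures involved — homotopy RG flow and the QME — by tracing through how $\rho_\lambda$ interacts with the propagator $P(L_1,L_2)$, the heat kernel $K_L$, the BV Laplacian $\Delta_L$, and the bracket $\{-,-\}_L$. Since the operation is invertible (its inverse is the analogous operation for $\lambda^{-1}$), it suffices to prove one direction: if $\{I[L]\}$ satisfies homotopy RG flow and the QME, then so does $\{I_\lambda[L]\}$.

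The key computational input is the behavior of the regularized kernels under $\rho_\lambda$. First I would establish that $\rho_\lambda$ applied to the heat kernel satisfies $(\rho_\lambda \otimes \rho_\lambda) K_L = K_{\lambda^{-k}L}$; this follows because $K_L$ is the kernel of $e^{-L[Q,Q^{\GF}]}$, and by \eqref{gf1} the operator $[Q,Q^{\GF}]$ rescales with weight $k$ (note $Q$ commutes with $\rho_\lambda$ since the free theory is $\RR_{>0}$-invariant, so $\rho_\lambda [Q,Q^{\GF}] \rho_{\lambda^{-1}} = \lambda^k [Q,Q^{\GF}]$), hence conjugating the heat operator by $\rho_\lambda$ sends $L \mapsto \lambda^{-k}L$. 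From the explicit formula $P(\eps,L) = \int_\eps^L (Q^{\GF}\otimes 1)K_t \,\d t$ and the extra weight-$k$ factor from $Q^{\GF}$, a change of variables $t \mapsto \lambda^{-k}t$ then yields $(\rho_\lambda\otimes\rho_\lambda)P(L_1,L_2) = P(\lambda^{-k}L_1, \lambda^{-k}L_2)$. Consequently the BV Laplacian $\Delta_L$, being contraction with $K_L$, transforms as $\rho_\lambda \circ \Delta_{\lambda^{-k}L} = \Delta_L \circ \rho_\lambda$ on functionals, and similarly the scale-$L$ bracket $\{-,-\}_L$ (defined via $K_L$) intertwines with $\rho_\lambda$ between scales $\lambda^{-k}L$ and $L$.

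With these intertwining relations in hand, the two claims reduce to formal manipulations. For homotopy RG flow: $W(P,I)$ is a sum over Feynman graphs of contractions of copies of $I$ along propagators $P$, and $\rho_\lambda$ is an algebra automorphism, so $\rho_\lambda W(P(L_1,L_2), I) = W((\rho_\lambda\otimes\rho_\lambda)P(L_1,L_2), \rho_\lambda I) = W(P(\lambda^{-k}L_1,\lambda^{-k}L_2), \rho_\lambda I)$. Applying this with $I = I[\lambda^{-k}L_1]$ and using $I[\lambda^{-k}L_2] = W(P(\lambda^{-k}L_1,\lambda^{-k}L_2), I[\lambda^{-k}L_1])$ gives exactly $I_\lambda[L_2] = W(P(L_1,L_2), I_\lambda[L_1])$. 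For the QME at scale $L$, substitute $I_\lambda[L] = \rho_\lambda(I[\lambda^{-k}L])$ into $QI_\lambda[L] + \tfrac12\{I_\lambda[L],I_\lambda[L]\}_L + \hbar\Delta_L I_\lambda[L]$, pull $\rho_\lambda$ out of each term using $Q\rho_\lambda = \rho_\lambda Q$, the bracket intertwining, and $\Delta_L\rho_\lambda = \rho_\lambda \Delta_{\lambda^{-k}L}$, to obtain $\rho_\lambda\big(QI[\lambda^{-k}L] + \tfrac12\{I[\lambda^{-k}L],I[\lambda^{-k}L]\}_{\lambda^{-k}L} + \hbar\Delta_{\lambda^{-k}L}I[\lambda^{-k}L]\big) = 0$, which holds since $\{I[L']\}$ satisfies the QME at scale $L' = \lambda^{-k}L$ and $\rho_\lambda$ is injective.

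The main obstacle is not conceptual but bookkeeping: getting the weight $k$ to appear in exactly the right places, in particular confirming that the single factor of $Q^{\GF}$ in the propagator produces precisely the $\lambda^k$ needed for the change of variables in $t$ to land on $P(\lambda^{-k}L_1, \lambda^{-k}L_2)$ rather than some other rescaling, and checking that the definition $I_\lambda[L] := \rho_\lambda(I[\lambda^{-k}L])$ has the shift $\lambda^{-k}L$ (not $\lambda^k L$) consistent with all of these. I would also want to be slightly careful that the locality/support conditions built into the definition of a prequantum field theory are preserved by $\rho_\lambda$, but this is immediate since $\rho_\lambda$ is induced by a diffeomorphism of $\RR^n$ and hence preserves $\OO_{\mr{loc}}(\cE)$, as already noted in the text.
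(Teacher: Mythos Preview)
Your approach is exactly the paper's: compute how $\rho_\lambda$ transforms the heat kernel and propagator, then push the RG flow equation and the QME through $\rho_\lambda$. The only problem is the sign of the rescaling, which is precisely the bookkeeping you flagged as the main obstacle. From $\rho_\lambda [Q,Q^{\GF}] \rho_{\lambda}^{-1} = \lambda^k [Q,Q^{\GF}]$ one gets $\rho_\lambda e^{-LD}\rho_\lambda^{-1} = e^{-(\lambda^k L)D}$, so conjugation sends $L$ to $\lambda^k L$, not $\lambda^{-k}L$; hence $(\rho_\lambda\otimes\rho_\lambda)K_L = K_{\lambda^k L}$ and, after substituting $s=\lambda^k t$ in the integral defining $P$, one finds $(\rho_\lambda\otimes\rho_\lambda)P(L_1,L_2)=P(\lambda^k L_1,\lambda^k L_2)$, as in the paper.

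The sign matters. With your formula the RG flow check produces
\[
\rho_\lambda\, W\!\bigl(P(\lambda^{-k}L_1,\lambda^{-k}L_2),\,I[\lambda^{-k}L_1]\bigr)
= W\!\bigl(P(\lambda^{-2k}L_1,\lambda^{-2k}L_2),\, I_\lambda[L_1]\bigr),
\]
which is not what you want. With the correct sign, applying $\rho_\lambda$ to $I[\lambda^{-k}L_2]=W(P(\lambda^{-k}L_1,\lambda^{-k}L_2),I[\lambda^{-k}L_1])$ lands directly on $I_\lambda[L_2]=W(P(L_1,L_2),I_\lambda[L_1])$. Your intertwining relation $\Delta_L\circ\rho_\lambda=\rho_\lambda\circ\Delta_{\lambda^{-k}L}$ is in fact correct as stated, but it follows from $(\rho_\lambda\otimes\rho_\lambda)K_L=K_{\lambda^k L}$ (contracting $\rho_\lambda F$ with $K_L$ equals contracting $F$ with $(\rho_\lambda^{-1}\otimes\rho_\lambda^{-1})K_L=K_{\lambda^{-k}L}$), not from your version; your two claims about $K_L$ and $\Delta_L$ are therefore inconsistent with each other. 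Once the sign is corrected, your outline is the paper's proof.
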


\begin{proof}
Suppose $\{I[L]\}$ satisfies homotopy RG flow. That is, for $\eps < L$ one has
\[I[L] = W(P(\eps,L) , I[\eps])\]
The condition (\ref{gf1}) implies that the action of $\lambda \in \RR_{>0}$ on the generalized Laplacian $\mr D = [Q,Q^{\GF}]$ is of the form 
\[\rho_\lambda \cdot \mr D = \lambda^k \mr D\]
since $\rho_\lambda$ commutes with $Q$. Thus the action of $\rho_\lambda$ on the heat kernel $K_L$ has the form $\rho_\lambda \cdot K_L = K_{\lambda^k L}$. Finally, we compute the induced action on the propagator
\begin{align*}
\rho_\lambda \cdot P(\eps,L) & = \int_{t = \eps}^L \rho_\lambda \cdot (Q^{\GF} \tensor 1 ) K_t \d t \\ 
& = \int_{t = \eps}^L \lambda^k  (Q^{\GF} \tensor 1 ) K_{\lambda^k t} \d t \\
& = P(\lambda^k \eps,\lambda^k L). 
\end{align*} 
We need to show that $\{I_\lambda [L]\}$ satisfies homotopy RG flow. That is, for each $\eps < L$ we need to verify
\[I_\lambda [L] = W(P(\eps,L), I_\lambda [\eps]) .\] 
Let's begin with the left-hand side. We have
\begin{align*}
I_\lambda [L] & = \rho_\lambda \cdot I[\lambda^{-k} L]  \\ 
& = \rho_\lambda \cdot W(P(\lambda^{-k} \eps,\lambda^{-k} L), I[\lambda^{-k}\eps]) \\
& = W(P(\eps,L) , \rho_\lambda \cdot I[\lambda^{-k} \eps] \\ & = W(P(\eps,L), I_\lambda[\eps])
\end{align*} 
as desired. The second line follows from homotopy RG flow for the original family $\{I[L]\}$, and the third line follows from the explicit action of $\lambda \in \RR_{>0}$ on the propagator as computed above. 

Next, suppose that $\{I[L]\}$ satisfies the quantum master equation (\ref{qme}). In particular, for each $\lambda$ and $L$ we have
\[
Q I[\lambda^{-k}L] + \frac{1}{2} \{I[\lambda^{-k}L], I[\lambda^{-k}L]\}_{\lambda^{-k}L} + \Delta_{\lambda^{-k}L} I[\lambda^{-k}L] = 0 .
\]
Applying $\rho_\lambda$ to both sides we obtain
\[
Q I_\lambda [L] + \frac{1}{2} \{I_\lambda [L], I_\lambda[L]\}_L + \rho_\lambda \cdot \Delta_{\lambda^{-k}L} I[\lambda^{-k}L] = 0 .
\]
The fact that $\rho_\lambda$ preserves the BV bracket follows from the fact that the action of $\RR_{>0}$ preserves the symplectic pairing defining the classical theory. Thus, to show that $I_\lambda [L]$ satisfies the scale $L$ quantum master equation we must show that $\rho_\lambda \cdot \Delta_{\lambda^{-k}L} I[\lambda^{-k}L] = \Delta_{L} I_\lambda[L]$. Indeed, the operator $\Delta_{\lambda^{-k}L}$ is, by definition, contraction with the element $K_{\lambda^{-k}L}$. Thus, $\rho_{\lambda} \cdot \Delta_{\lambda^{-k}L}$ is equal to $\rho_\lambda$ composed with contraction with the element $\rho_\lambda \cdot K_{\lambda^{-k}L} = K_L$, as desired. 
\end{proof}

This lemma defines an action of $\RR_{>0}$ on the space of translation-invariant quantum field theories on $\RR^n$, $\{I[L]\} \mapsto \{I_\lambda [L]\}$. 

\begin{remark} 
A quantum field theory is defined more generally as a family over the space of parametrices as in \cite[Definition 8.2.9.1]{CostelloGwilliam2}. The action of $\RR_{>0}$ is extended to this setting in Chapter 10 of the same book. In fact, the space of quantum field theories forms a simplicial set and local RG flow is set up as an action on this simplicial set. 
\end{remark}

With this definition and lemma in hand, one can prove the following proposition.

\begin{prop}[{\cite[Chapter 4, Proposition 6.0.1]{CostelloBook}}]\label{lambda dependence}
Let $\{I[L] \}$ be a theory and let $\{I_\lambda[L]\}$ be the family of theories obtained from it by scaling. Then $I_\lambda \in  \OO_{\mr{loc}}(\cE) )[[\hbar]] \otimes \CC[\lambda,\lambda^{-1 } , \log \lambda ]$, that is, each $(I_\lambda)_{i,k}$ depends on $\lambda$ only as a polynomial in $\lambda^{\pm 1}$ and $\log \lambda$.
\end{prop}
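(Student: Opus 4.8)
The plan is to realize $I_\lambda$ as the \emph{bare} local interaction whose associated effective family --- obtained through the counterterm construction underlying the bijection theorem of \cite[Chapter 2]{CostelloBook} --- is the rescaled family $\{\rho_\lambda(I[\lambda^{-k}L])\}$, and then to track the $\lambda$-dependence through that construction. Writing $I[L] = \lim_{\eps \to 0} W(P(\eps,L), I - I^{\mr{CT}}(\eps))$ for the original (scheme-dependent) presentation, substituting $\eps \mapsto \lambda^{-k}\eps$ in the limit, and using the heat-kernel and propagator scaling identities $\rho_\lambda \cdot K_L = K_{\lambda^k L}$ and $\rho_\lambda \cdot P(\eps,L) = P(\lambda^k \eps, \lambda^k L)$ established in the proof of Lemma \ref{RG_flow_QME_lemma} (together with the equivariance $\rho_\lambda(W_\Gamma(P,J)) = W_\Gamma(\rho_\lambda \cdot P, \rho_\lambda \cdot J)$ of Feynman weights, which follows since the defining pairing is $\RR_{>0}$-equivariant), one obtains
\[
\rho_\lambda \big( I[\lambda^{-k}L] \big) \;=\; \lim_{\eps \to 0} W\!\big( P(\eps, L),\ \rho_\lambda(I) \,-\, \rho_\lambda\big( I^{\mr{CT}}(\lambda^{-k}\eps) \big) \big).
\]
Thus all $\lambda$-dependence is moved into the rescaled vertex $\rho_\lambda(I)$ and the rescaled counterterm $\rho_\lambda\big(I^{\mr{CT}}(\lambda^{-k}\eps)\big)$; since $I_\lambda$ is a single scale-independent functional, it suffices to analyze these two and read off the bare interaction.

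Two inputs control them. First, because the $\RR_{>0}$-action on $E_0$ was assumed diagonalizable with rational weights, $\rho_\lambda$ acts semisimply on $\OO_{\mr{loc}}(\cE)$: every fixed local functional lies in a finite-dimensional $\rho_\lambda$-stable subspace, so $\rho_\lambda(J) = \sum_w \lambda^w J_w$ is a \emph{finite} sum with $w \in \QQ$ and $J_w \in \OO_{\mr{loc}}(\cE)$ --- in particular $\rho_\lambda$ introduces only (fractional) Laurent monomials in $\lambda$ and never a $\log \lambda$. Second, the renormalization-scheme construction produces counterterms of a controlled shape: by induction on the order in $\hbar$, each $I^{\mr{CT}}_i(\eps)$ is a \emph{finite} sum of terms $\eps^a (\log \eps)^b\, J_{a,b}$ with $a \in \QQ$, $b \in \ZZ_{\geq 0}$, and $J_{a,b} \in \OO_{\mr{loc}}(\cE)$. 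This last point is the technical heart: it rests on the small-$\eps$ asymptotic expansion of the graph weights $W_\Gamma(P(\eps,L),-)$, which for a translation-invariant and classically scale-invariant free theory has exactly this Laurent-plus-logarithm form, and it is this analytic statement (rather than any of the surrounding algebra) that I expect to be the main obstacle.

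Granting these, the argument closes by induction on the $\hbar$-order. Evaluating a counterterm term $\eps^a (\log \eps)^b J_{a,b}$ at $\lambda^{-k}\eps$ and applying $\rho_\lambda$ gives $\lambda^{-ka}\, \eps^a\, (\log \eps - k \log \lambda)^b\, \rho_\lambda(J_{a,b})$; expanding the binomial and invoking the first input shows that $\rho_\lambda\big(I^{\mr{CT}}(\lambda^{-k}\eps)\big)$ is a finite sum $\sum \eps^a (\log \eps)^c\, c_{a,c}(\lambda)$ with each $c_{a,c}(\lambda) \in \OO_{\mr{loc}}(\cE) \otimes \CC[\lambda,\lambda^{-1},\log \lambda]$ (fractional powers of $\lambda$ collapsing to $\lambda^{\pm 1}$ in the cases of interest). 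Feeding $\rho_\lambda(I)$ and this expression into the finitely many graphs contributing to a fixed $(i,k)$-component, the purely singular pieces (negative powers of $\eps$ and powers of $\log\eps$) cancel by the definition of the renormalization scheme, and comparing the resulting $\eps \to 0$ limit with the canonical presentation $I_\lambda[L] = \lim_{\eps} W(P(\eps,L), I_\lambda - I_\lambda^{\mr{CT}}(\eps))$ lets one read off, via the bijection theorem, that $(I_\lambda)_{i,k} \in \OO_{\mr{loc}}(\cE) \otimes \CC[\lambda,\lambda^{-1},\log \lambda]$. The induction is legitimate because the shape of $I^{\mr{CT}}_i(\eps)$, and hence all $\lambda$-dependence entering at order $\hbar^i$, is determined by the $I[L]$ of order $< i$, whose $\lambda$-behavior is already under control.
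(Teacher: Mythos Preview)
The paper does not prove this proposition; it is cited from \cite[Chapter 4, Proposition 6.0.1]{CostelloBook} without argument, so there is no in-paper proof to compare against. Your strategy --- pushing $\rho_\lambda$ through the counterterm presentation via the substitution $\eps \mapsto \lambda^{-k}\eps$ and the propagator scaling from Lemma \ref{RG_flow_QME_lemma} --- is the right one, and is exactly the manipulation the paper itself uses later in the proof of Proposition \ref{beta_function_counterterm_prop}. You have also correctly identified the key analytic input (the Laurent-plus-log form of the counterterms in $\eps$) and correctly flagged it as the nontrivial step.

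The one place where your argument is genuinely incomplete rather than merely sketched is the final ``reading off'' step. You have exhibited
\[
I_\lambda[L] \;=\; \lim_{\eps\to 0} W\!\big(P(\eps,L),\ \rho_\lambda(I) - \rho_\lambda(I^{\mr{CT}}(\lambda^{-k}\eps))\big),
\]
but the bijection theorem identifies $I_\lambda$ as the unique local functional for which the vertex slot takes the form $I_\lambda - I_\lambda^{\mr{CT}}(\eps)$ with $I_\lambda^{\mr{CT}}$ purely singular \emph{in the chosen scheme}. After your substitution the rescaled counterterm is no longer purely singular: the $a=0$ pieces of $\eps^a(\log\eps - k\log\lambda)^b$ contribute a finite part $(-k\log\lambda)^b\,\rho_\lambda(J_{0,b})$ in the $\eps\to 0$ limit. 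These finite pieces must be absorbed into $I_\lambda$, not into $I_\lambda^{\mr{CT}}$, and because $W$ is not linear in its vertex argument this reshuffling has to be done carefully, order by order in $(i,k)$, using the multilinearity of individual graph weights at fixed order. Once you make that split explicit the induction closes and the claimed $\lambda$-dependence follows; as written, the phrase ``the purely singular pieces cancel by the definition of the renormalization scheme'' elides precisely this point.
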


In other words, regarding the dependence on $\lambda$, even if the action is diagonal at the classical level, we start to see $\log \lambda$ at the quantum level. When it comes to defining the corresponding notions for renormalizability and scale-invariance, the obvious extrapolating definitions from the classical case lead to the following.

\begin{definition} \label{renormalizable_def}
Let $\{I[L]\}$ be a quantum theory.
\begin{itemize}
\item A quantum field theory $\{I[L]\}$ is \textit{renormalizable} if $I_\lambda$ depends on $\lambda$ only via polynomials in $\lambda$ and $\log \lambda$.
\item A quantum field theory $\{I[L]\}$ is \textit{strictly renormalizable} if $I_\lambda$ depends on $\lambda$ only via polynomials in $\log \lambda$.
\end{itemize}
\end{definition}

One should think of the appearance of $\log \lambda$ as the perturbative residue of some more subtle nonperturbative behavior of a theory. That is, at the quantum level, a coupling constant $c$ can flow like $c \mapsto c \lambda^\hbar = c e^{\hbar \log \lambda} = c + \hbar c \log \lambda + \cdots $ or $c \mapsto c \lambda^{-\hbar} = c-\hbar c\log \lambda+ \cdots$. In an actual world where $\hbar$ is a positive real number, the term $\lambda^{\hbar}$ should be regarded as renormalizable, because it tends to 0 as $\lambda \rightarrow 0$, while $\lambda^{-\hbar}$ should not.

However, in perturbation theory, where $\hbar$ is formal, both terms have $\log \lambda$ growth, just in different directions. In other words, although our definition of (perturbative) renormalizability certainly excludes theories which are not renormalizable in the ideal nonperturbative sense, it might still admit theories which have bad UV limiting behavior in terms of a perturbative description (theories with a Landau pole). This motivates us to look at the sign of the $\log \lambda$ term in order to try to detect this phenomenon, which leads to the notion of the $\beta$-function. This will also guide us to the definition of a quantum field theory being scale-invariant.

\subsection{The $\beta$-function for BV Theories} \label{beta_function_abstract_section}
The $\beta$-function of a perturbative quantum field theory is a function describing the rate of change of the coupling constants in a theory as the energy scale changes.  Having defined the local RG flow on the space of theories, we can define the $\beta$-function carefully as the cohomology class of a certain functional. This functional measures the infinitesimal action of $\RR_{>0}$ on the space of quantum field theories. That is, it represents a vector field on the space of translation-invariant quantum field theories on $\RR^n$. 

Suppose $\{I[L]\}$ is such a quantum field theory. A first-order deformation of cohomological degree $i$ of $\{I[L]\}$ is a collection of functionals $\{J[L]\} \subset \cO(\cE)[[\hbar]]$, each of cohomological degree $i$, such that the collection $\{I[L] + \delta J[L]\}$, where $\delta$ is a formal parameter of cohomological degree $-i$, satisfies homotopy RG flow and the quantum master equation modulo $\delta^2$. 

The family of functionals $\{I[L]\}$ defining a quantum field theory must also satisfy a certain locality constraint. Thus, to be a vector field on the space of theories there is also an additional locality constraint. We will see that for classically scale-invariant theories the one-loop $\beta$-functional automatically satisfies this and hence determines such a vector field.  

\subsubsection{The $\beta$-functional}\label{sec beta functional}
\begin{definition} 
Suppose $\{I[L]\}$ is a translation-invariant quantum field theory on $\RR^n$. For $L > 0$ define the \emph{scale $L$ $\beta$-functional} to be the functional
\[\OO_\beta[L] := \lim_{\lambda \to 1} \lambda \frac{\d}{\d \lambda}\left(I_\lambda[L]\right).\]
\end{definition}
Equivalently, we can describe $\OO_\beta[L]$ as the limit as $\lambda \to 1$ of the logarithmic derivative $\d / \d (\log \lambda) (I_\lambda [L])$. 

\begin{remark}
While the effective $\beta$-functional makes sense even for non-renormalizable theories, it carries much less meaningful data than in the renormalizable case.  As we discussed at the end of the last section, the $\beta$-functional for perturbatively renormalizable theories allows us to detect bad behavior in the UV limit, but for theories that are not renormalizable this no longer applies, and the information encoded in the $\beta$-functional doesn't tell us anything so fundamental.
\end{remark}

\begin{prop}  \label{beta_observable_closed_prop}
For any translation-invariant quantization $\{I[L]\}$ on $\RR^n$ the collection $\{\OO_\beta[L]\}$ defines a first-order deformation of $\{I[L]\}$.
\end{prop}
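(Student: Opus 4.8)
The plan is to show that $\{\OO_\beta[L]\}$ satisfies the two defining conditions of a first-order deformation — compatibility with homotopy RG flow, and with the scale-$L$ quantum master equation, both modulo $\delta^2$ — and that it is translation-invariant. The basic idea is that $\OO_\beta[L]$ is, by construction, the infinitesimal generator of the $\RR_{>0}$-action $\lambda \mapsto \{I_\lambda[L]\}$ at $\lambda = 1$, and Lemma \ref{RG_flow_QME_lemma} already tells us that this action preserves the class of families satisfying both conditions. So differentiating that statement at $\lambda = 1$ should produce exactly what we want. Concretely, I would proceed as follows.

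First I would record that, since $I_\lambda[L] = I[L]$ at $\lambda = 1$, the family $\{I[L] + \delta\,\OO_\beta[L]\}$ agrees to first order in $\delta$ with the reparametrized family $\{I_{1+\delta}[L]\}$, in the sense that the two differ by $O(\delta^2)$; here one uses that $\lambda\,\tfrac{\d}{\d\lambda}$ evaluated near $\lambda=1$ is the generator of translations in $\log\lambda$, and $1 + \delta = e^{\delta} \bmod \delta^2$. Then, for each fixed $\lambda$ near $1$, Lemma \ref{RG_flow_QME_lemma} guarantees that $\{I_\lambda[L]\}$ satisfies homotopy RG flow $I_\lambda[L] = W(P(\eps,L), I_\lambda[\eps])$ and the scale-$L$ QME. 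Expanding both of these identities to first order in $\delta$ (using $\lambda = 1 + \delta$) and using that $W(P,-)$ and the QME operator $Q + \tfrac12\{-,-\}_L + \hbar\Delta_L$ are, respectively, a well-defined (formal) map and a quadratic expression whose Taylor expansion in $\delta$ is governed by their first derivatives, yields precisely the linearized RG-flow equation and linearized QME satisfied by $\{\OO_\beta[L]\}$. That is: differentiating $I_\lambda[L] = W(P(\eps,L), I_\lambda[\eps])$ in $\log\lambda$ at $\lambda = 1$ gives $\OO_\beta[L] = (D_{I[\eps]}W)(P(\eps,L))(\OO_\beta[\eps])$, which is exactly the statement that $\{I[L] + \delta\OO_\beta[L]\}$ satisfies homotopy RG flow mod $\delta^2$; and differentiating the QME gives $(Q + \{I[L],-\}_L + \hbar\Delta_L)\OO_\beta[L] = 0$, which is the linearized QME. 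Translation-invariance of $\OO_\beta[L]$ is immediate since $I_\lambda[L]$ is translation-invariant for each $\lambda$ (the action $\rho_\lambda$ commutes with translations up to the rescaling of $\RR^n$, which is itself built from translation-invariant data), and translation-invariance is preserved under $\lambda\tfrac{\d}{\d\lambda}$ and the $\lambda \to 1$ limit.

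The main obstacle I expect is not conceptual but analytic/formal: one must make sense of $\lim_{\lambda \to 1}\lambda\tfrac{\d}{\d\lambda}$ acting on $I_\lambda[L]$ and justify interchanging this derivative with the (infinite sum of Feynman-diagram) operation $W(P(\eps,L),-)$ and with the QME operator. Here Proposition \ref{lambda dependence} is the essential input: it tells us $I_\lambda[L]$ depends on $\lambda$ only polynomially in $\lambda^{\pm 1}$ and $\log\lambda$ in each fixed order $(i,k)$ in $\hbar$ and in the number of inputs, so the logarithmic derivative exists, is again a functional of the same type, and commutes with the order-by-order operations defining RG flow and the QME. I would also need to note the locality caveat flagged in the text just before the proposition — namely that $\OO_\beta[L]$ being a genuine \emph{vector field on the space of theories} (as opposed to merely a first-order deformation of the effective family) requires a locality constraint that is only verified later for classically scale-invariant theories — so the proof here should claim exactly the stated first-order deformation property and no more. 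Everything else is bookkeeping with the Taylor expansion in $\delta$, which I would not grind through in detail.
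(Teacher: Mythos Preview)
Your proposal is correct and follows essentially the same route as the paper: differentiate the RG-flow equation and the QME for $I_\lambda[L]$ (supplied by Lemma~\ref{RG_flow_QME_lemma}) in $\log\lambda$ at $\lambda=1$, using that $\d/\d(\log\lambda)$ commutes with $Q$, $\Delta_L$, and the weight operator and satisfies Leibniz for $\{-,-\}_L$. Your extra remarks on translation-invariance, the invocation of Proposition~\ref{lambda dependence} to justify the interchange, and the locality caveat are welcome bits of care but do not constitute a different argument.
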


\begin{proof} 

Let $\delta$ be a formal parameter. We consider the collection of functionals $\{I[L] + \delta \OO_{\beta}[L]\}$ and show that it satisfies both homotopy RG flow and the quantum master equation modulo $\delta^2$. 

For the statement about RG flow we must show that $\{\OO_\beta[L]\}$ satisfies
\begin{equation}\label{beta rg flow}
\OO_\beta [L'] = W(P(L,L') , \OO_\beta[L]) 
\end{equation}
for all $L,L'$. We have already seen that for each $\lambda$ the collection $\{I_\lambda[L]\}$ satisfies homotopy RG flow $I_\lambda[L'] = W( P(L,L') , I_\lambda[L])$. The weight operator $W( P(L,L') ,-)$ acting on $\lambda$-dependent functionals commutes with the operator $\d / \d (\log \lambda)$ and hence
\[
\frac{\d}{\d (\log \lambda)} \left(I_\lambda [L]\right) = W\left( P(L,L') , \frac{\d}{\d (\log \lambda)} \left(I_\lambda[L]\right) \right) .
\]
Taking the $\lambda \rightarrow 1$ limit we obtain the desired relation (\ref{beta rg flow}). 

We now show that $I[L] + \delta \OO_\beta[L]$ satisfies the quantum master equation. Since we are working modulo $\delta^2$, we see that this is equivalent to the following relation 
\begin{equation} \label{beta qme}
Q \OO_\beta[L] + \hbar \Delta \OO_\beta[L] + \{I[L], \OO_\beta[L]\}_L = 0 .
\end{equation}
That is, we must show that for each $L$, $\OO_\beta[L]$ is closed for the scale $L$ quantum differential.

By Lemma \ref{RG_flow_QME_lemma} we know that $\{I_\lambda[L]\}$ satisfies the quantum master equation
\[
Q I_\lambda [L] + \hbar \Delta_L I_\lambda [L] + \frac{1}{2} \{I_\lambda[L], I_\lambda[L]\}_L = 0 .
\]
Applying the operator $\d / \d (\log \lambda)$ to both sides of the above equation we obtain
\[
Q \left(\frac{\d}{\d(\log\lambda)} I_\lambda[L]\right) + \hbar \Delta_L \left(\frac{\d}{\d(\log\lambda)} I_\lambda[L]\right) + \left\{I_\lambda [L], \frac{\d}{\d(\log\lambda)} I_\lambda[L]\right\}_{L} = 0 
\]
where we have used the fact that $\d/\d(\log \lambda)$ commutes with the operators $Q, \Delta_L$ and satisfies the Leibniz rule for the BV bracket $\{-,-\}_L$. Taking the limit $\lambda \to 1$ we obtain (\ref{beta qme}) as desired.

\end{proof}

As usual, we're working over the ring $\RR[[\hbar]]$ so any functional can be expanded in powers of $\hbar$. In particular we can expand the scale $L$ $\beta$-functional as
\[\OO_{\beta} [L] = \OO_{\beta}^{(0)} [L] + \hbar \OO_{\beta}^{(1)}[L] + O(\hbar^2) .\]
Note that the limit $\OO_{\beta}^{(0)} = \lim_{L \to 0} \OO_{\beta^{(0)}}[L]$ is well-defined, because it only involves tree diagrams. Moreover, if the classical field theory which is described by the functional $I^{\mr{cl}} = \lim_{L \to 0} I[L] \mod \hbar$ is scale-invariant then $\OO_{\beta}^{(0)}$ is identically zero.

\begin{remark}
In fact, the vanishing of $\OO_{\beta}^{(0)}$ is equivalent to the vanishing of $\OO_{\beta}^{(0)}[L]$ for all $L$, because the effective functional $\OO_{\beta}^{(0)}[L]$ is obtained from $\OO_{\beta}^{(0)}$ under the renormalization group flow from $0$ to $L$.  The same is true at $k$ loops as long as the effective $\beta$-functional vanishes at all loop levels less than or equal to $k$.
\end{remark}


In the case that the classical interaction is scale-invariant Proposition \ref{beta_observable_closed_prop} has an immediate corollary. 

\begin{corollary} \label{one_loop_beta_well_defined_lemma}
Let $\{I[L]\}$ be a translation-invariant quantum field theory on $\RR^n$ such that the classical interaction $I^{\mr{cl}}$ is scale-invariant. Then $\OO_\beta^{(1)} = \lim_{L \to 0} O_\beta^{(1)}[L]$ exists and determines a closed element in $\OO_{\rm loc}(\cE)$. 
\end{corollary}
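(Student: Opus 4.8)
The plan is to deduce the corollary from Proposition \ref{beta_observable_closed_prop} by extracting the $\hbar^1$-component of the two assertions proved there --- the homotopy RG flow equation (\ref{beta rg flow}) and the scale-$L$ quantum master equation (\ref{beta qme}) --- and then checking that the $L\to 0$ limit exists and produces a \emph{local} functional. First I would write out the $\hbar$-expansion of the RG flow equation $\OO_\beta[L'] = W(P(L,L'),\OO_\beta[L])$. Since the classical interaction $I^{\mr{cl}}$ is scale-invariant, we noted already that $\OO_\beta^{(0)}$ is identically zero, and by the Remark following Proposition \ref{beta_observable_closed_prop} this forces $\OO_\beta^{(0)}[L] = 0$ for all $L$. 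Consequently the Feynman-diagram expansion of $W(P(L,L'),\OO_\beta[L])$ contributing to the $\hbar^1$-term of $\OO_\beta[L']$ involves only: (i) trees with a single $\OO_\beta^{(1)}[L]$ vertex and arbitrarily many $I^{\mr{cl}}$-vertices, and (ii) one-loop graphs built entirely from $I^{\mr{cl}}$-vertices --- but the latter are exactly the graphs computing $I^{(1)}[L']$ from $I^{(1)}[L]$, which cancel, leaving $\OO_\beta^{(1)}[L'] = $ (sum of trees with one $\OO_\beta^{(1)}[L]$-insertion). This is precisely the statement that $\{\OO_\beta^{(1)}[L]\}$ is obtained from $\OO_\beta^{(1)}[L_0]$ by \emph{tree-level} RG flow for $L_0 < L$.

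The existence of the limit $\OO_\beta^{(1)} := \lim_{L\to 0}\OO_\beta^{(1)}[L]$ then follows from the same mechanism that makes $\OO_\beta^{(0)}$ well-defined: tree-level RG flow has no small-$L$ singularities, since the propagator $P(\eps,L)$ has a well-defined $\eps\to 0$ limit after acting on local functionals and no UV subdivergences arise without loops. Concretely, I would argue that $W(P(\eps,L),\OO_\beta^{(1)}[\eps] + I^{\mr{cl}})$ --- keeping only the part linear in the $\OO_\beta^{(1)}$-insertion and at tree level --- stabilizes as $\eps\to 0$, the tree weights being given by convergent integrals. This defines $\OO_\beta^{(1)}$ as a formal object; that it lies in $\OO_{\rm loc}(\cE)$ rather than merely $\OO(\cE)$ is the crucial point, and it follows because tree-level RG flow from $0$ preserves locality in the same way that the counterterm construction in the Theorem of \cite[Chapter 2]{CostelloBook} does: the $\eps\to 0$ limit of a sum of tree weights with local vertices is again local (there is a well-defined ``value at $L=0$'' which is a local functional, and one recovers $\OO_\beta^{(1)}[L]$ from it by RG flow).

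Finally, closedness: taking the $\hbar^1$-component of the scale-$L$ QME (\ref{beta qme}) gives
\[
Q\,\OO_\beta^{(1)}[L] + \Delta_L\,\OO_\beta^{(0)}[L] + \{I^{(0)}[L],\OO_\beta^{(1)}[L]\}_L + \{I^{(1)}[L],\OO_\beta^{(0)}[L]\}_L = 0,
\]
and since $\OO_\beta^{(0)}[L] = 0$ the middle two-of-four terms involving it drop out, leaving
\[
Q\,\OO_\beta^{(1)}[L] + \{I^{(0)}[L],\OO_\beta^{(1)}[L]\}_L = 0.
\]
Now I would take $L\to 0$: the left-hand side converges term by term, $I^{(0)}[L] \to I^{\mr{cl}}$, the scale-$L$ bracket $\{-,-\}_L$ converges to the bracket on local functionals when both arguments become local, and $Q$ is continuous, yielding $Q\,\OO_\beta^{(1)} + \{I^{\mr{cl}},\OO_\beta^{(1)}\} = 0$, i.e. $\OO_\beta^{(1)}$ is closed in the deformation complex $(\OO_{\rm loc}(\cE), Q + \{I^{\mr{cl}},-\})$.

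I expect the main obstacle to be the locality claim --- showing $\OO_\beta^{(1)} \in \OO_{\rm loc}(\cE)$ and not just that the limit exists in $\OO(\cE)$. This requires the same careful small-$\eps$ analysis of tree-level Feynman weights that underlies Costello's counterterm theorem, i.e. one must verify that the singular part as $\eps \to 0$ of each graph weight is local so that the renormalized limit is local; fortunately at tree level this is considerably easier than the general case, since no nested or overlapping divergences occur, but it still needs to be spelled out rather than asserted. The convergence of the scale-$L$ brackets and BV Laplacian as $L\to 0$ on the relevant (increasingly local) functionals is the second point requiring a little care, though it is standard given the structure of $\OO_\beta^{(1)}[L]$ as a tree-level RG flow of a local functional.
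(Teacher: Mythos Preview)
Your argument is correct and follows the same route as the paper: use scale-invariance to kill $\OO_\beta^{(0)}[L]$, extract the $\hbar^1$-component of (\ref{beta qme}) to get $Q\OO_\beta^{(1)}[L] + \{I^{\mr{cl}}[L],\OO_\beta^{(1)}[L]\}_L = 0$, and combine this with the (tree-level, linearized) RG flow property to conclude that the $L\to 0$ limit exists as a local functional. The paper compresses your steps on existence and locality into a single appeal to the general principle that any effective family satisfying homotopy RG flow \emph{and} the scale-$L$ classical master equation automatically has a local $L\to 0$ limit; your worry about locality is exactly what that principle packages, and it is indeed the substantive point.

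One phrasing issue: your item (ii) about ``one-loop graphs built entirely from $I^{\mr{cl}}$-vertices \ldots\ which cancel'' is misleading. In the \emph{linearized} RG flow every contributing graph carries exactly one distinguished $\OO_\beta$-vertex, so once $\OO_\beta^{(0)}[L]=0$ there are simply no one-loop contributions to consider --- the distinguished vertex is forced to carry $\OO_\beta^{(1)}[L]$ and hence $b_1(\Gamma)=0$. Nothing cancels; those graphs never enter.
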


\begin{proof}
Any effective family of functionals $\{J[L]\}$ that satisfies both homotopy RG flow and is closed under the scale $L$ classical BV differential
\[
Q J[L] + \{I^{\rm cl}[L], J[L]\}_L = 0 
\] 
automatically admits a limit $J = \lim_{L \to 0} J[L]$ as a local functional. We have already seen that $\OO_\beta^{(1)}$ satisfies homotopy RG flow. The fact that $\OO_\beta^{(1)}$ is closed under the classical BV differential follows from the fact that the full $\beta$-functional $\OO_\beta[L]$ is closed under the \emph{quantum} BV differential: the term linear in $\hbar$ has the form
\[Q \OO_{\beta^{(1)}}[L] + \Delta_L \OO_{\beta^{(0)}}[L] + \{I^{\mr{cl}} [L], \OO_{\beta^{(1)}}[L]\}_L + \{I_1 [L], \OO_{\beta^{(0)}}[L]\}_L= 0.\]
Since the classical theory is scale-invariant we have $\rho_\lambda \cdot I^{\rm cl}[L] = I^{\rm cl}[L]$ so that the tree level $\beta$-functional is zero, that is, $\OO_{\beta^{(0)}}[L] = 0$. The result follows. 
\end{proof}

We have already seen that the effective family $\{O_\beta[L]\}$ determines a first-order deformation of the effective family $\{I[L]\}$. This shows that the one-loop $\beta$-functional actually determines a vector field in the space of theories that are classically scale-invariant. 

\begin{remarks} \label{conformal_remarks}
\begin{enumerate}
\item [(1)] Similarly, one can show that if $\OO_{\beta}^{(i)} =0$ for all $i<k$, then $\OO_{\beta}^{(k)}[L]$ satisfies homotopy RG flow and 
\[Q \OO_{\beta}^{(k)}[L] + \{I^{\mr{cl}} [L], \OO_{\beta}^{(k)}[L]\}_L = 0 .\] In particular, the $k$-loop $\beta$-functional exists.
\item [(2)] The second part of Proposition \ref{beta_observable_closed_prop} (and the proof of Corollary \ref{one_loop_beta_well_defined_lemma}) implies that if we have two equivalent one-loop quantizations $\{I[L]\}$, $\{\Tilde{I}[L]\}$ of the same classical scale-invariant theory, then the resulting elements $\OO_{\beta}^{(1)}$, $\Tilde{\OO}_{\beta}^{(1)}$ are homotopic in $\OO_{\rm loc}(\cE)$ and hence determine the same element in cohomology.
\end{enumerate}
\end{remarks}

\subsubsection{Local RG flow and Factorization Algebra}

The primary thesis of the series of books \cite{CostelloGwilliam1, CostelloGwilliam2} is that the observables of a perturbative quantum field theory have the structure of a {\em factorization algebra}. It is shown that a quantum field theory defined on a manifold $M$ defines a factorization algebra $\Obs^{\mr q}$ of {\em quantum observables} on $M$. To an open set $U \subset M$ the factorization algebra assigns the space $\Obs^{\mr q}(U)$ of quantum observables ``supported'' on $U$. The true definition of support is a somewhat subtle point and we refer the reader to \cite{CostelloGwilliam2} for details. 

We have already remarked that the space of quantum field theories forms a simplicial set. The category of factorization algebras can also be given the structure of a simplicial set, and the construction of quantum observables can be promoted to a simplicial map. 

We focus, as above, on the case of translation-invariant quantum field theories on $M = \RR^n$.  We have already discussed the action of local RG flow on the space of such theories. There is an action of $\RR_{>0}$ on the resulting factorization algebra that intertwines with the construction of quantum observables. Let $\cF$ be a translation-invariant factorization algebra on $\RR^n$ (this means that we have an action of the abelian Lie algebra $\RR^n$ that is compatible with the factorization maps). For $\lambda \in \RR_{>0}$, define a new factorization algebra $\rho_{\lambda} \cF$ as the pushforward of $\cF$ along the diffeomorphism $\lambda^{-1}: \RR^n \to \RR^n$.  For each open set $U \subset \RR^n$ the sections of this factorization algebra satisfy
\[
(\rho_\lambda \cF)(U) = \cF(\lambda \cdot U) .
\]
This action of $\RR_{>0}$ will also be referred to as the local RG flow, motivated by the following result. 

\begin{theorem}[{\cite[Theorem 10.3.4.4]{CostelloGwilliam2}}] \label{theorem_cg}
The map of simplicial sets from translation-invariant BV theories on $\RR^n$ to translation-invariant factorization algebras on $\RR^n$ that sends a theory to the factorization algebra of quantum observables is equivariant with respect to the local RG flow. 
\end{theorem}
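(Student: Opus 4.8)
The plan is to exhibit, for each $\lambda \in \RR_{>0}$ and each translation-invariant quantum field theory $T = \{I[L]\}$ on $\RR^n$, a natural isomorphism of factorization algebras $\rho_\lambda \cdot \Obs^{\mr q}(T) \cong \Obs^{\mr q}(T_\lambda)$, where $T_\lambda = \{I_\lambda[L]\}$ is the rescaled theory, and then to check that this isomorphism is compatible with the simplicial structure on both sides. The conceptual point is that the two local RG flows are really the same geometric operation --- the lift to the space of fields of the scaling diffeomorphism of $\RR^n$ --- and that the assignment $T \mapsto \Obs^{\mr q}(T)$ is built out of operations (the algebra of functionals on $\cE$, the scale-$L$ quantum BV differential, homotopy RG flow between scales, and the factorization product) each of which is natural for that diffeomorphism. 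Thus the heart of the matter is a naturality statement; the one genuine subtlety is the bookkeeping of the scale parameter, which is forced by the fact that $\rho_\lambda$ does not fix the gauge fixing operator but rescales it by $\lambda^k$ as in (\ref{gf1}), exactly the feature that produces the shift $L \mapsto \lambda^{-k}L$ in the definition of $I_\lambda[L]$.

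Concretely, I would argue as follows. For an open $U \subset \RR^n$, a quantum observable of $T$ supported on $U$ is a family $\{O[L]\}$ of functionals, supported (in the sense of \cite{CostelloGwilliam2}) on $U$ and compatible under homotopy RG flow, and $\Obs^{\mr q}_T(U)$ is the resulting cochain complex with scale-$L$ differential $Q + \hbar \Delta_L + \{I[L], -\}_L$, equipped with factorization products obtained by flowing two observables with disjoint support to a common scale and multiplying them in $\Sym(\cE^*)$. Define a map $\Phi_\lambda$ on observables by $\{O[L]\} \mapsto \{\rho_\lambda O[\lambda^{-k}L]\}$. That $\Phi_\lambda$ sends observables of $T$ supported on $\lambda \cdot U$ to observables of $T_\lambda$ supported on $U$, and is a cochain map, is exactly the computation carried out in the proof of Lemma \ref{RG_flow_QME_lemma}, now applied to an observable in place of the interaction: $\rho_\lambda$ commutes with $Q$, intertwines $\Delta_{\lambda^{-k}L}$ with $\Delta_L$ because $\rho_\lambda \cdot K_{\lambda^{-k}L} = K_L$, preserves the BV bracket because it preserves the symplectic pairing, and carries $I[\lambda^{-k}L]$ to $I_\lambda[L]$; compatibility with homotopy RG flow uses the same input in the form $\rho_\lambda \cdot P(\lambda^{-k}\eps, \lambda^{-k}L) = P(\eps, L)$; compatibility with supports holds because $\rho_\lambda$ sends a functional supported on $V$ to one supported on $\lambda^{-1}\cdot V$; and compatibility with the factorization product holds because $\rho_\lambda$ respects disjointness of opens, the commutative product of functionals, and homotopy RG flow. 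The map $\Phi_\lambda$ is invertible --- its inverse is the analogous map built from $\rho_{\lambda^{-1}}$ and the opposite shift, and one checks that these compose strictly, $\Phi_{\lambda\mu} = \Phi_\lambda \circ \Phi_\mu$ --- so it is an isomorphism of factorization algebras, and it matches $\rho_\lambda \cdot \Obs^{\mr q}(T)$ via the pushforward formula $(\rho_\lambda \cF)(U) = \cF(\lambda \cdot U)$.

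It then remains to upgrade this to a statement about simplicial sets. Both the space of translation-invariant quantum field theories and the space of translation-invariant factorization algebras on $\RR^n$ are simplicial sets whose $p$-simplices are the corresponding objects defined with coefficients in the differential forms on $\Delta^p$, following the simplicial enrichments of \cite{CostelloGwilliam2}; the $\RR_{>0}$-action and the isomorphism $\Phi_\lambda$ of the previous paragraph are given by the same formulas uniformly in the base simplex, and these formulas commute with the face and degeneracy maps, so $\Obs^{\mr q}$ is a map of simplicial sets intertwining the two $\RR_{>0}$-actions.

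The main obstacle is not analytic: the hard estimates --- smoothness of the regularized heat kernels, existence of the factorization algebra of quantum observables at all, and its independence of auxiliary choices --- are already packaged into the cited construction and into Lemma \ref{RG_flow_QME_lemma}, so no new estimate is needed. What requires care is the bookkeeping. Two points in particular: first, one must correctly match the scale shift $L \mapsto \lambda^{-k}L$ against the rescaling $Q^{\GF} \mapsto \lambda^k Q^{\GF}$ and against the convention by which $\rho_\lambda$ acts on functionals, since a wrong exponent or a spurious inversion makes the candidate map fail either to be a cochain map or to land in the correct open set; second, one must use the precise definition of the support of a quantum observable from \cite{CostelloGwilliam2}, which is a filtered condition rather than literally ``depends only on the restriction of fields to $U$'', so that verifying $\rho_\lambda$ respects it amounts to unwinding that definition. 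A clean way to finesse both issues is to work throughout with the parametrix formulation of a quantum field theory mentioned in the preceding remark: there $\RR_{>0}$ acts by pushing forward parametrices along the scaling diffeomorphism, the factorization algebra of quantum observables is a functor on the simplicial set of pairs (theory, parametrix), and the desired equivariance becomes visible directly from the construction, with the heat-kernel picture and its scale shift recovered as a special case.
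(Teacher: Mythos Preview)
The paper does not prove this theorem: it is stated with a citation to \cite[Theorem 10.3.4.4]{CostelloGwilliam2} and used as a black box, with no proof environment following the statement. So there is no ``paper's own proof'' to compare your proposal against.

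Your sketch is a reasonable outline of the argument one would expect to find in the cited reference --- the core idea that $\Phi_\lambda : \{O[L]\} \mapsto \{\rho_\lambda O[\lambda^{-k}L]\}$ intertwines the scale-$L$ quantum differentials and the factorization structure, using exactly the heat-kernel and propagator rescaling identities from Lemma~\ref{RG_flow_QME_lemma}, is the right one --- but since the paper treats the result as imported, any assessment of your argument would have to be against \cite{CostelloGwilliam2} rather than against this paper.
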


The map from BV theories to factorization algebra defines, for each fixed translation-invariant BV theory $(\cE,Q, I^{\mr q})$ (we are suppressing the dependence on scale in our notation for simplicity), a map of deformation complexes
\[
\left(\cO_{\rm loc}(\cE)^{\RR^n} [[\hbar]], Q + \{I^{\mr q},-\}\right) \to {\rm Def}\left(\Obs^{\mr q}_{\cE}\right) 
\]
where the left-hand side is the translation-invariant deformation complex of the fixed quantum theory and the right hand side denotes the translation-invariant deformation complex of the factorization algebra $\Obs^{\mr{q}}_{\cE}$. Given a 0-cocycle in $J \in \cO_{\rm loc}(\cE)[[\hbar]]$ -- a deformation of the quantum theory -- the deformation of the factorization algebra can be understood as follows. Such a $J$ allows us to deform the quantum interaction by $I^{\mr q} + J$. Since $J$ is a cocycle, this is still a quantum theory and hence, by the theorem, it determines a factorization algebra $\Obs^{\mr q}_{\cE, J}$. This is a deformation of the original factorization algebra.   We have interpreted the $\beta$-function as a first-order deformation of a translation-invariant theory on $\RR^n$. By the main theorem above, it determines a deformation of the factorization algebra $\Obs^{\mr q}_{\cE}$.

Now, suppose that the quantum theory $(\cE,Q,I^{\mr q})$ is scale-invariant. That is, its $\beta$-function vanishes. Then the associated deformation at the level of factorization algebras is trivializable. Such a trivialization defines an automorphism of the factorization algebra $\Obs^{\mr q}_\cE$. We can characterize this automorphism in the following way. 

Start with a translation-invariant quantum field theory $(\cE, Q, I^{\mr q})$. If we assume that the $\beta$-functional is cohomologically trivial, there must exists a coboundary at the level of cochain complexes, say $\d J = \beta$ where $\d$ is the quantum differential. 

\begin{lemma} 
The automorphism of the factorization algebra is given by the exponential of the derivation $\{J,-\}$. 
\end{lemma}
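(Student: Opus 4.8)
The plan is to unwind the statement of Theorem~\ref{theorem_cg} at the level of deformation complexes and track exactly what the chosen coboundary $J$ with $\d J = \beta$ does. First I would recall that the local RG flow acts on the space of translation-invariant quantum field theories, and—being an action of the one-dimensional Lie group $\RR_{>0}$—its infinitesimal generator is the vector field represented by the $\beta$-functional $\{\OO_\beta[L]\}$, by the very definition of $\OO_\beta[L]$ as $\lim_{\lambda \to 1}\lambda\tfrac{\d}{\d\lambda} I_\lambda[L]$. Under the simplicial map of Theorem~\ref{theorem_cg} (equivariance for the RG flow), this infinitesimal generator maps to the infinitesimal generator of the induced $\RR_{>0}$-action on $\Obs^{\mr q}_\cE$; i.e., the image of $[\beta]$ in $\mr{Def}(\Obs^{\mr q}_\cE)$ is the class of that infinitesimal automorphism. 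This is the conceptual heart of the argument.

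Next I would use the hypothesis that $(\cE,Q,I^{\mr q})$ is scale-invariant, so $[\beta]=0$ in $H^1(\cO_{\rm loc}(\cE)^{\RR^n}[[\hbar]])$, with explicit coboundary $J$: $Q J + \hbar\Delta J + \{I^{\mr q}, J\} = \beta$ (here $\d$ denotes the full quantum BV differential $Q + \hbar\Delta + \{I^{\mr q},-\}$, working scale by scale and then passing to the $L\to 0$ local functional as in Corollary~\ref{one_loop_beta_well_defined_lemma}). Exactly as a 0-cochain trivializing a deformation always does, $J$ exponentiates: the derivation $\{J,-\}$ on $\cO_{\rm loc}(\cE)[[\hbar]]$ acts on the space of interactions, and $\exp(t\{J,-\})$ transports $I^{\mr q}$ along a path whose derivative at each point is (up to the relevant sign) $\d J = \beta$, i.e. along the RG flow. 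So $\exp(\{J,-\})$ is a finite version of the infinitesimal symmetry, and under the map from BV theories to factorization algebras it induces precisely the automorphism trivializing the deformation $\Obs^{\mr q}_{\cE,\beta}$. The key step is to check that the map "BV theory $\mapsto$ factorization algebra'' carries the derivation $\{J,-\}$ on the deformation complex of the theory to the derivation $\{J,-\}$ on $\mr{Def}(\Obs^{\mr q}_\cE)$ generating the corresponding automorphism—this is the functoriality/compatibility content already packaged in the statement that the construction is a simplicial map with the stated equivariance, applied now to an inner derivation rather than to a deformation class.

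The main obstacle I expect is bookkeeping the three compatible "actions of $\{J,-\}$'': on local functionals, on the effective family $\{I[L]\}$ scale by scale, and on the factorization algebra $\Obs^{\mr q}_\cE$; one must verify these intertwine, which amounts to re-reading the proof of Theorem~\ref{theorem_cg} and checking that an inner symmetry given by bracketing with a local functional is sent to the evident inner symmetry of observables. A secondary subtlety is that $\{J,-\}$ is only a priori a derivation of $\cO(\cE)[[\hbar]]$, not obviously integrable as stated; here one uses that $J$ is at least linear in $\hbar$ in the relevant range (since $\OO_\beta$ starts at one loop for a classically scale-invariant theory, by Corollary~\ref{one_loop_beta_well_defined_lemma}), so $\exp(\{J,-\})$ is a well-defined automorphism of the $\hbar$-adically complete algebra, and hence of the factorization algebra, which is what makes the exponential meaningful. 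Once these are in place the lemma follows immediately.
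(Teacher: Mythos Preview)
Your proposal is correct and follows essentially the same approach as the paper: invoke standard obstruction--deformation theory to see that a trivializing coboundary $J$ for $\beta$ exponentiates to an inner automorphism $\exp(\{J,-\})$ on the field-theory side, and then use the equivariance of Theorem~\ref{theorem_cg} to transport this to the factorization algebra. The paper's own proof is only two sentences, gesturing at exactly these two steps (``usual obstruction--deformation yoga'' plus Theorem~\ref{theorem_cg}); you have simply unpacked what those sentences mean, and your additional remarks on why $\exp(\{J,-\})$ is well-defined via $\hbar$-adic completeness are a helpful elaboration not present in the original.
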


\begin{proof} On the field theory side this is a general consequence of the usual obstruction-deformation yoga. The fact that it induces an automorphism at the level of factorization algebra follows from Theorem \ref{theorem_cg}.
\end{proof}


\subsubsection{Interpretation as a Scale Anomaly} \label{sec anomaly}
There's an alternative way of thinking about the $\beta$-functional which we can describe in this formalism.  
We will see that the $\beta$-functional can be thought of as an obstruction to lifting the rescaling symmetry to the quantum level. This problem of lifting symmetries is familiar in field theory. For instance, one often encounters the problem of quantizing gauge symmetries of a classical gauge theory. Of course, this is not always possible, as is measured by the so-called ``gauge anomaly''.  This notion is made precise in the rigorous formulation of perturbative quantum field theory in the books of Costello and Gwilliam \cite{CostelloGwilliam1, CostelloGwilliam2}.

We'll restrict attention to the case of theories that are classically scale-invariant.  That is, we start with a field theory whose space $\mc E$ of fields admits an action of the Lie group $\RR_{>0}$, which we write as
\[\rho \colon \RR_{>0} \to \aut(\cE)\]
as in Section \ref{classical_RG_section}, and for which the classical action functional is invariant. Taking the derivative of $\rho$ at the identity $1 \in \RR_{>0}$ we obtain an action of the Lie algebra ${\rm Lie}(\RR_{>0}) = \RR$ on the space of fields.  This induces an action of the abelian Lie algebra $\RR$ on the algebra $\OO(\mc E)$ by derivations that we will denote $\d \rho : \RR \to \Der(\OO(\mc E))$. This action still preserves the classical action functional, and so defines an action on the classical theory.  

We can describe actions of a Lie algebra on classical theory in the following structural way. Recall, the classical action functional for a field theory with fields $\cE$ can be thought of as a Maurer--Cartan element in the dg Lie algebra $\cO_{\rm loc}(\cE)[-1]$. Here, the Maurer--Cartan equation is equivalent to the classical master equation. Similarly, actions of any Lie algebra $\gg$ on a classical field theory described by a space of fields $\cE$ are also given by certain Maurer--Cartan elements. Consider the dg Lie algebra $\oloc(\cE)[-1]$ and the commutative algebra given by the Chevalley--Eilenberg cocahins $C_{\rm Lie}^*(\gg)$. There is a natural structure of a dg Lie algebra on $C_{\rm Lie}^*(\gg) \tensor \oloc(\cE)[-1]$. 

\begin{definition} An action of $\gg$ on a classical field theory $\cE$ is a Maurer--Cartan element 
\[
\Tilde{I} \in C_{\rm Lie}^*(\gg) \tensor \oloc(\cE)[-1] .
\]
\end{definition}
We can decompose $\Tilde{I}$ as $I + I^\gg$ where $I \in \oloc(\cE)$ (no dependence on $\gg$) satisfies the classical master equation. The functional $I^\gg$ is a Maurer--Cartan element in $C^*_{\rm Lie,red}(\gg) \tensor \oloc(\cE)[-1]$. The standard Koszul duality between Lie algebras and commutative algebras allows us to think of $I^\gg$ as a map of $L_\infty$-algebras $I^\gg : \gg \to \oloc(\cE)[-1]$. In particular, for an element $X \in \gg$ we can consider of the local functional $I^\gg_X \in \oloc(\cE)$. 

We will focus on the case $\gg = \RR$ where the action on the classical theory comes from the derivative of an action $\rho$ of the Lie group $\RR_{>0}$. Since $C_{\rm Lie}^*(\RR) = \CC[\epsilon]$ with $\epsilon$ of degree one, we see that a general Maurer--Cartan element is of the form
\[
I + \epsilon I^\RR \in \oloc(\cE)[-1] \oplus \epsilon \oloc(\cE)[-1] .
\]
As above, $I$ is the local functional of degree zero determining the classical theory, and $I^\RR$ is a local functional of degree $-1$ satisfying $Q I^\RR + \{I, I^\RR\} = 0$. That is, $I^\RR$ is closed for the classical differential. The condition that $I^\RR$ encodes the derivative of the scaling action means that for any $J \in \cO_{\mr{loc}}(\cE)$ we have
\[(\d \rho)(1) \cdot J = \{I^\RR, J\}\]
where $(\d \rho)(1) \cdot$ denotes the action of $1 \in \RR = {\rm Lie}(\RR_{>0})$ on functionals and $\{-,-\}$ is the BV bracket. By linearity, the action of $\mu \in \RR$ is given by $(\d \rho)(\mu) \cdot J = \mu \{I^\RR, J\}$. 

In what follows we'll work in a simplified version of the powerful general formalism
developed in \cite[Chapter 12]{CostelloGwilliam2} to handle the problem of quantization for arbitrary ($L_\infty$) actions of Lie algebras. 

Suppose $I \in \OO_{\rm loc}(\cE)$ is the classical interaction term for the theory. We consider the
modified theory defined by deforming this classical interaction term
by the interaction $I^\RR$. That is, we replace the classical
interaction by $I+I^\RR$. Note that this is a local functional in the
bigger space $C^*_{\mr{Lie}}(\RR ; \OO_{\rm loc}(\cE))$. We fix a choice of prequantization, as in the ordinary case. This is a family of functionals
$\{\Tilde{I}[L]\}$ satisfying homotopy RG flow which modulo $\hbar$
reduces to
\[ \lim_{L \to 0} (\Tilde{I}[L] \mod \hbar) = I + I^\RR .\]

The \emph{anomaly} for quantizing the classical rescaling symmetry is the obstruction of $\Tilde{I}[L]
\in C_{\rm Lie}^*(\RR ; \cO(\cE))$ satisfying a variant of the quantum master
equation, namely:
\[Q (\Tilde{I}[L])+ \hbar \Delta_L(\Tilde{I}[L] ) + \frac{1}{2} \{\Tilde{I}[L],\Tilde{I}[L]\}_L= 0 .\]
In the case that $Q I^{\RR} = 0$ we see that modulo $\hbar$ the equation above is equivalent to $\{I^{\RR}, I\} = 0$. The combination of these two conditions is equivalent to scale-invariance of the classical theory. 

The \emph{one-loop anomaly} of this symmetry is the obstruction $\Theta^{(1)}[L]$ to satisfying
the above equation modulo $\hbar^2$. By standard manipulations, the collection $\{\Theta^{(1)}[L]\}$ satisfies both homotopy RG flow and is closed under the classical BV differential. Hence $\Theta^{(1)} := \lim_{L \to 0}\Theta^{(1)}[L]$ exists and is a cocycle in $C_{\rm Lie}^*(\RR ;
\cO_{\rm loc}(\cE)) = \cO_{\rm loc}(\cE) \oplus \epsilon \cO_{\rm loc}(\cE)$ of degree $1$. Moreover, if we assume that there is no obstructions to quantizing the bare theory, described by the local functional $I$, this obstruction must lie in $\Theta^{(1)} \in \epsilon \cO_{\rm loc}(\cE)$. In other words, since $\epsilon$ is degree one, $\Theta^{(1)}$ determines a degree zero local functional in $\cO_{\rm loc}(\cE)$. 

\begin{theorem} 
Let $(\cE, Q, \{I[L]\})$ be a translation-invariant quantum field theory on $\RR^n$ defined to order $\hbar$. Moreover, suppose that the the classical interaction $I^{\mr{cl}} = \lim_{L\to 0} I[L] \mod \hbar$ is scale-invariant, so that the Lie algebra ${\rm Lie}(\RR_{>0}) = \RR$ acts. Then, the one-loop anomaly $\Theta^{(1)} \in \cO_{\rm loc}(\cE)$, to quantizing the scaling action is cohomologous to the one-loop beta functional $\cO^{(1)}_{\beta}$. 
\end{theorem}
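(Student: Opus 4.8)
The plan is to compute the one-loop anomaly $\Theta^{(1)}$ using one convenient prequantization of the scale-deformed classical theory $I^{\mathrm{cl}}+\epsilon I^\RR$, and then to recognize the resulting cocycle as a representative of $[\cO^{(1)}_\beta]$. Since $[\Theta^{(1)}]$ does not depend on the chosen prequantization, this suffices.

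I would build the prequantization as follows. Because $\{I[L]\}$ is assumed to be a genuine quantum theory to order $\hbar$, take the $\epsilon$-degree-zero part of $\{\tilde I[L]\}$ to be $\{I[L]\}$ itself, and obtain the $\epsilon$-degree-one part $\{\mathcal J[L]\}$ by running homotopy RG flow on the insertion $\epsilon I^\RR$ in the background of the vertices of $\{I[L]\}$, with the usual counterterm prescription; then $\mathcal J[L]$ has tree-level part the effective version $I^\RR[L]$ of $I^\RR$ and satisfies $\lim_{L\to0}(\mathcal J[L]\bmod\hbar)=I^\RR$. Expanding the equivariant quantum master equation $Q\tilde I[L]+\hbar\Delta_L\tilde I[L]+\frac12\{\tilde I[L],\tilde I[L]\}_L=\Theta[L]$ in powers of $\epsilon$: the $\epsilon^0$-component is exactly the quantum master equation for $\{I[L]\}$, hence vanishes, while the $\epsilon^1$-component gives $\Theta[L]=D_L\mathcal J[L]$ with $D_L:=Q+\hbar\Delta_L+\{I[L],-\}_L$ the scale-$L$ quantum BV differential of $\{I[L]\}$. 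Scale-invariance of $I^{\mathrm{cl}}$ --- equivalently $QI^\RR+\{I^{\mathrm{cl}},I^\RR\}=0$, a relation preserved by tree-level RG flow --- forces the $\hbar^0$-part of $D_L\mathcal J[L]$ to vanish, so $\Theta^{(1)}[L]$ is a cocycle for the scale-$L$ classical differential and $\Theta^{(1)}=\lim_{L\to0}\Theta^{(1)}[L]\in\oloc(\cE)$, as already recorded. Writing $\mathcal J[L]=I^\RR[L]+\hbar\mathcal J^{(1)}[L]+\cdots$ one gets explicitly $\Theta^{(1)}[L]=\big(Q+\{I^{\mathrm{cl}}[L],-\}_L\big)\mathcal J^{(1)}[L]+\Delta_L I^\RR[L]+\{I^{(1)}[L],I^\RR[L]\}_L$.

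The heart of the argument is matching this with $\cO^{(1)}_\beta[L]$. Differentiating $I_\lambda[L]=\rho_\lambda(I[\lambda^{-k}L])$ at $\lambda=1$ gives $\cO_\beta[L]=(\d\rho)(1)\cdot I[L]-kL\,\partial_L I[L]$, where $(\d\rho)(1)\cdot(-)$ is the derivation of $\cO(\cE)$ given by differentiation along the rescaling vector field --- equivalently the unregularized bracket $\{I^\RR,-\}$, since $\rho_\lambda$ is the linear symplectic flow with quadratic Hamiltonian $I^\RR$. Its $\hbar$-linear part is $\cO^{(1)}_\beta[L]=\{I^\RR,I^{(1)}[L]\}-kL\,\partial_L I^{(1)}[L]$. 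Comparing with the formula for $\Theta^{(1)}[L]$ above, one must show that $\Delta_L I^\RR[L]+\{I^{(1)}[L],I^\RR[L]\}_L-\{I^\RR,I^{(1)}[L]\}+kL\,\partial_L I^{(1)}[L]$ is exact for $Q+\{I^{\mathrm{cl}}[L],-\}_L$: the bracket difference $\{I^{(1)}[L],I^\RR[L]\}_L-\{I^\RR,I^{(1)}[L]\}$ (scale-$L$ versus unregularized, dressed versus bare generator) is a contact-term correction that is manifestly exact, while $\Delta_L I^\RR[L]+kL\,\partial_L I^{(1)}[L]$ is controlled by the RG-flow differential equation for $\{I[L]\}$ --- the same identity used to check compatibility of the quantum master equation with RG flow --- which relates the one-loop $\Delta_L$-insertion to the scale derivative up to a coboundary. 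Taking $L\to0$ then gives $[\Theta^{(1)}]=[\cO^{(1)}_\beta]$.

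The main obstacle is exactly that last step: disentangling the non-canonical ingredients (counterterms for $\mathcal J[L]$ versus those for $I[L]$; the $L$-dependence of $\{-,-\}_L$ and its degeneration to $\{I^\RR,-\}$ as $L\to0$; the scale-derivative term $kL\,\partial_L I^{(1)}[L]$) and verifying that each contributes only an exact functional, so that the two closed --- but in general non-exact --- cocycles $\Theta^{(1)}[L]$ and $\cO^{(1)}_\beta[L]$ differ by a coboundary. A cleaner route that I would use as a cross-check, and which meshes with the counterterm computation of Section \ref{one_loop_section}, is to carry out both sides at the level of bare functionals with counterterms: there $\RR_{>0}$ acts by its honest geometric action, the $\log\lambda$-dependence isolated in Proposition \ref{lambda dependence} feeds both $\Theta^{(1)}$ and $\cO^{(1)}_\beta$, and each is identified with the logarithmic one-loop counterterm.
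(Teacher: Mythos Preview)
Your setup --- choosing the prequantization $\tilde I[L]=I[L]+\epsilon\,\mathcal J[L]$ with $\mathcal J[L]$ the RG-flowed insertion of $I^\RR$, and extracting the $\epsilon$-linear part of the equivariant QME --- is exactly what the paper does, and your explicit formula $\Theta^{(1)}[L]=\big(Q+\{I^{\mr{cl}}[L],-\}_L\big)\mathcal J^{(1)}[L]+\Delta_L I^\RR[L]+\{I^{(1)}[L],I^\RR[L]\}_L$ is the same as the paper's. Where you diverge is in how to close the ``obstacle'' you flag. The paper does not try to match $\Theta^{(1)}[L]$ and $\cO^{(1)}_\beta[L]$ at finite $L$ via your two-term decomposition; instead it (i) observes that $\big(Q+\{I^{\mr{cl}}[L],-\}_L\big)\mathcal J^{(1)}[L]$ is \emph{already} the classical coboundary --- this is the whole exact part, not something whose exactness needs further argument; (ii) kills $\Delta_L I^\RR[L]$ outright by noting that at tree level $I^\RR[L]\to I^\RR$, a local functional, so $\Delta_L I^\RR=0$; and (iii) identifies the surviving term $\{I^\RR_0[L],I_1[L]\}_L$ with $(\d\rho)(1)\cdot I^{(1)}[L]$ in the $L\to0$ limit and declares this to be $\cO^{(1)}_\beta$.

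Your proposed grouping $\Delta_L I^\RR[L]+kL\,\partial_L I^{(1)}[L]$ is where the genuine gap lies: these two terms involve entirely different functionals ($I^\RR$ versus $I^{(1)}[L]$), and there is no ``RG-flow differential equation for $\{I[L]\}$'' that relates a BV-Laplacian of the scaling generator to the $L$-derivative of the one-loop effective action. The extra $-kL\,\partial_L I^{(1)}[L]$ appears only because you chose to compare to the \emph{unregularized} bracket $\{I^\RR,I^{(1)}[L]\}$; if you stay with $\{I^\RR_0[L],I^{(1)}[L]\}_L$ as the paper does, that term never enters, and the $L\to0$ degeneration of the scale-$L$ bracket together with $I^\RR_0[L]\to I^\RR$ is what carries the comparison. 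So the fix is not an RG-flow identity but a different bookkeeping: keep the scale-$L$ bracket, dispose of $\Delta_L I^\RR$ by locality, and recognize the $\mathcal J^{(1)}$-term as the coboundary witnessing ``cohomologous'' rather than ``equal''. Your cross-check via logarithmic counterterms is sound and indeed is how the paper computes the common class in practice, but it is a separate computation, not a substitute for the missing step here.
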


\begin{proof} 
By assumption the non-equivariant functionals $\{I[L]\}$ satisfy the quantum master equation modulo $\hbar^2$. That is
\begin{equation}\label{ordinary qme}
Q I[L] + \hbar \Delta_L I[L] + \frac{1}{2} \{I[L], I[L]\} = 0
\end{equation} 
We choose a prequantization for the equivariant classical theory $I^{\mr{cl}} + I^\RR$. It is of the form $I[L] + I^\RR [L]$ where $I[L]$ is the non-equivariant quantization and $I^\RR[L]$ is an effective family depending on the fields $\cE$ as well as the Lie algebra. Moreover, the sum $\{I[L] + I^\RR[L]\}$ satisfies homotopy RG flow (just as in the non-equivariant case, such a prequantization always exists). 
 
The anomaly, at scale $L$, is the obstruction to $I[L] + I^\RR[L]$ satisfying the equivariant quantum master equation. Thus, it is of the form
\[
\Theta^{(1)} [L] = \hbar^{-1} \left(Q (I[L] + I^\RR [L]) + \hbar \Delta_L (I[L] + I^\RR [L])+ \frac{1}{2} \{I[L] + I^\RR [L], I[L] + I^\RR [L]\}_L \right) .
\]
The scale zero obstruction is $\lim_{L \to 0} \Theta^{(1)}[L]$. First, note that the terms involving just $I[L]$ vanish by the ordinary QME. Next, we consider the term involving the BV Laplacian, namely $\hbar \Delta_L I^\RR [L]$. We can write this expression as
\[
\lim_{L' \to L} \hbar \Delta_L I^\RR [L'] .
\]
Now, since we only care about this term modulo $\hbar^2$, only the tree level part of $I^\RR$ contributes. But, we know that at the tree level the limit $\lim_{L' \to 0} I^\RR [L'] \mod \hbar = I^\RR$ exists. By the compatibility of $\Delta_L$ with homotopy RG flow it suffices to show that 
\[
\lim_{L' \to 0} \Delta_L I^\RR [L'] = \Delta_L I^\RR = 0 .
\]
This follows from the fact that $I^\RR$ is a local functional (i.e. the integral of a density) combined with the fact that the heat kernel $K_L$ vanishes along the diagonal. 

We can thus write the scale $L$ obstruction as the $\hbar$ linear piece of 
\[
Q I_1^\RR [L] + \{I_0[L], I_1^\RR[L]\}_L + \{I_0^\RR[L], I_1[L]\}_L
\]
where $I_i[L] + I_i^\RR[L]$ denotes the term of the prequantization that is linear in $\hbar^i$. We perform the same trick as above: replace the term $Q I_1^\RR [L] + \{I_0[L], I_1^\RR[L]\}_L$ by
\[
\lim_{L' \to L} Q I_1^\RR [L'] + \{I_0[L], I_1^\RR[L']\}_L .
\]
Now, the $L \to 0$ limit of $Q I_1^\RR [L'] + \{I_0[L], I_1^\RR[L']\}$ is equal to $(Q + \{I^{\mr{cl}}, -\}) I_1^\RR[L'] = 0$. Note that the operator $Q + \{I^{\mr{cl}},-\}$ is precisely the differential of the classical deformation complex $(\cO_{\rm loc}(\cE), Q + \{I^{\mr{cl}},-\})$. In particular, we see that the term involving $I_1^\RR[L]$ defines an exact element in the deformation complex. Thus, the scale zero obstruction is cohomologous to the term $\lim_{L \to 0} \{I_0^\RR, I_1[L]\}$ appearing in the equivariant QME. 

By construction, bracketing with $I^\RR$ is equivalent to acting by the operator $\d \rho$. Thus, the obstruction is given by
\[
\Theta^{(1)} = \lim_{L \to 0}  (\d \rho)(1) \cdot (I^{(1)}[L]).
\]
On the other hand, since $\left. \lambda \frac{\d}{\d \lambda} \right|_{\lambda = 1} = (\d \rho)(1)$, this coincides with the $L\to 0$ limit of the $\beta$-functional $\cO^{(1)}_{\beta}[L]$ defined in Section \ref{sec beta functional}. 
\end{proof}

\begin{remark}
From now on we'll be investigating the common cohomology class $[\OO_\beta^{(1)}] = [\Theta^{(1)}]$.  The theorem above tells us that there are two interestingly different ways of computing this cohomology class, either as the class of the one-loop anomaly or as the class of the logarithmic derivative of the one-loop effective interaction under RG flow.
\end{remark}

\subsubsection{The $\beta$-function}
Having defined the $\beta$-functional as a cocycle in the local deformation complex, we wish to understand a simpler object, namely its cohomology class.  We will call this the {\em $\beta$-function of the perturbative quantum field theory}. We'll give a well-defined definition to all order at once which doesn't admit a well-defined decomposition by loop order, but we'll give an invariant definition of the one-loop $\beta$-function in the scale-invariant case.  We also give a natural definition in this language of \emph{quantum scale-invariance}.

\begin{definition}
Suppose $\{I[L]\}$ is a translation-invariant quantum field theory on $\RR^n$.  The scale $L$ \emph{$\beta$-function} of the theory is the cohomology class $\beta[L] = [\OO_{\beta}[L]]$. 
\end{definition}

\begin{remark}
In general we can't decompose $\beta[L]$ into a power series in $\hbar$ in a well-defined way, because addition of an exact element can alter $\hbar$-degree (that is, the complex of functionals is only filtered, not graded).  However, in the cases we discussed in Corollary \ref{one_loop_beta_well_defined_lemma} and Remark \ref{conformal_remarks} above, $\OO_\beta^{(i)}$ is closed for the classical differential and it's possible to consider its cohomology class.  We call this the $i$-loop \emph{$\beta$-function} denoted by $\beta^{(i)}$: this is the main object we'll be interested in for the remainder of the paper.
\end{remark}

\begin{remark} \label{beta_function_is_a_function_remark}
We should explain in what sense the $\beta$-function is a ``function''.   In order to do so, let's first discuss the cohomology of $\OO_{\mr{loc}}(\mc E)$ with respect to the classical differential $Q + \{I^{\mr{cl}}, -\}$; cohomology classes here can be thought of (non-canonically) as first-order deformations of the classical theory, or as the space of \emph{coupling constants}.  We'll discuss this further in Section \ref{deformation_complex_section}, see in particular Lemma \ref{lemma_describing_invariant_local_action_functionals}.

Now, consider instead the one-loop $\beta$-function in the scale-invariant case (so the $L \to 0$ limit exists).  Since the $\beta$-functional is closed for the quantum BV differential, the one-loop $\beta$-functional is closed for the classical BV differential by Lemma \ref{one_loop_beta_well_defined_lemma}, and therefore we can consider the one-loop $\beta$-function as a function on  $H^0(\OO_{\mr{loc}}(\mc E))$, or an element of the space of coupling constants as discussed above.  In order to identify this cohomology group with the space of coupling constants, one must choose a framing $g$ for the space $H^0(\OO_{\mr{loc}}(\mc E))$ (bare values for the coupling constants).  There's a natural choice: we can use the action functional itself (which is a cocycle because of the classical master equation) to trivialize this torsor.  Rescaling this trivialization corresponds to rescaling the action functional, or rescaling the coupling constants.  One thinks of the resulting value of the one-loop $\beta$-function as a function of this choice of trivialization.
\end{remark}

\begin{prop} \label{beta_function_dependence_on_quantization_prop}
If $\beta^{(1)}_f[L]$ is the effective one-loop $\beta$-function in a scale- and translation-invariant theory associated to a choice of one-loop quantization $f \in H^0(\OO_{\mr{loc}})$, and $\alpha$ is a linear map in $\GL(H^0(\OO_{\mr{loc}}(\mc E)))$, then $\beta^{(1)}_{\alpha(f)}[L] = \alpha(\beta^{(1)}_f[L])$.
\end{prop}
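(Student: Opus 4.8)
\medskip
\noindent\emph{Proof idea.} The plan is to prove this as a naturality property of the one-loop $\beta$-functional: this functional is built from the quantum theory by operations that involve no choice of coordinates on $H^0(\OO_{\mr{loc}}(\cE))$, so replacing $f$ by $\alpha(f)$ can only change the way its cohomology class is presented, and it does so by the same linear map $\alpha$.

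First I would unwind the two sides. By Corollary~\ref{one_loop_beta_well_defined_lemma}, in a scale- and translation-invariant theory the effective functional $\OO_\beta^{(1)}[L]$ is a well-defined cocycle for the scale $L$ classical BV differential; by the second part of Proposition~\ref{beta_observable_closed_prop} together with Remark~\ref{conformal_remarks}(2), its class in $H^0(\OO_{\mr{loc}}(\cE))$ does not depend on the choice of one-loop quantization, and by homotopy RG flow the classes at different scales $L$ are canonically identified (RG flow inducing isomorphisms on the cohomology of the deformation complex). As recalled in Remark~\ref{beta_function_is_a_function_remark}, the datum $f$ enters only through the resulting trivialization of the torsor of coupling constants --- a linear identification of $H^0(\OO_{\mr{loc}}(\cE))$ with the space of couplings --- and $\beta^{(1)}_f[L]$ is by definition the image of the intrinsic class $[\OO_\beta^{(1)}[L]]$ under this identification. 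So the assertion amounts to the statement that passing from $f$ to $\alpha(f)$ precomposes this identification with $\alpha$.

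Second I would verify the compatibility. The construction of the cocycle $\OO_\beta^{(1)}[L]$ --- via the rescaling action $\rho_\lambda$, the operators $Q$ and $\Delta_L$, the Feynman weights $W(P(L,L'),-)$ governing homotopy RG flow, and the logarithmic derivative $\lim_{\lambda\to1}\lambda\frac{\d}{\d\lambda}$ --- makes no reference to a coordinatization of $H^0(\OO_{\mr{loc}}(\cE))$; hence the class $[\OO_\beta^{(1)}[L]]$ is intrinsic, and replacing $f$ by $\alpha(f)$ acts on $\beta^{(1)}_f[L]$ only through the change of trivialization, that is, by $\alpha$, giving $\beta^{(1)}_{\alpha(f)}[L] = \alpha(\beta^{(1)}_f[L])$. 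The step I expect to be the main obstacle is making the action of $\alpha$ on the data precise: one must distinguish the linear relabelling of cohomology classes from the generally nonlinear dependence of the effective interaction $I[L]$ --- and hence of $\OO_\beta^{(1)}[L]$ --- on the bare coupling, and one must check that this relabelling is compatible, scale by scale, with the homotopy RG flow carrying $\OO_\beta^{(1)}$ to $\OO_\beta^{(1)}[L]$. Once the action is set up carefully, the remaining verification is a routine unwinding of the definitions.
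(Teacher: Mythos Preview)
There is a genuine gap stemming from a misreading of what $f$ is. In the proposition, $f \in H^0(\OO_{\mr{loc}}(\cE))$ is \emph{the choice of one-loop quantization itself} (the cohomology class of the one-loop part $I^{(1)}$), not a trivialization of the space of couplings. The $\beta$-functional $\OO_\beta^{(1)}[L]$ is computed from the effective interaction $I^{(1)}[L]$ of that particular quantization, so its cohomology class genuinely depends on $f$; it is not an intrinsic class to which $f$ is merely applied as a coordinate chart. Your appeal to Remark~\ref{conformal_remarks}(2) does not establish independence from the quantization: that remark only says that two \emph{homotopic} one-loop quantizations give cohomologous $\beta$-functionals, i.e.\ the class depends only on the homotopy class $f$, not that it is the same for all $f$. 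Indeed, if $[\OO_\beta^{(1)}[L]]$ were independent of $f$ as you claim, the proposition would force $\alpha(\beta^{(1)}_f[L]) = \beta^{(1)}_f[L]$ for every $\alpha \in \GL(H^0)$, which is absurd unless the $\beta$-function vanishes.

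The paper's argument is instead a direct linearity check at the level of the one-loop interaction. One picks representative quantizations $\{I^{(0)}[L] + \hbar I^{(1)}[L]\}$ and $\{I^{(0)}[L] + \hbar I^{(1),\alpha}[L]\}$ with $[I^{(1)}] = f$ and $[I^{(1),\alpha}] = \alpha(f)$. The point is that the one-loop effective interaction at scale $L$ depends affine-linearly on the scale-zero one-loop datum (the one-loop RG flow adds tree diagrams with a single $I^{(1)}$ vertex plus a term built only from $I^{(0)}$), so in cohomology $[I^{(1),\alpha}[L]] = \alpha[I^{(1)}[L]]$. Since $\OO_\beta^{(1)}[L]$ is obtained from $I^{(1)}[L]$ by the linear operations $\rho_\lambda$ and $\lim_{\lambda\to 1}\lambda\,\d/\d\lambda$, and these commute with the classical differential in the scale-invariant setting, the same linear relation holds for the $\beta$-function classes. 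Your proposal does gesture at this dependence in its final paragraph, but the argument as written does not supply it; the content of the proposition is precisely this linearity of $f \mapsto [\OO_\beta^{(1)}[L]]$, not a naturality under change of coordinates.
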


\begin{proof}
 Choose a representative for the cohomology class $f$, that is, a one-loop quantization $\{I^{(0)}[L] + \hbar I^{(1)}[L]\}$ so that the cohomology class $[I^{(1)}] = f$.  Similarly, choose a representative one-loop quantization $\{I^{(0)}[L] + \hbar I^{(1),\alpha}[L]\}$ for the cohomology class $\alpha(f)$.  By the construction of the effective one-loop interaction $I^{(1)}[L]$ by renormalization group flow, we have the equality $[I^{(1),\alpha}[L]] = \alpha[I^{(1)}[L]]$ in $H^0(\OO_{\mr{loc}}(\mc E))$.  Applying the local RG flow, and then taking the logarithmic derivative, we obtain the desired equality.
\end{proof}

In particular, if the one-loop $\beta$-function vanishes at one non-zero quantization then it vanishes at all non-zero quantizations.  The same argument holds for $\beta^{(k)}$ when the $i$-loop $\beta$-function vanishes for all $i < k$.  This fact, along with Remark \ref{conformal_remarks}, motivates the following definition. 

\begin{definition}
A scale- and translation-invariant classical theory on $\RR^n$ is \emph{quantum scale-invariant} if it admits a quantization such that the $i$-loop $\beta$-function $\beta^{(i)}$ vanishes for all $i$.  
\end{definition}

Note that vanishing of $\beta^{(i)}$ is inductively well-defined by Remark \ref{conformal_remarks} (1).  By the observation we just made, this vanishing condition is independent of the choice of quantization.

We'll now prove a very useful property of the $\beta$-function -- homotopy invariance.
 
\begin{prop} \label{beta_locally_constant_prop}
The $\beta$-function is locally constant on the space of quantum field theories with fixed classical BV complex and gauge fixing operator.
\end{prop}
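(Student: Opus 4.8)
The plan is to establish that the $\beta$-function $\beta[L] = [\OO_\beta[L]]$ is constant along any path in the simplicial set of quantum field theories sharing a fixed free BV complex $(\cE, Q)$ and gauge fixing operator $Q^{\GF}$. Since the space of quantum field theories forms a simplicial set (as recalled in the excerpt, following \cite{CostelloGwilliam2}), ``locally constant'' means that $\beta[L]$ agrees on any two quantizations connected by a $1$-simplex, i.e. a family $\{I_t[L]\}$ of quantum field theories parametrized by $t$ in the $1$-simplex $\Delta^1$ (equivalently, by a formal or smooth parameter $t$), with the RG flow equations and the QME holding for all $t$. The key point is that the $\beta$-functional construction $\OO_\beta[L] = \lim_{\lambda \to 1}\lambda\frac{\d}{\d\lambda}(I_\lambda[L])$ is natural enough that differentiating the whole family in $t$ produces a coboundary for the quantum BV differential.

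First I would set up the one-parameter family: let $\{I_t[L]\}$ be a path of translation-invariant quantum field theories with the fixed data $(\cE,Q,Q^{\GF})$, so that for each $t$ both homotopy RG flow and the scale-$L$ QME hold. By Lemma \ref{RG_flow_QME_lemma}, for each fixed $\lambda$ the rescaled family $\{(I_t)_\lambda[L]\}$ also satisfies RG flow and the QME, now for all $t$ and $\lambda$ simultaneously. Differentiating the QME
\[
Q (I_t)_\lambda[L] + \hbar \Delta_L (I_t)_\lambda[L] + \tfrac12\{(I_t)_\lambda[L],(I_t)_\lambda[L]\}_L = 0
\]
with respect to $t$ shows that $\frac{\d}{\d t}(I_t)_\lambda[L]$ is closed for the scale-$L$ quantum differential $Q + \hbar\Delta_L + \{(I_t)_\lambda[L],-\}$; moreover, being a derivative along a path in the simplicial set, $\frac{\d}{\d t}(I_t)_\lambda[L]$ is manifestly exact for that differential — this is the standard obstruction-deformation yoga already invoked in the proof of the Lemma in Section \ref{sec anomaly} and in Remark \ref{conformal_remarks}(2). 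Concretely, the $1$-simplex datum provides a functional $\{J_\lambda[L]\}$ with $\frac{\d}{\d t}(I_t)_\lambda[L] = (Q + \hbar\Delta_L + \{(I_t)_\lambda[L],-\})J_\lambda[L]$.

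Next I would interchange the $t$-derivative with the operator $\lim_{\lambda\to1}\lambda\frac{\d}{\d\lambda}$, which is legitimate since by Proposition \ref{lambda dependence} the $\lambda$-dependence of each $(I_t)_{i,k}[L]$ is polynomial in $\lambda^{\pm1}$ and $\log\lambda$, so the logarithmic derivative and its $\lambda\to1$ limit are well-behaved and commute with $\d/\d t$. This gives
\[
\frac{\d}{\d t}\OO_\beta^{(t)}[L] = \lim_{\lambda\to1}\lambda\frac{\d}{\d\lambda}\Bigl((Q + \hbar\Delta_L + \{(I_t)_\lambda[L],-\})J_\lambda[L]\Bigr).
\]
Using that $Q$ and $\Delta_L$ are $\lambda$-independent and that $\lambda\frac{\d}{\d\lambda}$ is a derivation of the bracket (the same facts used in the proof of Proposition \ref{beta_observable_closed_prop}), the right-hand side rearranges as $(Q + \hbar\Delta_L + \{I_t[L],-\})\bigl(\lim_{\lambda\to1}\lambda\frac{\d}{\d\lambda}J_\lambda[L]\bigr) + \{\OO_\beta^{(t)}[L], J[L]\}_L$ at $\lambda=1$; both terms are exact for the scale-$L$ quantum differential of the theory $I_t[L]$ (the second because $\OO_\beta^{(t)}[L]$ is itself quantum-closed by Proposition \ref{beta_observable_closed_prop}, so $\{\OO_\beta^{(t)}[L],-\}$ lands in the image of the differential up to a closed adjustment — more cleanly, one absorbs it by noting the total expression is $\frac{\d}{\d t}$ of a closed class). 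Hence $\frac{\d}{\d t}\beta^{(t)}[L] = \frac{\d}{\d t}[\OO_\beta^{(t)}[L]] = 0$ in cohomology, so $\beta[L]$ is constant along the path.

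The main obstacle I anticipate is the bookkeeping in the last step: making rigorous the claim that the $t$-derivative of the $\beta$-functional is exact \emph{for the differential of the varying theory} $I_t[L]$ rather than merely for a $\lambda$-deformed differential, and handling the cross term $\{\OO_\beta[L], J[L]\}_L$ that appears when one commutes $\lambda\frac{\d}{\d\lambda}$ past the bracket $\{(I_t)_\lambda[L],-\}$. The clean way to dispatch this is to work at the level of the deformation complex ${\rm Def}(\Obs^{\mr q}_{\cE})$ and invoke Theorem \ref{theorem_cg}: the $\beta$-functional is a $1$-cocycle whose class is the image of a canonical class depending only on the local RG action, and a path in the simplicial set of theories induces a homotopy of the corresponding maps of deformation complexes, under which cohomology classes are preserved. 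One should also remark — as the excerpt does just before this proposition — that the decomposition into loop order need not be preserved, so the statement is about the total class $\beta[L]$; when the classical theory is scale-invariant one additionally gets that each $\beta^{(i)}$ is individually locally constant, recovering the statement used later for first-order Yang--Mills.
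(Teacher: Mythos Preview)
Your overall strategy matches the paper's, but there is a genuine gap at precisely the point you flag as ``the main obstacle'': the cross term $\{\OO_\beta^{(t)}[L], J[L]\}_L$. Your justification that this term is exact ``because $\OO_\beta^{(t)}[L]$ is itself quantum-closed \ldots\ so $\{\OO_\beta^{(t)}[L],-\}$ lands in the image of the differential'' is incorrect: closedness of $a$ makes $\{a,-\}$ a chain map, not a null-homotopic one. The fallback ``the total expression is $\tfrac{\d}{\d t}$ of a closed class'' also fails, since the differential itself varies with $t$; the $t$-derivative of a family of closed elements for a varying differential is not automatically exact (nor even closed) for the fixed-$t$ differential. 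Finally, the appeal to Theorem \ref{theorem_cg} does not close the gap: that theorem asserts $\RR_{>0}$-equivariance of the passage to factorization algebras, which is orthogonal to the homotopy-invariance statement you need.

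The paper's fix is exactly the missing ingredient: work from the outset in the complex $\cO(\cE)[[\hbar]]\otimes\Omega^\bullet([0,1])$ with the \emph{full} differential
\[
\d_L \;=\; Q + \hbar\Delta_L + \d_{\mr{dR}} + \{I[L](t) + J[L](t)\,\d t,\; -\}_L .
\]
A $1$-simplex in the simplicial set of theories is by definition such a family $I[L](t) + J[L](t)\,\d t$ satisfying the QME in this extended complex; the $\d t$-component of that QME is the identity $\tfrac{\d}{\d t}I[L](t) = -(Q+\hbar\Delta_L+\{I[L](t),-\})J[L](t)$ you quote, but with $J$ supplied as simplex data rather than conjured by yoga. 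The point is that in this setup the troublesome bracket $\{J[L](t)\,\d t,\,-\}$ is \emph{part of} $\d_L$, so the cross term is absorbed into the differential. The paper then carries out your Leibniz manipulation, adds and subtracts the zero furnished by Proposition \ref{beta_observable_closed_prop}, and recognises the result as
\[
\d_L\!\left(\lim_{\lambda\to 1}\lambda\frac{\d}{\d\lambda}\bigl(I_\lambda[L](t) - J_\lambda[L](t)\,\d t\bigr)\right),
\]
which is the exactness statement that makes $\tfrac{\d}{\d t}\beta[L]=0$. So your computation becomes correct once you enlarge the complex to include $\d_{\mr{dR}}$ and the $J\,\d t$ piece; without that enlargement the cross term genuinely obstructs the argument.
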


\begin{proof}
It suffices to verify that the $\beta$-function is constant along a 1-simplex in the space of quantum field theories.  Recall that a 1-simplex in the space of quantum field theories is a family of effective interactions valued in $\Omega^\bullet([0,1])$, which we can write as $\{I[L](t) + J[L](t)\d t\}$, which satisfy the RGE and the quantum master equation.  The quantum master equation says that
\[\left\lbrace\def\arraystretch{1.4}\begin{array}{ll}(Q+\hbar \Delta_L) I[L](t) + \frac{1}{2} \{I[L],I[L] \}_L = 0 \\
\frac{\d}{\d t} I[L](t) \d t= -(QJ[L](t) + \{I[L](t), J[L](t)\}_L + \hbar \Delta_L J[L](t))\d t
\end{array}\right.\]
\[\text{so } \lim_{\lambda \to 1} \lambda \frac{\d}{\d \lambda}\frac{\d}{\d t} I_\lambda[L](t) \d t = -\lim_{\lambda \to 1} \lambda \frac{\d}{\d \lambda}\left(QJ_\lambda[L](t) + \{I_\lambda[L](t), J_\lambda[L](t)\}_L + \hbar \Delta_L J_\lambda[L](t)\right) \d t\]
using some of the calculations from Lemma \ref{RG_flow_QME_lemma} to keep track of the scale.  In order to check that the $\beta$-function is constant along the 1-simplex we need to verify that this expression vanishes in cohomology for all $t$.  In order to see this, we observe
\begin{align*}
&\!\!\!-2\lim_{\lambda \to 1} \lambda \frac{\d}{\d \lambda}\left(QJ_\lambda[L](t) + \{I_\lambda[L](t), J_\lambda[L](t)\}_L + \hbar \Delta_L J_\lambda[L](t)\right) \d t\\
&= -2\left((Q + \hbar \Delta_L) \lim_{\lambda \to 1} \lambda \frac{\d}{\d \lambda} J_\lambda[L](t) + \left\{I[L](t), \lim_{\lambda \to 1} \lambda \frac{\d}{\d \lambda} J_\lambda[L](t)\right\}_L + \left\{\lim_{\lambda \to 1} \lambda \frac{\d}{\d \lambda} I_\lambda[L](t), J[L](t)\right\}_L\right) \d t \\
&= \left(   \frac{\d}{\d t} \lim_{\lambda \to 1} \lambda \frac{\d}{\d \lambda} I_\lambda[L](t) + \left\{J[L](t), \lim_{\lambda \to 1} \lambda \frac{\d}{\d \lambda} I_\lambda[L](t)\right\}_L\right) \d t \\
& \quad - \left( (Q + \hbar \Delta_L) \lim_{\lambda \to 1} \lambda \frac{\d}{\d \lambda} J_\lambda[L](t) + \left\{I[L](t), \lim_{\lambda \to 1} \lambda \frac{\d}{\d \lambda} J_\lambda[L](t)\right\}_L \right) \d t\\
&= (Q + \hbar \Delta_L) \lim_{\lambda \to 1} \lambda \frac{\d}{\d \lambda} I_\lambda[L](t) + \left\{I[L](t), \lim_{\lambda \to 1} \lambda \frac{\d}{\d \lambda} I_\lambda[L](t) \right\}_L + \left( \frac{\d}{\d t} \lim_{\lambda \to 1} \lambda \frac{\d}{\d \lambda} I_\lambda[L](t) + \left\{J[L](t), \lim_{\lambda \to 1} \lambda \frac{\d}{\d \lambda} I_\lambda[L](t)\right\}_L \right) \d t  \\
&\quad - \left((Q + \hbar \Delta_L) \lim_{\lambda \to 1} \lambda \frac{\d}{\d \lambda} J_\lambda[L](t) + \left\{I[L](t), \lim_{\lambda \to 1} \lambda \frac{\d}{\d \lambda} J_\lambda[L](t)\right\}_L + \left\{J[L](t), \lim_{\lambda \to 1} \lambda \frac{\d}{\d \lambda} J_\lambda[L](t)\right\}_L\d t\right) \d t \\
&= \ \d\left(\lim_{\lambda \to 1} \lambda \frac{\d}{\d \lambda} I_\lambda[L](t) - \lim_{\lambda \to 1} \lambda \frac{\d}{\d \lambda} J_\lambda[L](t) \d t\right)
\end{align*}
where we introduced extra terms that sum to zero by Proposition \ref{beta_observable_closed_prop}, that is
\[(Q + \hbar \Delta_L) \lim_{\lambda \to 1} \lambda \frac{\d}{\d \lambda} I_\lambda[L](t) + \left\{I[L](t), \lim_{\lambda \to 1} \lambda \frac{\d}{\d \lambda} I_\lambda[L](t) \right\}_L = 0.\]
In the last line $\d_L = Q  +\hbar \Delta_L  + \d_{\mr{dR}} + \{I[L](t) + J[L](t) \d t, -  \}_L$ is the differential in the complex of quantum functionals valued in $\Omega^\bullet([0,1])$ at scale $L$. Thus the derivative of the $\beta$-functional with respect to $t$ is exact, meaning that the derivative of the $\beta$-function vanishes, and the $\beta$-function is constant along 1-simplices, as required.
\end{proof}

We'll conclude this section by explaining a sense in which the one-loop $\beta$-function is close to invariant along paths in the space of \emph{classical} field theories, up to reparametrization.

\begin{corollary} \label{one_loop_beta_classically_invariant_cor}
If $I_1$ and $I_2$ are homotopy equivalent scale- and translation-invariant classical interactions on $\RR^n$ and $\{I_1[L]\}$ and $\{I_2[L]\}$ are non-zero renormalizable quantizations of $I_1$ and $I_2$ respectively, then the one-loop $\beta$-functions of the quantum field theories $\{I_1[L]\}$ and $\{I_2[L]\}$ differ by a linear map $\alpha \in \GL(\OO_{\mr{loc}}(\mc E))$. 
\end{corollary}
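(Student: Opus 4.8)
The plan is to combine two results established above: homotopy invariance of the $\beta$-function (Proposition~\ref{beta_locally_constant_prop}) and the behaviour of the one-loop $\beta$-function under a linear reparametrization of the coupling constants (Proposition~\ref{beta_function_dependence_on_quantization_prop}). The argument proceeds in two moves: first transport the quantization $\{I_1[L]\}$ across the classical homotopy equivalence to a quantization of $I_2$, and then compare the two resulting quantizations of the single classical theory $I_2$.

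For the first move, a homotopy equivalence of classical field theories (in the sense of Remark~\ref{abstract_BV_remark}, allowing the underlying elliptic complex to change, as in the first- versus second-order Yang--Mills example) induces a quasi-isomorphism of translation-invariant local deformation complexes
\[
\big(\oloc(\cE_1),\, Q + \{I_1,-\}\big) \xrightarrow{\ \simeq\ } \big(\oloc(\cE_2),\, Q + \{I_2,-\}\big),
\]
and hence an isomorphism $\phi \colon H^0(\oloc(\cE_1)) \to H^0(\oloc(\cE_2))$ of the spaces of coupling constants. The same equivalence lifts $\{I_1[L]\}$ to a renormalizable quantization $\{I_1'[L]\}$ of $I_2$ (the transport of a non-zero quantization is again non-zero): the obstruction-deformation theory governing quantizations is carried along the equivalence, and at the level of factorization algebras this is the statement that an equivalence of classical theories promotes to an equivalence of quantum observables (Theorem~\ref{theorem_cg}). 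Since $\beta^{(1)}$ is built intrinsically from homotopy RG flow and the infinitesimal rescaling action of $\RR^n$, and since $I_1$ and $I_2$ are classically scale-invariant so that every one-loop $\beta$-function in sight is a well-defined cohomology class by Corollary~\ref{one_loop_beta_well_defined_lemma}, Proposition~\ref{beta_locally_constant_prop} (in its factorization-algebra form) applied along the path of quantum theories furnished by the equivalence shows that the one-loop $\beta$-function of $\{I_1'[L]\}$ is $\phi$ applied to that of $\{I_1[L]\}$.

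For the second move, $\{I_1'[L]\}$ and $\{I_2[L]\}$ are two non-zero renormalizable quantizations of the same scale-invariant classical theory $I_2$. Writing $f_1'$ and $f_2$ for the classes in $H^0(\oloc(\cE_2))$ of their one-loop parts, the non-vanishing hypothesis gives $f_1' \neq 0$ and $f_2 \neq 0$, so there is a linear automorphism $\alpha_0$ of $H^0(\oloc(\cE_2))$ with $\alpha_0(f_1') = f_2$ (extend each of these two nonzero vectors to a basis and send one basis to the other). Proposition~\ref{beta_function_dependence_on_quantization_prop} then identifies $\beta^{(1)}$ of $\{I_2[L]\}$ with $\alpha_0$ applied to $\beta^{(1)}$ of $\{I_1'[L]\}$; composing with the first move, $\beta^{(1)}$ of $\{I_2[L]\}$ equals $(\alpha_0 \circ \phi)$ applied to $\beta^{(1)}$ of $\{I_1[L]\}$, so $\alpha := \alpha_0 \circ \phi$ is the required linear map on the space of coupling constants (when $\cE_1 = \cE_2$, as in the applications, this is the stated element of $\GL(\oloc(\cE))$).

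The step I expect to be the main obstacle is the first move: one must verify carefully that a homotopy equivalence of classical BV theories which changes the underlying elliptic complex genuinely transports renormalizable quantizations and matches the $\beta$-functional with the induced map on deformation complexes, since Proposition~\ref{beta_locally_constant_prop} as stated fixes both the classical BV complex and the gauge-fixing operator. This needs either the factorization-algebra machinery of Theorem~\ref{theorem_cg} or a direct homotopy-transfer argument for the effective families $\{I[L]\}$, a compatible choice of gauge-fixing operator on each side, and care that the loop filtration is preserved so that the one-loop pieces (and not merely the full $\beta$-functionals) are matched. The remaining steps are formal, modulo the elementary fact that any two nonzero vectors of a vector space lie in a single $\GL$-orbit.
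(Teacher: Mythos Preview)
Your two-move outline matches the paper's proof exactly, and your second move is the paper's argument verbatim. The difficulty is entirely in your first move, and there the paper's route is much simpler than what you attempt.

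The point you are missing is that in the statement $I_1$ and $I_2$ are classical interactions for the \emph{same} free BV theory $(\mc E, Q)$; ``homotopy equivalent'' here means a $1$-simplex in the simplicial set of classical interactions with that fixed BV complex. (Even in the intended application, Lemma~\ref{classical_first_second_order_equiv_lemma} first couples second-order Yang--Mills to a trivial $2$-form precisely so that both theories live over the same $\mc E$.) Once you see this, the first move is immediate: by Costello's theorem \cite[Chapter~4, Theorem~1.5.1]{CostelloBook} the classical $1$-simplex lifts to a $1$-simplex in the space of renormalizable quantum field theories starting at $\{I_1[L]\}$, and Proposition~\ref{beta_locally_constant_prop} applies on the nose (fixed BV complex, fixed gauge fixing) to say the $\beta$-function is constant along that quantum path. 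Call the other endpoint $\{I_2'[L]\}$; now your second move finishes.

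Your detour through a quasi-isomorphism of deformation complexes, transport of quantizations, and Theorem~\ref{theorem_cg} is both unnecessary and not quite right: Theorem~\ref{theorem_cg} asserts $\RR_{>0}$-equivariance of the map from BV theories to factorization algebras, not that classical equivalences lift to quantum ones or that $\beta$-functionals match under such a transport. The ``main obstacle'' you flag in your last paragraph is therefore self-inflicted; in the actual setting of the corollary it does not arise.
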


\begin{proof}
Choose a lift of the homotopy from $I_1$ to $I_2$ in the space of renormalizable quantum field theories, beginning at ${I_1[L]}$ -- we can always do this by Costello's theorem \cite[Chapter 4, Theorem 1.5.1]{CostelloBook}.  By Proposition \ref{beta_locally_constant_prop}, the $\beta$-function is constant along this path.  Say the other end of this path is a quantization $\{I_2'[L]\}$ of $I_2$.  We can find a linear map $\alpha \in \GL(\OO_{\mr{loc}}(\mc E))$ sending the cohomology class of $\{I_2'[L]\}$ to the cohomology class of $\{I_2[L]\}$, and by Proposition \ref{beta_function_dependence_on_quantization_prop} the resulting $\beta$-functions are themselves related by the linear map $\alpha$.
\end{proof}

\subsection{Computing the One-Loop $\beta$-function} \label{compute_beta_function_section}

\begin{definition} \label{scaling_dim_def}
We say a local functional $F \in \OO_{\mr{loc}}(\mc E)$ has \emph{scaling dimension} $d$ if $\rho_\lambda(F) = \lambda^d F$.
\end{definition}

\begin{prop} \label{beta_function_counterterm_prop}
Let $I \in \OO_{\rm loc}(\cE)$ be a translation- and scale-invariant local functional. Suppose that 
\begin{itemize}
\item[(1)] there exists a functional $I^{\mr{CT}}(\eps) \in \OO_{\rm loc}(\cE)$ of scaling dimension 0 such that 
\[
I^{\rm naive}[L] := \lim_{\eps \to 0} \sum_{\Gamma \in \; {\text{one-loop}}} W_{\Gamma}(P(\eps,L), I - \hbar I^{\mr{CT}}(\eps))
\]
exists and
\item[(2)] there exists $J \in \OO_{\rm loc}(\cE)$ of scaling dimension 0 such that for all $L > 0$ the functional
\[
I[L] := I^{\rm naive}[L] + \hbar \sum_{\Gamma, v} W_\Gamma(P(0,L), I, J) 
\]
satisfies the scale $L$ quantum master equation modulo $\hbar^2$. 
\end{itemize}
Then $\{I[L]\}$ defines a quantization of $I$ modulo $\hbar^2$ and the one-loop $\beta$-functional $\OO_\beta^{(1)} = \lim_{L \to 0} \OO_\beta^{(1)} [L]$ satisfies
\[
\OO_\beta^{(1)} = k I^{\mr{CT}}_{\rm log} .
\] 
\end{prop}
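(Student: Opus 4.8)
The plan is to extract the $\log\lambda$ coefficient directly from the definition of the $\beta$-functional, using the hypothesis that $I[L]$ is built by renormalization group flow from $I$ with the $\eps$-dependent counterterm $I^{\mr{CT}}(\eps)$. First I would write $I^{\mr{CT}}(\eps) = I^{\mr{CT}}_{\mr{log}}\log\eps + (\text{terms with other }\eps\text{-dependence})$; Costello's analysis of counterterms (the renormalization scheme picks out singular parts, and by Proposition \ref{lambda dependence} the only singular functions appearing are powers of $\log$ and $\eps^{-1}$) tells us that the $\eps$-dependence is through polynomials in $\log\eps$ and $\eps^{\pm}$, and scaling-dimension-$0$ forces the relevant logarithmic piece to be $I^{\mr{CT}}_{\mr{log}}\log\eps$. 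The key point is that $I^{\mr{CT}}_{\mr{log}}$ is scale-invariant (scaling dimension $0$, since $I^{\mr{CT}}(\eps)$ is), so it is a genuine local functional on which $\rho_\lambda$ acts trivially.

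Next I would compute $I_\lambda[L] = \rho_\lambda(I[\lambda^{-k}L])$ at one loop. Using the formula for $I^{\mr{naive}}[L]$ and the computation from Lemma \ref{RG_flow_QME_lemma} that $\rho_\lambda\cdot P(\eps,L) = P(\lambda^k\eps, \lambda^k L)$ and $\rho_\lambda\cdot K_t = K_{\lambda^k t}$, together with scale-invariance of $I$, one finds
\[
I_\lambda^{\mr{naive}}[L] = \lim_{\eps\to 0}\sum_{\Gamma\;\text{one-loop}} W_\Gamma\big(P(\eps,L), I - \hbar\,\rho_\lambda I^{\mr{CT}}(\lambda^{-k}\eps)\big),
\]
after the change of variables $\eps\mapsto\lambda^{-k}\eps$ in the propagator integral. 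Since $I^{\mr{CT}}(\eps)$ has scaling dimension $0$ we have $\rho_\lambda I^{\mr{CT}}(\lambda^{-k}\eps) = I^{\mr{CT}}(\lambda^{-k}\eps)$, and then
\[
I^{\mr{CT}}(\lambda^{-k}\eps) = I^{\mr{CT}}(\eps) + I^{\mr{CT}}_{\mr{log}}\log(\lambda^{-k}) - I^{\mr{CT}}_{\mr{log}}\log\eps + I^{\mr{CT}}_{\mr{log}}\log\eps = I^{\mr{CT}}(\eps) - k\,I^{\mr{CT}}_{\mr{log}}\log\lambda.
\]
Substituting back, the extra $-k\,I^{\mr{CT}}_{\mr{log}}\log\lambda$ appears as a new one-valent-style insertion; tracking the linearity of the one-loop weight $W_\Gamma$ in its vertices and taking $\lim_{L\to 0}$ (which kills all the genuinely one-loop graphs built from $\hbar I^{\mr{CT}}_{\mr{log}}$ since there is nothing left to contract, leaving only the bare insertion), one gets $I_\lambda = I + \hbar I^{(1)} + \hbar\, k\, I^{\mr{CT}}_{\mr{log}}\log\lambda + O(\hbar^2) + (\lambda\text{-independent at this order up to the piece just written})$. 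Applying $\lambda\frac{\d}{\d\lambda}\big|_{\lambda=1}$ annihilates the $\lambda$-independent parts and the tree-level part (which vanishes by scale-invariance, as in Corollary \ref{one_loop_beta_well_defined_lemma}), and extracts the $\hbar$-linear coefficient $\OO_\beta^{(1)} = k\,I^{\mr{CT}}_{\mr{log}}$.

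The one subtlety is the contribution of $J$: the term $\hbar\sum_{\Gamma,v}W_\Gamma(P(0,L),I,J)$ added in hypothesis (2) to fix the quantum master equation also depends on $\lambda$ a priori. However $J$ has scaling dimension $0$, and $I$ is scale-invariant, so by the same propagator-rescaling computation this correction term is actually scale-invariant (it only involves $P(0,L)$, and $\rho_\lambda\cdot P(0,\lambda^{-k}L) = P(0,L)$ after the change of variables), hence contributes nothing to $\lambda\frac{\d}{\d\lambda}$ at $\lambda=1$. I expect the main obstacle to be the bookkeeping in the second paragraph: justifying carefully that, upon taking the $L\to 0$ limit of the one-loop effective family, the derivative in $\log\lambda$ of the counterterm-shifted graph sum collapses to precisely the bare functional $k\,I^{\mr{CT}}_{\mr{log}}$ with no leftover graph contributions — this is where one uses that a one-loop graph with a single distinguished $\log\lambda$-vertex and the rest built from $I$, in the $L\to 0$ limit, reconstructs $I^{\mr{CT}}_{\mr{log}}$ itself by the very definition of the logarithmic counterterm (the counterterm is what cancels the would-be divergence, so its log-coefficient reappears with the opposite sign in the RG-flowed interaction). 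Making this identification precise, rather than merely plausible, is the crux.
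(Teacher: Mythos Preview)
Your proposal is correct and follows essentially the same route as the paper's proof: both use the propagator rescaling $\rho_\lambda \cdot P(\eps,L) = P(\lambda^k\eps,\lambda^k L)$ together with scale-invariance of $I$, make the substitution $\eps \mapsto \lambda^{-k}\eps$, and then isolate the $\log\lambda$ piece of $I^{\mr{CT}}(\lambda^{-k}\eps)$. Your treatment of the $J$-correction via scaling dimension $0$ is exactly right and is something the paper leaves implicit; the ``bookkeeping'' concern you flag at the end is resolved in the paper precisely as you anticipate, by $\lim_{L\to 0}\lim_{\eps\to 0} W(P(\eps,L), I^{\mr{CT}}_{\log}) = I^{\mr{CT}}_{\log}$ since trees collapse in the $L\to 0$ limit.
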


In (1) the sum is over the set of all connected one-loop graphs. In (2) the notation $W_\Gamma(P_0^L, I,J)$ means that we take the weight with respect to the tree $\Gamma$ by labelling the distinguished vertex $v$ by $J$ and all other vertices by $I$. The sum in (2) is over trees $\Gamma$ together with the choice of a distinguished vertex $v$. 

\begin{proof} Condition (1) ensures that the effective family $\{I^{\rm naive}[L]\}$ defines a one-loop prequantization of $I$. That is, it satisfies one-loop homotopy RG flow and $I = \lim_{L \to 0}  I^{\rm naive}[L] \mod \hbar$. Condition (2) guarantees that $\{I[L]\}$ defines a one-loop quantization of $I$. This family satisfies homotopy RG flow (just as $I^{\rm naive}[L]$ does) {\em and} the quantum master equation modulo $\hbar^2$.  The fact that $I$ was scale-invariant and that $I^{\mr{CT}}$ and $J$ were of scaling dimension 0 ensures that this quantum field theory is strictly renormalizable as in Definition \ref{renormalizable_def} (by \cite[Chapter 4, Theorem 1.5.1]{CostelloBook}).

We will compute $\OO_\beta^{(1)}$ directly. By definition, it is given by
\begin{align*}
\hbar \OO_\beta^{(1)} & = \lim_{L \to 0} \lim_{\lambda \to 1} \lambda \frac{\d}{\d \lambda} \left( I_\lambda [L] \right) \\
& = \lim_{L \to 0} \lim_{\lambda \to 1} \lambda \frac{\d}{\d \lambda} \left(\rho_\lambda \cdot I[\lambda^{-k} L] \right) \\ 
& = \lim_{L \to 0} \lim_{\lambda \to 1} \lim_{\eps \to 0} \lambda \frac{\d}{\d \lambda} \rho_\lambda \cdot \left(W(P(\eps,\lambda^{-k}L), I - \hbar I^{\mr{CT}}(\eps)) \right) .
\end{align*}
Now, since $I$ is scale-invariant we observe 
\[\rho_\lambda \cdot \left(W\left(P(\eps,\lambda^{-k}L), I - \hbar I^{\mr{CT}}(\eps)\right) \right) = W\left(P(\lambda^k \eps,L), I - \hbar \rho_\lambda \cdot I^{\mr{CT}}(\eps)\right) .\]
Making the substitution $\eps \mapsto \lambda^{-k} \eps$ we see that the $\beta$-functional can be written as
\[\hbar \OO_\beta^{(1)} = \lim_{L \to 0} \lim_{\lambda \to 1} \lim_{\eps \to 0} \lambda \frac{\d}{\d \lambda} W\left(P(\eps,L) , I - \hbar \rho_\lambda \cdot I^{\mr{CT}}(\lambda^{-k} \eps) \right) .\]
Now, since our quantization was chosen to be strictly renormalizable we know that the counterterm has an $\eps$ expansion
\[I^{\mr{CT}} (\eps) = (\log \eps) I^{\mr{CT}}_{\rm log} + \sum_{m > 1} \log(\eps)^m I^{\mr{CT}}_{m} \]
where $I^{\mr{CT}}_{\rm log}$ and $I^{\mr{CT}}_{m}$ are elements of $\OO_{\rm loc} (\cE)$. Upon applying the operator $\lambda \frac {\d}  {\d \lambda}$ only the $\log \eps$ term survives so we are left with the limit
\[\OO_\beta^{(1)} = k \lim_{L \to 0} \lim_{\eps \to 0} W(P(\eps,L), I^{\mr{CT}}_{\rm log}) = k I^{\mr{CT}}_{\rm log} .\] 
\end{proof}

\begin{remark}
Given a choice of renormalization scheme, we can \emph{define} the $k$-loop $\beta$-function to be the cohomology class of the $k$-loop logarithmic counterterm, but in general it depends on the choice of renormalization scheme and is not manifestly related to the functional $\OO^{(k)}_\beta$ which is generally not a cocycle.  However, in the situation where $\OO^{(i)}_\beta$ vanishes for $i < k$ the above proof works identically, giving a well-defined functional which is closed for the classical differential.
\end{remark}

\section{The BV Formalism for Yang--Mills Theory} \label{BV_section}

In this section we'll explain how to put Yang--Mills theory (with arbitrary fermionic matter) into the BV formalism as introduced in Remark \ref{abstract_BV_remark} above.  There are essentially two ways of doing this, via the usual ``second-order'' formalism, or via the equivalent ``first-order formalism'', where we introduce an auxiliary field (essentially a Lagrange multiplier) so that the equations of motion become first-order differential operators.  We'll prove that these two approaches are equivalent, but use the first-order formalism to construct a perturbative quantization.	

The usual description of Yang--Mills theory, in the second-order formalism, is as follows.  Let $G$ be a compact simple Lie group, and let $V$ be a finite-dimensional representation of $G$ equipped with a non-degenerate invariant pairing $V \otimes V \to \RR$.  The \emph{fields} of Yang--Mills theory are a gauge field $A \in \Omega^1(\RR^4; \gg)$ and a spinor $\psi \in \Omega^0(\RR^4; S \otimes V)$, where $S \iso S_+ \oplus S_-$ is the Dirac spinor bundle.  The (infinitesimal) gauge transformations are controlled by the Lie algebra $\Omega^0(\RR^4;\gg)$, with a gauge transformation $c$ acting on the fields by
\begin{align*}
 A &\mapsto \d c + [c,A] \\
 \psi &\mapsto \alpha(c)(\psi)
\end{align*}
where $\alpha$ is the derivative of the representation $G \to \aut(V)$.

In order to define the \emph{action} of Yang--Mills theory we choose a non-degenerate $G$-invariant pairing $\mu \colon V \otimes V \to \RR$, and a positive operator $m \colon V \to V$ -- the \emph{mass matrix} of the fermions.  We will also write $\rho$ for the Clifford multiplication map $\Omega^1(\RR^4; S) \to \Omega^0(\RR^4; S)$.  The Yang--Mills action is the functional
\[S(A,\psi) =  \int_{\RR^4} \frac 12 \|F_A\|^2 + \mu(\psi, (\sd d_A + m) \psi),\]
where $F_A = \d A + g[A,A]$, and $\sd d_A \psi = \rho(d\psi + g \alpha(A)\psi)$.  The norm of $F_A$ is computed using the standard metric on $\RR^4$ together with a non-degenerate invariant pairing on the Lie algebra $\gg$. 

\begin{remark} \label{semisimple_remark}
We could've assumed that $G$ was only semisimple rather than simple, and the construction above still makes perfect sense.  In what follows we'll assume that $G$ is simple for ease of exposition; in the case of a single simple factor we'll obtain a theory with a single coupling constant, so the $\beta$-function will be a function of one variable, whereas for more general semisimple groups we have a coupling constant for each simple factor.  Nevertheless, all the calculations we'll perform will still make sense for semisimple groups.
\end{remark}

\begin{remark}
Here $g$ is a real number -- the \emph{coupling constant} of the theory.  While the classical theory is manifestly independent of the value of $g$, provided $g \ne 0$, the correlation functions in the quantum theory do depend on its value. The aim of this paper is to rigorously determine the dependence of quantum Yang--Mills theory on the value of $g$, at least at the one-loop level.
\end{remark}

We can fit Yang--Mills theory into Costello's framework for the perturbative BV formalism by computing the classical BV complex as described in Remark \ref{abstract_BV_remark} above.  As a cochain complex, the classical BV complex takes the form
\[\xymatrix{
& \underline{0} & \underline{1} & \underline{2} & \underline 3 \\
\mbox{Fermion degree }0 & \Omega^0(\RR^4;\gg) \ar[r]^{\d} & \Omega^1(\RR^4;\gg) \ar[r]^{\d \ast \d} &\Omega^3(\RR^4;\gg) \ar[r]^{\d} &\Omega^4(\RR^4;\gg) \\
\mbox{Fermion degree }1 & &\Omega^0(\RR^4; S \otimes V) \ar[r]^{m + \sd d} &\Omega^0(\RR^4; S \otimes V) &
}\] 
placed in cohomological degrees $0,1,2,3$. Notice there is an extra $\ZZ/2$-grading in addition to the cohomological degree (or ``ghost number''), that we call the {\em fermionic degree}.  This is a slight generalization of the classical BV theories we defined above. Both gradings will contribute to signs: an element of bidegree $(m,n)$ commutes if $m+n = 0 \mod 2$ and anticommutes if $m+n = 1 \mod 2$. The differential of on the space of fields is of cohomological degree one and fermionic degree zero. The BV complex additionally admits a $(-3)$-shifted symplectic structure: on the first line this is given by the wedge-and-integrate pairing $\Omega^i(\RR^4; \gg) \otimes \Omega^{4-i}_c(\RR^4; \gg)$.  On the second line it's given by the spinor pairing $S \otimes S \to \RR$ (i.e. the canonical isomorphism between $S$ and its dual) along with the $G$-invariant pairing $\mu$ on the representation $V$.  Note that this symplectic structure is of fermionic degree zero. 

There is a natural $L_\infty$-algebra structure on this space that describes the usual second-order formulation of Yang--Mills.  The binary bracket is proportional to $g$, and is given by the wedge product along with the Lie bracket on the first line, along with the action of $\Omega^0(\RR^4;\gg)$ on the second line by the representation, and one additional Lie bracket, $\Omega^1(\RR^4;\gg)[-1] \otimes \Omega^0(\RR^4; S \otimes V)[-1] \to \Omega^0(\RR^4; S \otimes V)[-2]$ (by Clifford multiplication).   The trinary bracket is proportional to $g^2$, and is a degree $-1$ map given by the 3-fold bracket
\[ [-,[-,-]] \colon \Omega^1(\RR^4;\gg)[-1]^{\otimes 3} \to \Omega^3(\RR^4;\gg)[-2].\]

The problem with this theory, as it's written above, is that it does not admit a gauge fixing operator satisfying the conditions of a gauge fixing operator as defined above.  This comes down to the fact that there is a piece of the differential that is a second-order differential operator: the term $\d\ast\d$ from degree one to degree two.  This term prevents us from using the methods described in Section \ref{regularization_section} to construct our heat kernels, and therefore to analyze the perturbative quantum theory.  We'll fix this by proving the theory is equivalent to a different formulation, for which there does exist a natural choice of gauge fixing.

\subsection{First-Order Yang--Mills} \label{first_order_subsection}
First-order Yang--Mills theory is an equivalent classical field theory to the ordinary, second-order Yang--Mills theory described above.  We'll prove these theories are equivalent using the BV formalism, using the same method as Costello {\cite[Chapter 6, Lemma 3.2.1]{CostelloBook}} but keeping track of a matter field.

The first-order formalism introduces an additional self-dual 2-form field $B \in \Omega^2_+(\RR^4;\gg)$, which will \emph{not} transform under the gauge symmetry (in contrast to the theory Costello defines, where the infinitesimal gauge symmetry acts on $B$ by the Lie bracket).  This field plays the role of a Lagrange multiplier: the action functional is modified to
\[S_{\mr{FO}}(A,B,\psi) = \int_{\RR^4} \langle F_A, B \rangle - \frac 12 \|B\|^2 + \mu(\psi, (\sd d_A + m) \psi).\]

Again, to actually work with this theory we'll use the BV formalism.  The classical BV complex in first-order Yang--Mills theory is, as a cochain complex, 
\[\xymatrix{
& \underline{0} & \underline{1} & \underline{2} & \underline 3 \\
\mbox{Fermion degree }0 & \Omega^0(\RR^4;\gg) \ar[r]^d & \Omega^1(\RR^4;\gg) \ar[r]^{d_+} &\Omega^2_+(\RR^4;\gg) & \\
\mbox{Fermion degree }0 & &\Omega^2_+(\RR^4;\gg) \ar[ur]^{-\mr{id}} \ar[r]^d &\Omega^3(\RR^4;\gg) \ar[r]^d &\Omega^4(\RR^4;\gg)\\
\mbox{Fermion degree }1 & &\Omega^0(\RR^4; S \otimes V) \ar[r]^{m + \sd d} &\Omega^0(\RR^4; S \otimes V). &
}\]
As before, there is an additional $\ZZ/2$ grading given by fermionic degree and a cohomologically $(-3)$-shifted symplectic pairing (that is degree zero for the fermionic degree). The pairing is exactly the same as in the second-order formalism described above, where now the first line is paired with the second line. The $L_\infty$-algebra structure (with $\ell_1$ the differential above) can be described as follows.  The binary bracket is proportional to $g$, and is again given by $g$ times the action of $\Omega^0(\RR^4;\gg)$ on all terms apart from the $\Omega^2_+$ summand in degree 1, along with the additional brackets 
\begin{align*}
\Omega^1(\RR^4;\gg)[-1] \otimes \Omega^3(\RR^4;\gg)[-2] &\to \Omega^4(\RR^4;\gg)[-3] \\
\text{and } \Omega^1(\RR^4;\gg)[-1] \otimes \Omega^0(\RR^4; S \otimes V)[-1] &\to \Omega^0(\RR^4; S \otimes V)[-2].
\end{align*}

\begin{remark}
From now on we'll restrict attention to the situation where the fermions are massless.  Allowing a mass term for the fermions breaks classical scale-invariance, and thus precludes us from using our cohomological definition of the one-loop $\beta$-function.  This is not particularly restrictive, since we'll see from the calculations below that the one-loop logarithmic counterterm is actually independent of the fermion mass.
\end{remark}

Intuitively, the equivalence between first- and second-order Yang--Mills is realised by an upper triangular change of variables of the form $B \mapsto B + 2(F_A)_+$, which preserves the path integral measure by virtue of the fact that it's upper triangular, so the Jacobian vanishes.  In terms of the action functional, upon performing this change of variables we find (looking only at the bosonic part of the action)
\begin{align*}
S_{\mr{FO}}(A, B + (F_A)_+) &= \langle F_A, B + (F_A)_+ \rangle -\frac 12 \langle B + (F_A)_+, B + (F_A)_+ \rangle \\
&= \langle F_A, B \rangle +  \langle F_A, (F_A)_+ \rangle - \frac 12 \langle B, B \rangle - \langle (F_A)_+, B \rangle - \frac 12 \langle (F_A)_+, (F_A)_+ \rangle \\
&= \frac 12 \langle (F_A)_+, (F_A)+ \rangle - \frac 12 \langle B, B \rangle \\
&= \frac 12 S_{\mr{SO}}(A) - \frac 12 \langle B, B \rangle.
\end{align*} We can make this precise using the homological algebra of the BV formalism.  To begin with, let's discuss the classical equivalence following Costello.

We'll consider second-order Yang--Mills coupled to a trivial self-dual 2-form field.  That is -- on the level of the classical BV complex -- we consider the direct sum of the second-order Yang--Mills theory with the abelian dg Lie algebra
\[\mc E_B = \left(\Omega^2_+(\RR^4;\gg) \overset {- \id} \to \Omega^2_+(\RR^4;\gg)\right)\]
in degrees 1 and 2, where the only additional bracket is given by the action of $\Omega^0(\RR^4;\gg)$ on each term. 

\begin{lemma}[{\cite[Chapter 6, Lemma 3.2.1]{CostelloBook}}] \label{classical_first_second_order_equiv_lemma}
 There is a homotopy equivalence of classical field theories between the first-order Yang--Mills theory, and second-order Yang--Mills theory coupled to a trivial self-dual 2-form field.
\end{lemma}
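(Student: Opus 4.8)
The plan is to carry out the argument of Costello \cite[Chapter 6, Lemma 3.2.1]{CostelloBook} essentially unchanged, the only genuinely new point being to check that the matter field $\psi$ and its antifield are transported passively and contribute identically in both formulations. The starting observation is that the classical BV complex of first-order Yang--Mills is, \emph{as a $\ZZ\times\ZZ/2$-graded vector space together with its $(-3)$-shifted symplectic pairing}, precisely the direct sum of the classical BV complex of second-order Yang--Mills with the auxiliary complex $\mc E_B=\left(\Omega^2_+(\RR^4;\gg)\xrightarrow{-\id}\Omega^2_+(\RR^4;\gg)\right)$: the $\Omega^2_+$ in cohomological degree $1$ is the field $B$, the $\Omega^2_+$ in degree $2$ is its antifield $B^*$, and the pairing between them is the wedge-and-integrate pairing on self-dual $2$-forms. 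So producing a homotopy equivalence of classical field theories reduces to producing a symplectomorphism $\Phi$ of this one fixed graded symplectic vector space --- or rather of the associated formal graded manifold --- intertwining the two solutions of the classical master equation, equivalently the two $L_\infty$-structures.

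The map $\Phi$ is the upper-triangular change of variables indicated before the statement: on generators it sends $B\mapsto B+(F_A)_+$, is the identity on $A$, on the ghost, and on the entire matter sector $\Omega^0(\RR^4;S\otimes V)^{\oplus 2}$, and shifts the antifield of $A$ by the compensating $B^*$-dependent term forced by requiring $\Phi$ to preserve the symplectic pairing (concretely, $\Phi$ is the BV canonical transformation generated by the local functional pairing $(F_A)_+$ against $B^*$). Because $(F_A)_+$ involves only $A$, the map $\Phi$ alters the coordinates $B$ and $A^*$ only through (polynomial, at most quadratic) functions of the other coordinates, so it is invertible --- an honest isomorphism of formal graded manifolds, i.e. a strict $L_\infty$-isomorphism. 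One should think of it as the change of variables that ``integrates out'' the auxiliary field $B$.

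It then remains to compute the pulled-back action $\Phi^*S_{\mr{FO}}$ and match it with the action of second-order Yang--Mills coupled to $\mc E_B$. The purely bosonic part is the completion-of-the-square computation displayed above, which produces $\frac12 S_{\mr{SO}}(A)-\frac12\langle B,B\rangle$ up to the term proportional to $\int_{\RR^4}\mathrm{tr}(F_A\wedge F_A)$; the latter is a total derivative and so vanishes identically on the formal neighbourhood of $A=0$, while the overall constant is absorbed into a rescaling of the coupling $g$, which changes nothing classically since the classical theory does not depend on $g\neq 0$. The term $-\frac12\langle B,B\rangle$ together with the $B$--$B^*$ piece of the free differential is exactly the action of $\mc E_B$. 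The matter term $\mu(\psi,(\sd d_A+m)\psi)$ contains no $B$ and is fixed by $\Phi$, so it appears verbatim on both sides and imposes no further constraint. Finally one checks that $\Phi$ matches the ghost/antifield parts of the two $L_\infty$-structures; the one apparent clash --- that $B$ does not transform under the gauge action in first-order Yang--Mills whereas $\Omega^0(\RR^4;\gg)$ acts on both summands of $\mc E_B$ --- is reconciled by the nonlinear part of $\Phi$, since $(F_A)_+$ transforms covariantly. Assembling these observations yields a strict isomorphism of classical field theories, hence in particular a homotopy equivalence.

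I expect this last step --- verifying carefully that $\Phi$ is a symplectomorphism of the \emph{full} BV complex, ghosts and antighosts included, intertwining every bracket with all transformation laws matching --- to be the main obstacle; it is precisely the content of Costello's lemma, and for the pure Yang--Mills summand it can simply be cited from \cite[Chapter 6, Lemma 3.2.1]{CostelloBook}. The new input is the essentially immediate observation that the matter sector is preserved by $\Phi$ and enters the action identically in both pictures, so that coupling to fermions introduces no new obstruction.
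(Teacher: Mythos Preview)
Your proposal is correct and follows essentially the same strategy as the paper (which simply cites Costello): the paper describes Costello's argument as ``writing down an explicit path $S(t)$ in the space of local action functionals \ldots\ generated by flowing along a vector field,'' and your canonical transformation $\Phi$ generated by the Hamiltonian $\int\langle (F_A)_+, B^*\rangle$ is precisely the time-$1$ map of that flow. Your additional observation that the matter sector is untouched by $\Phi$ (since neither $B$ nor $A^*$ appears in the fermionic action or its ghost terms) is exactly the point needed to extend Costello's pure-gauge argument, and the paper makes no further claim beyond this.

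The only presentational difference is that you package the result as a strict isomorphism and then pass to a homotopy equivalence, whereas Costello exhibits the $1$-simplex directly; to make your final inference airtight you should note explicitly that $\Phi$ is connected to the identity through the Hamiltonian flow $\Phi_t$, so that $S(t)=\Phi_t^*S_{\mr{FO}}$ furnishes the required path in the simplicial set of classical field theories.
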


Costello proves this by identifying the simplicial set of classical field theories for a fixed space of fields with the simplicial set of local action functionals on those fields.  He then writes down an explicit path $S(t)$ in the space of local action functionals between $S_{\mr{FO}}$ and $S_{\mr{SO}} - 2\langle B, B \rangle$, generated by flowing along a vector field.

Using the arguments in Section \ref{beta_function_abstract_section}, in particular by applying Corollary \ref{one_loop_beta_classically_invariant_cor}, we can deduce the following result.

\begin{corollary}\label{first_second_order_quantum_equiv_cor}
First-order and second-order Yang--Mills theory have the same one-loop $\beta$-function, potentially up to a linear reparametrization $\alpha \in \OO_{\mr{loc}}(\mc E)$.
\end{corollary}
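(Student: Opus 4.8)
The plan is to read off the statement from Corollary~\ref{one_loop_beta_classically_invariant_cor}, taking the classical homotopy equivalence of Lemma~\ref{classical_first_second_order_equiv_lemma} as the only real input. That lemma exhibits first-order Yang--Mills as homotopy equivalent, as a classical field theory on $\RR^4$, to second-order Yang--Mills coupled to the abelian theory $\mc E_B = (\Omega^2_+(\RR^4;\gg) \xrightarrow{-\id} \Omega^2_+(\RR^4;\gg))$. Since the underlying complex of $\mc E_B$ is acyclic and its $(-3)$-shifted symplectic pairing identifies the two copies, $\mc E_B$ is a contractible classical field theory, so the coupled theory is an equally good model for ``second-order Yang--Mills''; it is this model -- rather than the bare second-order theory, which does not admit a gauge fixing operator -- that carries the perturbative quantization whose one-loop $\beta$-function is being compared.

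First I would verify the hypotheses of Corollary~\ref{one_loop_beta_classically_invariant_cor}. With the fermion mass term dropped, both first-order Yang--Mills and the coupled second-order theory are translation-invariant and classically scale-invariant on $\RR^4$: assigning $A$, $B$ and $\psi$ their standard scaling weights (respectively $-1$, $-2$ and $-3/2$ in the conventions of Section~\ref{classical_RG_section}) makes every term of $S_{\mr{FO}}$ -- namely $\int_{\RR^4}\langle F_A,B\rangle$, $\int_{\RR^4}\|B\|^2$ and $\int_{\RR^4}\mu(\psi,\sd d_A\psi)$ -- of scaling dimension zero, and the same weights make $S_{\mr{SO}}$ scale-invariant. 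Next, both theories admit non-zero renormalizable quantizations: for first-order Yang--Mills with matter this is the content of Section~\ref{BV_section} (extending Costello's theorem), and for the coupled second-order theory one produces one by lifting the classical homotopy of Lemma~\ref{classical_first_second_order_equiv_lemma} to a path of renormalizable quantum field theories via \cite[Chapter 4, Theorem 1.5.1]{CostelloBook}, starting from the first-order quantization.

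With these inputs in place, Corollary~\ref{one_loop_beta_classically_invariant_cor} applies verbatim and yields that the one-loop $\beta$-functions of first-order Yang--Mills and of the coupled second-order theory differ by a linear map $\alpha \in \GL(\OO_{\mr{loc}}(\mc E))$, which is the assertion. The step I expect to require the most care is not the invocation of Corollary~\ref{one_loop_beta_classically_invariant_cor} but the bookkeeping that makes ``the one-loop $\beta$-function of second-order Yang--Mills'' a well-posed object in the absence of a gauge fixing operator for the bare theory: one should be explicit that it is defined through the contractible extension by $\mc E_B$ (equivalently, through the first-order model), and note that any other choice of contractible extension is related to this one by a further element of $\GL(\OO_{\mr{loc}}(\mc E))$, which is absorbed into $\alpha$. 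Granting that, the corollary follows immediately.
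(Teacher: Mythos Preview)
Your argument tracks the paper's for the main step: both invoke Lemma~\ref{classical_first_second_order_equiv_lemma} together with Corollary~\ref{one_loop_beta_classically_invariant_cor} to conclude that first-order Yang--Mills and second-order Yang--Mills coupled to the trivial theory $\mc E_B$ have the same one-loop $\beta$-function up to a linear reparametrization. The divergence is in the final step, passing from the coupled theory to bare second-order Yang--Mills. You handle this definitionally --- since the bare second-order theory admits no gauge fixing operator in this framework, you take its $\beta$-function to \emph{mean} that of the coupled model, and remark that any other contractible extension would differ only by a further element of $\GL(\OO_{\mr{loc}}(\mc E))$. The paper instead supplies a concrete diagrammatic argument: the only new interaction vertex introduced by coupling to $\mc E_B$ is of type $cBB^\vee$; since $B^\vee$ has nonzero cohomological degree it cannot appear as an external leg, and since $B^\vee$ does not propagate the vertex cannot sit on an internal edge either, so it contributes to no Feynman diagrams. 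By Proposition~\ref{beta_function_counterterm_prop} the one-loop $\beta$-function is computed from counterterms, hence is unchanged by adding $\mc E_B$. Your move is economical and correctly flags that any comparison must pass through such an extension; the paper's argument, however, establishes something sharper --- that the auxiliary sector literally produces no one-loop divergences --- and this is what underwrites doing the entire calculation of Section~\ref{one_loop_section} with the first-order propagators alone, without having to separately account for the $\mc E_B$ sector.
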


\begin{proof}
First, note that first-order Yang--Mills theory admits a renormalizable quantization: Costello proved this in the pure gauge case, and we'll see that this result still holds with matter in the cohomology calculation Corollary \ref{cohomology_calculation_cor} below. The combination of Lemma \ref{classical_first_second_order_equiv_lemma} with Corollary \ref{one_loop_beta_classically_invariant_cor} implies that first-order Yang--Mills has the same one-loop $\beta$-function as second-order Yang--Mills coupled to a free 2-form field, up to linear reparametrization.  

It remains to argue that the free 2-form field doesn't contribute to the one-loop $\beta$-function.  We can see this using the Feynman rules for Yang--Mills theory.  Indeed, the only new interaction term appearing in the coupled theory is of the form $cBB^\vee$ coming from the action of the gauge symmetry on the kinetic term for the $B$ field.  Since $B^\vee$ is not of degree 0 it can't occur as an external leg.  However, the field $B^\vee$ doesn't propagate -- it doesn't appear in coefficient of the propagator as we'll see in Section \ref{YM_prequantization_section} below, so the $cBB^\vee$ vertex can't occur in any diagrams.  Therefore, by Proposition \ref{beta_function_counterterm_prop}, which says that the one-loop $\beta$-function is computed as a counterterm, the addition of the free 2-form field doesn't alter the counterterms so doesn't alter the one-loop $\beta$-function.
\end{proof}

\subsection{The Obstruction-Deformation Complex for First-Order Yang--Mills} \label{deformation_complex_section}
Having introduced Yang--Mills theory and its first-order formalism at the classical level, we'll describe the algebra of Poincar\'e-invariant quantizations.  According to Costello's formalism for perturbative field theory, we can do this by computing the cohomology of the space $\OO_{\mr{loc}}(\mc E)^{\RR^4 \ltimes \Spin(4)}$ of Poincar\'e-invariant local functionals.

Costello computed this cohomology for pure Yang--Mills theory in his book on perturbative field theory \cite[Chapter 6, Theorem 5.0.1]{CostelloBook}.  In this section we'll prove that his calculation also applies to Yang--Mills theory with arbitrary matter.  We'll use the following result from Costello.

\begin{lemma}[{\cite[Chapter 5, Lemma 6.7.1]{CostelloBook}}] \label{lemma_describing_invariant_local_action_functionals}
 For any vector bundle $E$ on $\RR^n$ with sheaf of sections $\mc E$, there is a canonical $\GL_n(\RR)$ invariant quasi-isomorphism
 \[ (\OO_{\mr{loc}}(\mc E) /\RR)^{\RR^n} \iso (\OO(J(E))_0/\RR) \otimes^{\bb L}_{\RR[\dd_1, \ldots, \dd_n]} |\det|(\RR^n)\]
where $J(E)$ is the jet bundle of $E$, $\OO(J(E))_0$ is the fiber of $\OO(J(E))$ at 0, $\RR[\dd_1, \ldots \dd_n]$ acts on $(\OO(J(E))_0/\RR)$ by derivations, and $|\det|(\RR^n)$ is the trivial representation of $\RR[\dd_1, \ldots \dd_n]$, acted on by $\GL_n(\RR)$ by the absolute value of the determinant. 
 \end{lemma}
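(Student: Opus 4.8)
The plan is to follow Costello's argument \cite[Chapter 5]{CostelloBook}: unwind the definition of $\oloc(\cE)$, use translation-invariance to push everything down to the fibre of the jet bundle over $0 \in \RR^n$, and then recognise the answer as a Koszul complex whose acyclicity is a form of the algebraic Poincar\'e lemma of the variational bicomplex.

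First I would unwind $\oloc(\cE) = \mr{Dens}_{\RR^n} \otimes_{D_{\RR^n}} \widehat\Sym^\bullet_{C^\infty_{\RR^n}}(\mr{Jet}(\cE)^\vee)$ and impose $\RR^n$-translation-invariance. The structural input here is the $D$-module structure of the jet bundle on affine space: $\mr{Jet}(E)$ is the cofree $D_{\RR^n}$-module cogenerated by $E$, and its Grothendieck connection is translation-invariant, so taking fibres at the origin identifies the action of $D_{\RR^n}$ on $\widehat\Sym^\bullet_{C^\infty_{\RR^n}}(\mr{Jet}(\cE)^\vee)$ with the action of the constant-coefficient operators $\RR[\dd_1, \ldots, \dd_n]$ (the total-derivative lifts of $\dd/\dd x_i$) on the algebra $\OO(J(E))_0$. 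Translation-invariant densities, meanwhile, form the line $|\det|(\RR^n)$, on which every $\dd_i$ acts by zero. Feeding these identifications into the tensor product and quotienting by the constant functionals presents $(\oloc(\cE)/\RR)^{\RR^n}$ as the \emph{underived} tensor product $(\OO(J(E))_0/\RR) \otimes_{\RR[\dd_1, \ldots, \dd_n]} |\det|(\RR^n)$.

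The substance of the lemma is that this underived tensor product already computes the derived one. Since $\RR[\dd_1, \ldots, \dd_n]$ is a polynomial ring and $|\det|(\RR^n) \cong \RR[\dd_1, \ldots, \dd_n]/(\dd_1, \ldots, \dd_n)$ as a module --- a single generator annihilated by the regular sequence $\dd_1, \ldots, \dd_n$ --- its Koszul complex is a free resolution, and so the derived tensor product is computed by $\mathrm{Kos}(\dd_1, \ldots, \dd_n; \OO(J(E))_0/\RR)$. I must therefore show this Koszul complex is acyclic in positive homological degree, for then it is quasi-isomorphic to its degree-zero homology, which by the preceding paragraph is exactly $(\oloc(\cE)/\RR)^{\RR^n}$. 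Dually, this Koszul complex is the translation-invariant part of the horizontal complex of the variational bicomplex of $E$, and the required acyclicity is precisely the algebraic Poincar\'e lemma for that complex --- a classical fact, proved by an explicit horizontal homotopy operator applied degree by degree in the symmetric-power grading with coefficients in the finite-dimensional space $E_0$. I expect this acyclicity to be the one genuinely non-formal step of the argument; the passage to completed projective tensor products of nuclear Fr\'echet spaces is handled exactly as in \cite{CostelloGwilliam1}, and causes no trouble because the homotopy operator is continuous and respects the symmetric-power filtration.

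Finally, I would check $\GL_n(\RR)$-equivariance. Each ingredient --- the fibre-at-$0$ functor on jets, the action of the constant-coefficient operators $\RR[\dd_1, \ldots, \dd_n]$, the Koszul complex of that regular sequence, and the homotopy operator --- is manifestly natural for the $\GL_n(\RR)$-action on $\RR^n$ and the induced action on jet data, and the only nontrivial one-dimensional representation that enters is the line of translation-invariant densities, which transforms by the absolute value of the determinant. That is exactly the twist $|\det|(\RR^n)$ on the right-hand side, so the quasi-isomorphism is $\GL_n(\RR)$-equivariant, which completes the argument.
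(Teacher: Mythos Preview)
The paper does not prove this lemma; it is quoted from Costello's book \cite[Chapter 5, Lemma 6.7.1]{CostelloBook} and used as a black box. Your sketch correctly reconstructs Costello's own argument: translation-invariance reduces the $D_{\RR^n}$-module tensor product to an $\RR[\dd_1,\ldots,\dd_n]$-module tensor product at the fibre over $0$, the Koszul resolution of $|\det|(\RR^n)$ identifies the derived tensor product with the (translation-invariant) horizontal de Rham complex of the jet bundle, and the algebraic Poincar\'e lemma for that complex gives the required acyclicity in positive degree. The $\GL_n(\RR)$-equivariance check is as you describe.

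One small point of phrasing: you write that ``the underived tensor product already computes the derived one'', which is the correct conclusion, but be careful that this is the content of the lemma rather than an input --- the quasi-isomorphism in the statement goes from a complex concentrated in a single degree (the underived tensor product) to the full derived tensor product, and the force of the Poincar\'e lemma is precisely that the higher $\mathrm{Tor}$ groups vanish. Your sketch makes this clear in the body, so this is only a matter of emphasis.
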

 
We'll apply this result by computing $(\OO(J(E))_0/\RR)$ for Yang--Mills theory with arbitrary matter, and showing that after taking $\Spin(4)$-invariants the result is independent of the choice of matter representation.  We can therefore use Costello's calculation of the $\Spin(4)$-invariants on the right-hand side to obtain the desired result.
 
We'll follow Costello's notation.  Let $\mc Y \otimes \gg$ denote the pure gauge part of the first-order Yang--Mills BV complex, so
\[\mc Y = \xymatrix{
\Omega^0(\RR^4) \ar[r] &\Omega^1(\RR^4) \ar[r] &\Omega^2_+(\RR^4) \\
&\Omega^2_+(\RR^4) \ar[ur] \ar[r] &\Omega^3(\RR^4) \ar[r] &\Omega^4(\RR^4)
}\]
placed in degrees $-1$ to $2$.  Let $\widehat{\mc Y}$ denote the formal completion of $\mc Y$ at the origin, so concretely
\[\widehat{\mc Y} = \xymatrix{
\RR[[x_1, x_2, x_3, x_4]] \ar[r] &\RR^4[[x_1, x_2, x_3, x_4]] \ar[r] &\wedge^2_+(\RR^4)[[x_1, x_2, x_3, x_4]] \\
&\wedge^2_+(\RR^4)[[x_1, x_2, x_3, x_4]] \ar[ur] \ar[r] &\wedge^3(\RR^4)[[x_1, x_2, x_3, x_4]] \ar[r] &\wedge^4(\RR^4)[[x_1, x_2, x_3, x_4]].
}\]
Similarly we write $\mc S$ for the fermionic part of the Yang--Mills BV complex, and $\widehat{\mc S}$ for its formal completion at the origin.  The algebra $\OO(J(E)_0/\RR)$ appearing in Lemma \ref{lemma_describing_invariant_local_action_functionals} is the same as the reduced Gel'fand--Fuchs cochains of the completed algebra $\widehat{\mc Y} \otimes \gg \ltimes \widehat{\mc S} \otimes V$.  As such, it will be useful to compute the reduced Gel'fand--Fuchs cohomology.

The cohomology groups admit an additional grading, by scaling dimension as in Definition \ref{scaling_dim_def}.  We'll denote the $j^{\mr{th}}$ graded piece by $H_{\mr{red}}^{\bullet, j}$.

\begin{lemma}
\begin{align*}
H_{\mr{red}}^{i,0}((\widehat{\mc Y} \otimes \gg) \ltimes (\widehat{\mc S} \otimes V)) &\iso H_{\mr{red}}^i(\gg) \\
H_{\mr{red}}^{i,-1}((\widehat{\mc Y} \otimes \gg) \ltimes (\widehat{\mc S} \otimes V)) &\iso 0 \\
H_{\mr{red}}^{i,-2}((\widehat{\mc Y} \otimes \gg) \ltimes (\widehat{\mc S} \otimes V)) &\iso 0 \\
H_{\mr{red}}^{i,-3}((\widehat{\mc Y} \otimes \gg) \ltimes (\widehat{\mc S} \otimes V)) &\iso 0 \\
H_{\mr{red}}^{i,-4}((\widehat{\mc Y} \otimes \gg) \ltimes (\widehat{\mc S} \otimes V)) &\iso H^i(\gg; \sym^2(\gg^\vee \otimes \wedge^2\RR^4)) \oplus H^i(\gg;\sym^2(V^\vee)) \otimes (K(1) \otimes K(2))^\vee
\end{align*}
where $K(i)$ is the scaling dimension $i + 1/2$ part of the complex $\widehat{\mc S}$, which is concentrated in degree 0 since the Dirac operator is surjective for each fixed scaling dimension.
\end{lemma}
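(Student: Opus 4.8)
The strategy I would follow is to upgrade Costello's pure--gauge Gel'fand--Fuchs calculation \cite[Chapter 6, Theorem 5.0.1]{CostelloBook} by feeding in the matter field, using crucially that the fermionic part of the first--order Yang--Mills BV complex is an \emph{abelian} ideal. By Lemma~\ref{lemma_describing_invariant_local_action_functionals} and the identification of $\OO(J(E))_0/\RR$ with reduced Gel'fand--Fuchs cochains explained above, the statement is a computation, in low scaling dimension, of $H^\bullet_{\mr{red}}$ of the $L_\infty$-algebra $(\widehat{\mc Y}\otimes\gg)\ltimes(\widehat{\mc S}\otimes V)$; write $C^\bullet$ for its reduced Chevalley--Eilenberg complex. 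Since $\widehat{\mc S}\otimes V$ is abelian and the $L_\infty$-structure restricts on it to a $\widehat{\mc Y}\otimes\gg$-module action together with the internal Dirac differential, the Chevalley--Eilenberg differential preserves the number $p$ of $\widehat{\mc S}\otimes V$-cochains, so $C^\bullet=\bigoplus_{p\ge 0}C^\bullet_{(p)}$ as complexes. Each fermionic generator carries half-integral scaling dimension, hence $C^\bullet_{(p)}$ lives in integral scaling dimension exactly when $p$ is even. For $p=0$ one recovers Costello's pure Yang--Mills answer exactly, giving $H^i_{\mr{red}}(\gg)$ in scaling dimension $0$, zero in dimensions $-1,-2,-3$, and $H^i(\gg;\sym^2(\gg^\vee\otimes\wedge^2\RR^4))$ in dimension $-4$. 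A scaling-dimension count shows a bilinear built from the cohomology classes $K(i)^\vee,K(j)^\vee$ has dimension $-(i+j+1)$, so (using $K(0)=0$, and that the purely gauge cochains contribute non-positive scaling dimension) $C^\bullet_{(2)}$ contributes nothing above dimension $-3$ and $C^\bullet_{(\ge 4)}$ nothing above $-6$; this reduces everything to $p\in\{0,2\}$ and in particular shows the matter cannot disturb the scaling-dimension-$0$ part.

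For the $p=2$ summand I would first take cohomology with respect to the Dirac differential. Since $\widehat{\mc S}$ is a two-term complex whose differential is surjective on each scaling-homogeneous piece, its cohomology is $\bigoplus_{i\ge1}K(i)$ concentrated in a single cohomological degree, and Künneth over $\RR$ (with $S_2$-invariants, which is exact in characteristic zero) identifies the Dirac-cohomology of $\Sym^2$ of the fermionic cochains with $\Sym^2\bigl(\bigoplus_{i\ge1}K(i)^\vee\otimes V^\vee\bigr)$, taken in the appropriate super sense, again in a single degree. Because one leg of the resulting double complex has cohomology concentrated in one degree the associated spectral sequence collapses, so $H^\bullet(C^\bullet_{(2)})\iso H^\bullet\bigl(\widehat{\mc Y}\otimes\gg;\ \Sym^2(\bigoplus_{i\ge1}K(i)^\vee\otimes V^\vee)\bigr)$. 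A useful simplification is that, after passing to Dirac-cohomology, the $1$-form gauge field acts trivially on the $K(i)$ — its action takes values in the now-acyclic antifield copy of $\widehat{\mc S}$ — so the surviving action of $\widehat{\mc Y}\otimes\gg$ on $\bigoplus_{i}K(i)^\vee$ is carried by the ghost, i.e. is just the $\gg$-action on the $V^\vee$-factors (with $\gg$ trivial on the spinor factors).

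What remains — and is the heart of the argument — is to evaluate this $H^\bullet(\widehat{\mc Y}\otimes\gg;-)$ with fermion-bilinear coefficients in scaling dimensions $0$ down to $-4$, i.e. to rerun Costello's de~Rham--type acyclicity analysis of $\widehat{\mc Y}$ with non-trivial coefficients and half-integral gradings. One must show that in scaling dimensions $-1,-2,-3$ the $p=2$ contribution is annihilated: the fermion-bilinear classes at dimension $-3$ — the potentially dangerous ones, built from $K(1)^\vee\otimes K(1)^\vee$ — must turn out to be coboundaries, paralleling the cancellation that makes the pure-gauge cohomology vanish in those degrees; and in dimension $-4$ exactly one bilinear survives, the pairing of $K(1)^\vee$ with $K(2)^\vee$, whereupon the super-symmetrization together with the invariant spinor pairing forces the two copies of $V^\vee$ to combine into $\sym^2(V^\vee)$, and taking $\gg$-cohomology of the constant-ghost part produces $H^i(\gg;\sym^2(V^\vee))\otimes(K(1)\otimes K(2))^\vee$. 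Assembled with the $p=0$ contributions this is precisely the claimed answer. I expect the verification that the low-dimension matter classes are exact to be by far the most delicate point: it is exactly where the matter representation could in principle enlarge the obstruction-deformation complex, and where one must be most careful about the interplay between the Dirac-cohomology, the acyclicity of $\widehat{\mc Y}$, and the scaling grading.
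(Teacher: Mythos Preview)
Your approach is essentially the paper's own: both run the Hochschild--Serre spectral sequence for the semidirect product $(\widehat{\mc Y}\otimes\gg)\ltimes(\widehat{\mc S}\otimes V)$ --- your decomposition by fermion-number $p$ is exactly that filtration --- first pass to Dirac cohomology on the abelian ideal, and then feed the resulting coefficients into Costello's pure-gauge computation on the $\widehat{\mc Y}\otimes\gg$ side. Your remark that the gauge field acts trivially on the $K(i)$ after taking Dirac cohomology is precisely what makes those coefficients honest $\gg$-modules and lets Costello's calculation apply; the paper uses this implicitly when it writes down its $E_2$ page as $H^{i,j}(\widehat{\mc Y}\otimes\gg;\,\sym^k((H^\bullet(\widehat{\mc S})\otimes V)^\vee))$.

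The real difference is one of emphasis. You single out the $K(1)^\vee\otimes K(1)^\vee$ bilinear at scaling dimension $-3$ as the ``heart of the argument'' and anticipate a delicate hand-verification that it is a coboundary. The paper does none of this: it splits the $E_2$ page by bidegree $(j_1,j_2)$, cites Costello's Lemma~7.0.2 from Chapter~6 to assert that for $j\ge -4$ only $(0,0)$, $(-4,0)$, $(0,-4)$ survive, computes the new $(0,-4)$ piece by a short dimension count, and notes degeneration at $E_2$ because the contributions at scaling dimension $0$ and $-4$ each sit in a single $\sym$-degree. In particular the $(j_1,j_2)=(0,-3)$ case --- your ``delicate point'' --- is dispatched entirely by appeal to Costello's lemma, not by a separate argument. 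So you are not missing a new idea; you have the complete structure of the proof. What you should do, rather than rerun the $\widehat{\mc Y}$-acyclicity analysis with nontrivial coefficients from scratch, is examine exactly what Costello's Lemma~7.0.2 provides in the presence of $\gg$-module coefficients, since that is where the paper rests the entire weight of the low-dimension vanishing.
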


\begin{proof}
We'll use Costello's lemma \cite[Chapter 6, Lemma 7.0.2]{CostelloBook} along with the Hochschild--Serre spectral sequence for the semidirect product $(\widehat{\mc Y} \otimes \gg) \ltimes (\widehat{\mc S} \otimes V)$.  This spectral sequence converges to the desired cohomology group, and its $E_2$ page is
\[(E_2^{i,k})^j = H^{i,j}((\widehat{\mc Y} \otimes \gg); \sym^k((H^\bullet(\widehat{\mc S}) \otimes V)^\vee))\]
where the $j$ indexes scaling dimension (and where $H^\bullet(\widehat{\mc S})$ now just indicates the cohomology as a cochain complex).  We can divide the total scaling dimension into the scaling dimension of the two parts, by writing
\[(E_2^{i,k})^j = \bigoplus_{j_1 + j_2 = j}H^{i}((\widehat{\mc Y} \otimes \gg)(j_1); \sym^k((H^\bullet(\widehat{\mc S}) \otimes V)^\vee)(j_2)).\]
By Costello's result, these are only non-trivial (for $j \ge -4$) if $(j_1,j_2) = (0,0), (-4,0)$ or $(0,-4)$.  Costello computed the first two of these, so we need only compute the third, i.e.
\[H^i(\gg; \sym^k((H^\bullet(\widehat{\mc S}) \otimes V)^\vee)(-4)).\]
Since dimensional analysis tells us that fundamental fermions have scaling dimension $3/2$, degree $k$ monomials in $\widehat {\mc S}$ have scaling dimension $3/2 + k$, so in order to have total scaling dimension $-4$ it suffices to consider elements in $(K(1) \otimes K(2))^\vee \otimes \sym^2(V^\vee) \sub \sym^2((H^\bullet(\widehat{\mc S}) \otimes V)^\vee$.  Since all scaling dimension $0$ and $-4$ elements are concentrated in a single $\sym$-degree and the differentials preserve scaling dimension, there are no differentials in the Hochschild--Serre spectral sequence, and the result follows.
\end{proof}

\begin{corollary} \label{cohomology_calculation_cor}
The cohomology of the space of Poincar\'e-invariant local action functionals in Yang--Mills theory with arbitrary matter is equivalent to the cohomology of the space of Poincar\'e-invariant local action functionals in pure Yang--Mills.
\end{corollary}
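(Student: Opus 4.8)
The plan is to bootstrap the Corollary from the two Lemmas above together with a short representation-theoretic computation. First I would take $\Spin(4)$-invariants of Costello's quasi-isomorphism (Lemma \ref{lemma_describing_invariant_local_action_functionals}); since $\Spin(4)$ is reductive, the functor of $\Spin(4)$-invariants is exact and commutes with the formation of cohomology, so it suffices to compare, in the two cases (with and without matter), the $\Spin(4)$-invariants of the cohomology of $(\OO(J(E))_0/\RR) \otimes^{\bb L}_{\RR[\dd_1,\ldots,\dd_4]} |\det|(\RR^4)$. Resolving $|\det|(\RR^4)$ by the Koszul complex over $\RR[\dd_1,\ldots,\dd_4]$ (a complex of length four whose associated graded is built from the exterior powers $\wedge^k(\RR^4)$), one sees that the relevant cohomology is assembled from the reduced Gel'fand--Fuchs groups $H_{\mr{red}}^{\bullet,j}((\widehat{\cY}\otimes\gg) \ltimes (\widehat{\cS}\otimes V))$ for $j = 0,-1,-2,-3,-4$, tensored with these exterior powers, and the preceding Lemma computes precisely these groups.

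Next I would compare term by term. By the preceding Lemma, $H_{\mr{red}}^{\bullet,0} \iso H_{\mr{red}}^\bullet(\gg)$ and $H_{\mr{red}}^{\bullet,-1} = H_{\mr{red}}^{\bullet,-2} = H_{\mr{red}}^{\bullet,-3} = 0$ independently of the matter representation $V$, so these contributions already agree with pure Yang--Mills. In scaling dimension $j = -4$ the group is $H^\bullet(\gg;\sym^2(\gg^\vee \otimes \wedge^2\RR^4)) \oplus \big(H^\bullet(\gg;\sym^2(V^\vee)) \otimes (K(1)\otimes K(2))^\vee\big)$, whose first summand is exactly Costello's pure-gauge answer. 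Thus the Corollary reduces to showing that the matter-dependent summand contributes nothing after the derived tensor product and passage to $\Spin(4)$-invariants. The key point here is that $j = -4$ is an extreme scaling dimension, so it is paired with the extreme term ($k = 0$ or $k = 4$) of the Koszul complex, whose exterior-power factor $\wedge^0(\RR^4)$ or $\wedge^4(\RR^4)$ — like $|\det|(\RR^4)$ itself, which is trivial on $\Spin(4) \subset \mr{SO}(4)$ — is the trivial $\Spin(4)$-representation; tensoring with it cannot create invariants, and since $H^\bullet(\gg;\sym^2(V^\vee))$ carries no $\Spin(4)$-action, it is enough to check that $K(1)\otimes K(2)$ contains no copy of the trivial $\Spin(4)$-representation.

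This last check, which I expect to be the main content of the argument, I would carry out by identifying $K(1)$ and $K(2)$ explicitly as $\Spin(4)$-representations: $K(1)$ is the space of constant spinors, so $K(1) \iso S_+ \oplus S_-$, while $K(2)$, the kernel of Clifford multiplication $S \otimes (\RR^4)^\vee \to S$ on linear spinors, works out to $(\sym^2 S_+ \otimes S_-) \oplus (S_+ \otimes \sym^2 S_-)$. Writing $\Spin(4) = \SU(2) \times \SU(2)$ and expanding the tensor product, one finds that every irreducible summand of $K(1)\otimes K(2)$ has the shape (half-integral spin, integral spin) or (integral spin, half-integral spin), so none is the trivial representation; for instance, trivializing the first $\SU(2)$ forces the $S_+$ summand of $K(1)$ and the $S_+ \otimes \sym^2 S_-$ summand of $K(2)$, whose product then carries second-$\SU(2)$ spin $1$, a contradiction. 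Hence the matter summand dies upon taking $\Spin(4)$-invariants, and the $\Spin(4)$-invariant cohomology coincides with that of pure Yang--Mills, which is Costello's computation. The genuine obstacle is therefore not conceptual but the bookkeeping: confirming that the matter term is forced into the extreme Koszul degree — so that it meets only the trivial $\Spin(4)$-representation and not, say, $\wedge^2(\RR^4) = \wedge^2_+ \oplus \wedge^2_-$ — and carrying out the explicit spinor decomposition above.
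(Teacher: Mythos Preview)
Your overall strategy matches the paper's: the key step is to show that the matter summand $H^\bullet(\gg;\sym^2(V^\vee)) \otimes (K(1)\otimes K(2))^\vee$ has no $\Spin(4)$-invariants, and your identifications $K(1) \iso S_+ \oplus S_-$ and $K(2) \iso (\sym^2 S_+ \otimes S_-) \oplus (S_+ \otimes \sym^2 S_-)$ agree with the paper's. The paper is terser about the surrounding Koszul/derived-tensor-product bookkeeping and simply asserts that it suffices to check this summand; your additional discussion there is reasonable.

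However, your parity claim is wrong. Writing $\Spin(4)$-representations as pairs of $\SU(2)$-spins, $K(1)$ has pieces $(\tfrac12,0)$ and $(0,\tfrac12)$ while $K(2)$ has pieces $(1,\tfrac12)$ and $(\tfrac12,1)$. In the tensor product the parities of the two factors are always \emph{equal}: every irreducible summand of $K(1)\otimes K(2)$ is either (integral, integral) or (half-integral, half-integral), not mixed as you state. So parity alone does not exclude the trivial representation $(0,0)$. Your ``for instance'' sentence is in fact the correct full argument and not merely an illustration: the only way to get first-$\SU(2)$ spin $0$ is $(\tfrac12,0)\otimes(\tfrac12,1)$, whose second-$\SU(2)$ content is spin $1$; by symmetry the same obstruction occurs with the factors swapped. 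Either present it as that case analysis, or do what the paper does and write out the full Clebsch--Gordan decomposition of $K(1)\otimes K(2)$ explicitly---just don't dress it up as a parity argument that doesn't hold.
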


\begin{proof}
It suffices to observe that the new term, $H^i(\gg;\sym^2(V^\vee)) \otimes (K(1) \otimes K(2))^\vee$ in the Lie algebra cohomology of the completed space of fields admits no $\spin(4)$ invariants.  The group $\spin(4)$ acts entirely on the factor $(K(1) \otimes K(2))^\vee$, so we only need to decompose this into a sum of irreducible representations and prove that there's no trivial summand.  We observe
\[K(1) \iso S_+ \oplus S_- \text{ and } K(2) \iso (S_+ \otimes \sym^2S_-) \oplus (S_- \otimes \sym^2S_+)\]
so
\begin{align*}
 K(1) \otimes K(2) &\iso (S_+ \oplus S_-) \otimes \left((S_+ \otimes \sym^2S_-) \oplus (S_- \otimes \sym^2S_+)\right) \\
 &\iso (S_+ \otimes S_+ \otimes \sym^2S_-) \oplus (S_- \otimes S_+ \otimes \sym^2S_-) \oplus (S_+ \otimes S_- \otimes \sym^2S_+) \oplus (S_- \otimes S_- \otimes \sym^2S_+) \\
 &\iso \sym^2S_- \oplus (\sym^2S_+ \otimes \sym^2S_-) \oplus (S_- \otimes S_+) \oplus (\sym^3S_- \otimes S_+)\\
 &\quad \oplus (S_+ \otimes S_-) \oplus (S_+ \otimes S_-) \oplus (\sym^3S_+ \otimes S_-) \oplus \sym^2 S_+ \oplus (\sym^2S_- \otimes \sym^2S_+)
\end{align*}
which has no trivial summand, as required.
\end{proof}

\subsection{Quantization of First-Order Yang--Mills} \label{YM_prequantization_section}
In order to compute counterterms in first-order Yang--Mills theory, we'll need to begin by computing the propagators in the quantum field theory.  There are, we'll argue, four summands in the total propagator -- arising from four summands in the tensor square of the BV complex -- relevant for the one-loop divergences.  Each of these can be associated to a pair of particles, incoming and outgoing.

First, let's investigate the heat kernel in first-order Yang--Mills. As we described in Section \ref{regularization_section}, the heat kernel is obtained from the classical BV complex with its $(-1)$-shifted symplectic pairing as the integral kernel for the map $e^{t[Q,Q^{\GF}]}$.  Thus we must begin by describing a gauge fixing operator $Q^{\GF}$ and the associated BV Laplacian $[Q,Q^{\GF}]$.

\subsubsection{Gauge Fixing} \label{YM_gauge_fix_section} 
We will regularize first-order Yang--Mills using a gauge fixing operator and heat kernels coming from the associated generalized elliptic operator as sketched in Section \ref{regularization_section}. We define the gauge fixing operator $Q^{\GF}$ to be the following operator of degree $-1$ on the graded vector space of fields in first-order Yang--Mills theory
\[\xymatrix{
\Omega^0(\RR^4;\gg) & \Omega^1(\RR^4;\gg) \ar[l]_{\d^*} &\Omega^2_+(\RR^4;\gg) \ar[l]_{2\d^*} &\\
&\Omega^2_+(\RR^4;\gg)  &\Omega^3(\RR^4;\gg) \ar[l]_{2\d^*_+} &\Omega^4(\RR^4;\gg) \ar[l]_{\d^*}\\
&\Omega^0(\RR^4; S \otimes V)  &\Omega^0(\RR^4; S \otimes V). \ar[l]_{\sd \d - m} &
} . \]
In order to show that this defines a gauge fixing operator we must compute the operator $[Q,Q^{\GF}]$, and check that it's a generalized elliptic operator. In the pure gauge sector, it's the sum of two terms: the usual Laplacian on differential forms, plus a first-order operator $D_{\rm vert}$ defined by
\[\xymatrix{
\Omega^0(\RR^4;\gg) & \Omega^1(\RR^4;\gg)  &\Omega^2_+(\RR^4;\gg) &\\
&\Omega^2_+(\RR^4;\gg) \ar[u]_{-2\d^*}  &\Omega^3(\RR^4;\gg) \ar[u]_{-2\d^*_+} &\Omega^4(\RR^4;\gg).
}\]
Note that $D_{\rm vert}$ is essentially $-2$ times the $\d^*$ operator acting on the appropriate space of differential forms. 

Restricted to the fermions, the operator $[Q,Q^{\GF}]$ is clearly just the usual Laplacian -- obtained as the square of the Dirac operator -- minus the identity times $m^2$.  Therefore the total generalized Laplacian is the sum of two terms: 
\[
[Q,Q^{\GF}] = (\Delta_\Omega -m^2 \id_{\mr{matter}}) + D_{\mr{vert}}
\]
where $\Delta_\Omega$ is the usual Laplacian operator on differential forms, and $D_{\mr{vert}}$ is the vertical operator defined above.  This is clearly a generalized Laplacian, so our choice $Q^{\GF}$ was indeed a valid gauge fixing operator.

Next, we will write down the heat kernel associated to the generalized Laplacian $[Q,Q^{\GF}]$ above. It is obtained as the integral kernel $K_t \in \cE \tensor \cE$ for the operator $e^{-t [Q,Q^{\GF}]}$ with respect to the shifted symplectic pairing defining the classical theory. The element $K_t$ satisfies
\[\<K_t(x,y), \varphi(y)\>_y = \left( e^{-t[Q,Q^{\GF}]} \varphi\right)(x).\]
Because the symplectic pairing splits as a sum of symplectic pairings for the pure gauge sector and the pure matter sector we see that the heat kernel also splits as $K_t = K_t^{\mr{gauge}} + K_t^{\mr{matter}}$. We will compute these kernels separately in the next two sections. 

\subsubsection{Pure Gauge Sector} \label{YM_puregauge_section}
We have already noted that the pure gauge sector of Yang--Mills can be written as $\cY \tensor \gg$ where $\cY$ is the complex in Section \ref{deformation_complex_section}. Thus, we can view the heat kernel for the pure gauge sector $K^{\mr{gauge}}_t$ as a product of an analytic part $K_t^{\cY}$ and an algebraic part. In fact, the algebraic factor is simply the dual of the pairing $\kappa$ defining the symplectic structure. This is well-defined since $\kappa$ is non-degenerate, and we view it as an element $\kappa^\vee \in \gg \tensor \gg$. 

In order to write the analytic part of the heat kernel, we'll introduce some notation for a set of generators of $\Omega^2_+(\RR^4)$ as a $C^\infty(\RR^4)$-module.  For convenience, we fix a basis $x^1, x^2, x^3, x^4$ for $\RR^4$ and write $\{\sigma^{12}_x, \sigma^{13}_x, \sigma^{14}_x\}$ for the $C^\infty(\RR^4)$-basis $\{\d x^1 \wedge \d x^2 + \d x^3 \wedge \d x^4, \d x^1 \wedge \d x^3 - \d x^2 \wedge \d x^4, \d x^1 \wedge \d x^4 + \d x^2 \wedge \d x^3\}$ of $\Omega^2_+(\RR^4)$.  We'll use capital letters $I, J, K, \ldots$ for indices in the set $\{12, 13, 14\}$. Also, the classical BV complex for first-order Yang--Mills has two copies of $\Omega^2_+(\RR^4)$: one in degree 0 and one in degree 1.  In order to distinguish between these two spaces, we'll write $\{\sigma^{12}_x, \sigma^{13}_x, \sigma^{14}_x\}$ for the set of elements generating the copy in degree 0, and $\{\sigma'^{12}_x, \sigma'^{13}_x, \sigma'^{14}_x\}$ for those generating the copy in degree 1.

Recall that on the pure gauge sector the generalized Laplacian associated to our choice of a gauge fixing operator splits into two factors $[Q,Q^{\GF}] = \Delta_\Omega + D_{\rm vert}$. We'll see that the analytic heat kernel also splits into a sum 
\[K_t^{\cY} = K_t^{\Delta} + \Tilde{K}_t\]
where $K_t^\Delta$ is the heat kernel for the operator $\Delta_\Omega$ (this is because $D_{\rm vert}^2=0$, so $e^{-tD_{\rm vert}} =1 - tD_{\rm vert}$).  Let's calculate the two terms separately.
\begin{enumerate}
 \item First, let's describe the heat kernel for the Laplacian $\Delta_\Omega$.  The heat kernel for the usual Laplacian acting on functions on $\RR^4$ is well known: it has the form
\[k_t(x,y) = \frac{1}{(4 \pi t)^2} e^{-|x-y|^2/4 t} .\]
We can write the complex $\mc Y$ as the tensor product of the space of smooth functions on $\RR^4$ with a finite-dimensional complex $Y$.  Using this decomposition we can write the heat kernel for the operator $\Delta_\Omega$ in terms of the scalar heat kernel and the pairing on this finite-dimensional complex $Y$.  It has the form
\[K^\Delta_t(x,y) = k_t(x,y) \cdot (K_{AA^\vee} + K_{BB^\vee} + K_{cc^\vee})\]
where the components $K_{AA^\vee}$, $K_{BB^\vee}$, and $K_{cc^\vee}$ come from the different irreducible components of the symplectic pairing on $Y$. Explicitly, we find
\begin{align*}
K_{AA^\vee} &= \d x^j \otimes \ast \d y^j + \ast \d x^j \otimes \d y^j , \\
K_{BB^\vee} &= -\frac 12 \left(\sigma^I \otimes \sigma'^I + \sigma'^I \otimes \sigma^I\right), \\
K_{cc^\vee} &= -(\dvol_x \otimes 1 + 1 \otimes \dvol_y )
\end{align*}
where we sum over repeated indices as usual. 

\item Now, one can understand the second factor $\Tilde{K}_t$ in terms of the first factor in the following way. Note that $D_{\rm vert}$ manifestly squares to zero and commutes with the usual Laplacian acting on forms. Thus, for a fixed field $\varphi \in \cY$ we have 
\[e^{-t(\Delta_\Omega + D_{\rm vert})} \varphi = e^{-t\Delta_\Omega}(1 - t D_{\rm vert}) \varphi = e^{-t \Delta_\Omega} \varphi - t e^{-t \Delta_\Omega} D_{\rm vert} \varphi .\]
It follows that the second piece of the analytic heat kernel can be written as 
\[\Tilde{K}_t = - t\left(D_{\rm vert} \tensor 1\right) K_t^{\Delta} \]
Note that $\Tilde{K}_t$ is still of cohomological degree one since the vertical operator has cohomological degree zero. 
\end{enumerate}

Putting these terms together, we've shown the following.
\begin{prop} \label{heat_kernel_YM_prop}
The heat kernel for pure Yang--Mills theory can be written as
\[K_t^{\cY} = (1-t (D_{\rm vert} \tensor 1))\left(k_t(x,y) \cdot (K_{AA^\vee} + K_{BB^\vee} + K_{cc^\vee})\right).\]
\end{prop}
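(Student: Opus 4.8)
The plan is to assemble the formula from the two explicit computations sketched above: the heat kernel $K_t^\Delta$ for the form Laplacian $\Delta_\Omega$, and the correction $\tilde K_t$ coming from the nilpotent vertical operator $D_{\rm vert}$. First I would record the two structural facts about the decomposition $[Q,Q^{\GF}] = \Delta_\Omega + D_{\rm vert}$ established in Section~\ref{YM_gauge_fix_section}: that $D_{\rm vert}$ is a first-order operator with $D_{\rm vert}^2 = 0$ (its two components $-2\d^*$ and $-2\d^*_+$ have incompatible source and target so cannot be composed, and each is a $\d^*$-type operator which squares to zero anyway), and that $[\Delta_\Omega, D_{\rm vert}] = 0$ since $\d^*$ commutes with the form Laplacian. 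These two facts give the operator identity $e^{-t[Q,Q^{\GF}]} = e^{-t\Delta_\Omega}(1 - tD_{\rm vert})$ on $\cY$, because the exponential of $-tD_{\rm vert}$ truncates and the two factors commute.

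Next I would verify the formula $K_t^\Delta(x,y) = k_t(x,y)\cdot(K_{AA^\vee} + K_{BB^\vee} + K_{cc^\vee})$ for the heat kernel of $\Delta_\Omega$. The key point is the tensor decomposition $\cY = C^\infty(\RR^4)\otimes Y$ with $Y$ a finite-dimensional complex, under which $\Delta_\Omega$ acts only on the $C^\infty(\RR^4)$ factor; hence its heat kernel is the scalar heat kernel $k_t$ times the integral kernel for the identity on $Y$ with respect to the finite-dimensional symplectic pairing, i.e. the inverse copairing in $Y\otimes Y$. What remains is to decompose $Y$ into the three pairing-orthogonal summands -- the $A$--$A^\vee$ block ($\Omega^1/\Omega^3$ duality), the $B$--$B^\vee$ block (the two copies of $\Omega^2_+$), and the $c$--$c^\vee$ block ($\Omega^0/\Omega^4$ duality) -- and to compute the inverse copairing on each, tracking the numerical factors (the $-\tfrac12$ in $K_{BB^\vee}$ from the coefficient of $\|B\|^2$ in $S_{\mr{FO}}$, and the sign in $K_{cc^\vee}$); one then checks the defining relation $\<K_t^\Delta(x,y),\varphi(y)\>_y = (e^{-t\Delta_\Omega}\varphi)(x)$ degree by degree. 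Tensoring with the algebraic factor $\kappa^\vee \in \gg\otimes\gg$ handles the Lie-algebra coefficients.

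Finally, from the operator identity above and the characterization of $K_t^\Delta$, the correction term is $\tilde K_t = -t(D_{\rm vert}\otimes 1)K_t^\Delta$: applying $e^{-t\Delta_\Omega}(1-tD_{\rm vert})$ to $\varphi$ and reading off integral kernels, the $1$-part gives $K_t^\Delta$ and the $-tD_{\rm vert}$-part gives $-t$ times $D_{\rm vert}$ applied to one leg of $K_t^\Delta$; here one must be careful about which tensor factor $D_{\rm vert}$ lands on and about the fermionic and graded-symmetry signs, using that $K_t^\Delta$ is (graded) symmetric and that $D_{\rm vert}$ is built from the self-adjoint operator $[Q,Q^{\GF}]$. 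Adding the two pieces gives $K_t^\cY = K_t^\Delta + \tilde K_t = (1 - t(D_{\rm vert}\otimes 1))\big(k_t(x,y)\cdot(K_{AA^\vee}+K_{BB^\vee}+K_{cc^\vee})\big)$, which is the claim. The only genuinely new content is the verification of the algebraic factor in $K_t^\Delta$ with its precise normalizations, together with the sign bookkeeping in the previous sentence; everything else is formal manipulation with the nilpotent $D_{\rm vert}$, and that normalization-and-sign tracking is the place where an error would most easily creep in.
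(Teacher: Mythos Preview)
Your proposal is correct and follows essentially the same route as the paper: use $D_{\rm vert}^2=0$ and $[\Delta_\Omega,D_{\rm vert}]=0$ to factor the heat semigroup as $e^{-t\Delta_\Omega}(1-tD_{\rm vert})$, identify $K_t^\Delta$ via the tensor decomposition $\cY = C^\infty(\RR^4)\otimes Y$ as the scalar heat kernel times the finite-dimensional inverse copairing on $Y$, and read off $\tilde K_t = -t(D_{\rm vert}\otimes 1)K_t^\Delta$. One small correction on the normalization you flagged: the factor $-\tfrac12$ in $K_{BB^\vee}$ comes from the wedge-and-integrate BV symplectic pairing on $\Omega^2_+$ (since $\sigma^I\wedge\sigma'^J = 2\delta^{IJ}\dvol$), not from the coefficient of $\|B\|^2$ in the action functional.
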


Now that we've described the heat kernel, let's describe the pure gauge part of the propagator.  The analytic propagator for the pure gauge sector is, by definition
\[P^{\cY}(\eps,L) = \int_{t = \eps}^L (Q^{\GF} \tensor 1 ) K^\cY_t \d t .\]
Note that $Q^{\GF}$ is nothing but the operator $d^*$ (or its projection $d^*_+$) up to a possible factor of 2. Just like the heat kernel the propagator splits into two parts 
\[P^{\cY}(\eps,L) = P^\Delta(\eps,L) + \Tilde{P} (\eps,L) \]
where $P^\Delta(\eps,L)$ comes from the heat kernel of the ordinary Laplacian, and $\Tilde{P}(\eps,L)$ comes from $\Tilde{K}_L$.  Again, we'll compute them one at a time.

\begin{enumerate}
\item Using the presentation for the heat kernel in Proposition \ref{heat_kernel_YM_prop} we'll write the pieces of the first summand of the propagator in the following form:
\[P^{\Delta}(\eps,L) = \int_{t = \eps}^L \frac{\partial k_t}{\partial x^i} (x,y) \left(P_{AB}^i + P^i_{A^\vee c}\right)  \d t \]
where the two terms correspond to irreducible summands in $\sym^2(\mc Y)$, namely $P_{AB}^i \in \Omega^1 \tensor \Omega^2_+ \oplus \Omega^2_+ \tensor \Omega^1$ and $P_{A^\vee c}^i \in \Omega^3 \tensor \Omega^0 \oplus \Omega^0 \tensor \Omega^3$: these specific terms arise by applying the gauge fixing operator $Q^{\GF}$ to the summands of the heat kernel $K^\Omega$. Explicitly, we find
\begin{align*}
(Q^{\GF } \otimes 1 )k_t K_{AA^\vee} &= \frac{\del k_t}{\del x^i} (1 \otimes \ast \d y^i + \frac{1}{2} \sigma_x^{ij} \otimes \d y^j)\\   
(Q^{\GF} \otimes 1 ) k_t K_{BB^\vee }   &= \frac{1}{2} \frac{\del k_t}{\del x^i} \ast(\d x^i \sigma_x^{1j} ) \otimes \sigma_y^{1j}\\ 
(Q^{\GF}\otimes 1 ) k_t K_{cc^\vee} &= \frac{\del k_t}{\del x^i} \ast \d x^i \otimes 1 .
 \end{align*}
Hence the irreducible summands of the propagator are given by
\begin{align*}
P_{AB}^i & = \sigma_x^{ij}\otimes \d y^j + \ast (\d x^i \sigma_x^{1 j})\otimes \sigma_y^{1j}\\
P^i_{A^\vee c} & = \left(1 \tensor \ast \d y^i + \ast \d x^i \tensor 1\right) 
\end{align*}
where we've used the summation convention as usual.  The term $P^i_{A^\vee c}$ came from applying the gauge fixing operator to the $K_{cc^\vee}$ term of the heat kernel and to the $A$-component of the $K_{AA^\vee}$ term, and $P^i_{AB}$ came from applying the gauge fixing operator to the $K^{BB^\vee}$ term and to the $A^\vee$-component of the $K_{AA^\vee}$ term. 

\item Similarly, the remaining part $\Tilde P(\eps, L)$ of the propagator is obtained by applying the operator $t(D_{\mr{vert}} \otimes 1)$ to $P^\Delta(\eps, L)$.  It has the form
\[\Tilde{P}(\eps,L) = -\int_{t = \eps}^L t \frac{\partial^2 k_t}{\partial x^i \partial x^j} (P^{ij}_{AA} + P^{ij}_{B^\vee c}) \d t \]
where $P_{AA}^{ij} \in \Omega^1 \tensor \Omega^1$ and $P_{B^\vee c}^{ij} \in \Omega^2_+ \otimes \Omega^0$. We can compute these elements by applying $D_{\mr{vert}}\otimes 1$  to the summands of the heat kernel, then applying the gauge fixing operator and integrating as above; that is we compute
\[\Tilde P = -\int t (Q^{\GF}\otimes 1  ) (D_{\mr{vert}}\otimes 1) K_t \d t.\]
By evaluating this expression we find that $P_{B^\vee c}^{ij}=0$, and
 \[ P^{ij}_{AA} = 4 (\delta^{ij} \d x^\ell - \delta^{i\ell} \d x^j ) \otimes \d y^\ell \]
again using the summation convention over the index $\ell$. 
\end{enumerate}

\subsubsection{Matter Sector} \label{YM_spinor_section}
Again, it'll be useful to introduce some notation for a basis for the space of spinors.  Choose an orthonormal basis $\{\psi^1, \psi^2, \psi^{3},\psi^{4}\}$ for the space $S$ of Dirac spinors, Write $\{\psi'^1, \psi'^2, \psi'^{3}, \psi'^{4}\}$ for the same basis, but for the space $S[-1]$ of spinors in degree 1 (just like we distinguished the elements $\sigma^I$ and $\sigma'^I$).

The heat kernel for the matter sector is
\[K_{\psi, t} = k_t \cdot \frac{1}{2}(\psi^j \tensor \psi'^{j} + \psi'^j \tensor \psi^j) .\]
Recall, the gauge fixing operator $Q^{GF}$ restricted to the matter sector is the Dirac operator $\slashed \d$ which we can write as $\rho \circ \d$ where $\rho$ is Clifford multiplication. 

Thus we can write the propagator as
\[P_\psi(\eps,L) = \frac{1}{2} \int_{t =\eps}^L \frac{\partial k_t}{\partial x^i} P_\psi^i\]
where
\[P_\psi^i = (\Gamma^i \psi^j) \tensor \psi'^j + \psi'^j \tensor (\Gamma^i \psi^j) .\]

\section{One-Loop Divergences} \label{one_loop_section}
In this section we'll prove the following theorem, recovering the well-known expression for the $\beta$-function of Yang--Mills theory.

\begin{theorem} \label{YM_beta_function_theorem}
The one-loop $\beta$-function of Yang--Mills theory is equal to
\[\beta^{(1)}(g) = - \frac{g^3}{16\pi^2} \left(\frac {11}{3}C(\gg) - \frac 43 C(V) \right)\]
where $C(\gg)\id_\gg$ and $C(V)\id_V$ are the quadratic Casimir invariants for the representations $\gg$ and $V$ of $G$ respectively.
\end{theorem}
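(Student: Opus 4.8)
The plan is to combine Proposition \ref{beta_function_counterterm_prop} with the cohomology computation of Corollary \ref{cohomology_calculation_cor} to reduce the theorem to a finite Feynman-diagram calculation in first-order Yang--Mills. First I would check the hypotheses of Proposition \ref{beta_function_counterterm_prop}: after dropping the fermion mass the classical first-order action is translation- and scale-invariant; the gauge-fixing operator of Section \ref{YM_gauge_fix_section} scales with a definite weight $k$ under $\rho_\lambda$; and Costello's renormalizability theorem, extended to the matter case by Corollary \ref{cohomology_calculation_cor}, provides the required counterterms $I^{\mathrm{CT}}(\eps)$ and the correction term $J$, all of scaling dimension $0$. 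The conclusion of that proposition is the clean identity $\OO^{(1)}_\beta = k\,I^{\mathrm{CT}}_{\mathrm{log}}$, so the entire problem becomes: compute the $\log\eps$ part of the one-loop Feynman weights $\sum_{\Gamma}W_\Gamma(P(\eps,L), I)$ built from the first-order vertices (all proportional to $g$) and the propagators $P_{AB}^i$, $P^i_{A^\vee c}$, $P^{ij}_{AA}$, $P_\psi^i$ computed in Section \ref{YM_prequantization_section}, and then to identify the cohomology class of the resulting local functional.

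This last identification is where Corollary \ref{cohomology_calculation_cor} does its work: it tells us that the Poincar\'e-invariant obstruction--deformation complex of first-order Yang--Mills with matter has the same cohomology, in the relevant bidegree, as that of pure Yang--Mills, which Costello showed is one-dimensional and spanned by the class of the Yang--Mills interaction itself. Hence $[I^{\mathrm{CT}}_{\mathrm{log}}]$ is a single scalar multiple of the cubic vertex class, and $\beta^{(1)}(g)$ is a pure number times $g^3$ (the $g^3$ appearing because a one-loop cubic divergence is built from three vertices, each linear in $g$). It therefore suffices to compute the coefficient of one convenient term --- for definiteness the cubic gauge vertex $\langle B, [A,A]\rangle$, or the fermion coupling $\mu(\psi,\rho\alpha(A)\psi)$ --- in the logarithmically divergent part of the one-loop effective interaction.

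Then I would evaluate the relevant one-loop graphs, splitting the answer into a pure gauge contribution and a matter contribution since both the heat kernel and the propagator split as such (Proposition \ref{heat_kernel_YM_prop} and Section \ref{YM_spinor_section}). In the gauge sector the contributing graphs are the one-loop graphs made of cubic gauge vertices joined by the bosonic propagators --- including the $D_{\mathrm{vert}}$-corrected piece $\Tilde P$ and the $P^{ij}_{AA}$ term that has no counterpart in ordinary Feynman gauge; each weight is a Gaussian integral over $\RR^4$ of the scalar heat kernel $k_t$ and its derivatives against scales $t\in[\eps,L]$, producing the characteristic $\log(L/\eps)$, together with a group-theory factor that collapses (via contraction of structure constants around the loop) to a multiple of the quadratic Casimir $C(\gg)$. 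Summing the numerical coefficients of all such graphs gives the $\tfrac{11}{3}C(\gg)$ term with the overall normalization $-g^3/(16\pi^2)$. In the matter sector the only new graph is the fermion loop with two external gauge legs, built from two copies of $P^i_\psi$; it carries the extra minus sign of a fermion loop, a group-theory factor $C(V)$, and a Clifford-trace plus analytic integral that together yield the $-\tfrac 43 C(V)$ term. Finally I would note that the finite, $\eps$-independent remainder and the term $J$ do not affect the cohomology class, so $\beta^{(1)}(g) = k\,[I^{\mathrm{CT}}_{\mathrm{log}}]$ is exactly the asserted expression; by Corollary \ref{first_second_order_quantum_equiv_cor} the same number is the one-loop $\beta$-function of ordinary second-order Yang--Mills.

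The step I expect to be the main obstacle is the diagrammatic bookkeeping itself: the one-loop graphs of the \emph{first-order} theory are genuinely different from the textbook second-order ones, there are several propagator components to track (including $\Tilde P$ and $P^{ij}_{AA}$), and one must get every symmetry factor, sign (in particular the fermion-loop sign and the signs coming from the combined cohomological/fermionic $\ZZ\times\ZZ/2$ grading), and factor of $2$ in the gauge-fixing operator exactly right for the separate pieces to assemble into the precise rationals $11/3$ and $4/3$. A secondary but essential subtlety is to extract the logarithmic divergence in a manner consistent with the chosen renormalization scheme and to correctly project the resulting local functional onto the one-dimensional cohomology using Corollary \ref{cohomology_calculation_cor}, discarding exact terms so that only the physical coupling renormalization survives.
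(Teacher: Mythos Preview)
Your overall strategy --- reduce to logarithmic counterterms via Proposition~\ref{beta_function_counterterm_prop}, split each weight into a Lie-theoretic Casimir factor times an analytic integral, and project onto the one-dimensional cohomology of Corollary~\ref{cohomology_calculation_cor} --- is exactly the paper's approach. But there is a genuine gap in how you propose to organize the diagrammatics.

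You suggest reading off the $\beta$-function from the coefficient of the cubic vertex $\langle B,[A,A]\rangle$ (or the fermion coupling) in the log-divergent counterterm, reasoning that a one-loop cubic counterterm is built from three vertices and hence carries $g^3$. In first-order Yang--Mills this does not work: the paper proves, as a separate proposition in Section~\ref{structure_section}, that every one-loop wheel with \emph{more than two} vertices is convergent as $\eps\to 0$. Thus the three-point function carries no $\log\eps$ at all, and the entire logarithmic divergence lives in the \emph{two-leg} wheel (two vertices, hence a $g^2$ prefactor). The log counterterm $I^{\mathrm{CT}}_{\log}$ is therefore a purely quadratic local functional --- concretely a linear combination of $\int F_{A+}\wedge F_{A+}$, $\int F_{A+}\wedge B$, $\int B\wedge B$, and the matter analogue --- and its ``cubic vertex component'' vanishes on the nose. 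Your proposed extraction would return zero.

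What the paper does instead is: enumerate the five labellings of the two-leg wheel by propagator types ($P^i_{AB}$, $P^i_{A^\vee c}$, $P^{ij}_{AA}$, $P^i_\psi$), compute each quadratic counterterm explicitly, and then write down the explicit coboundaries $\d_{\mathrm{cl}}\bigl(\int F_{A+}\wedge B^\vee\bigr)$ and $\d_{\mathrm{cl}}\bigl(\int B\wedge B^\vee\bigr)$ that witness $[\int F_{A+}\wedge F_{A+}]=[\int F_{A+}\wedge B]=[\int B\wedge B]$ in cohomology. Summing the five contributions and comparing against the first-order action $S_{\mathrm{FO}}=\int B\wedge F_{A+}-\tfrac12\int B\wedge B$, which represents $\tfrac12[\int F_{A+}\wedge F_{A+}]$, produces the rationals $-\tfrac{11}{3}$ and $\tfrac43$. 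The extra factor of $g$ in $\beta^{(1)}(g)$ comes from this normalization step (the trivialization of $H^0(\OO_{\mathrm{loc}})$ by the action functional, as in Remark~\ref{beta_function_is_a_function_remark}), not from a third vertex. Your sketch would need to be restructured around the two-point diagrams, the convergence lemma for higher wheels, and this coboundary comparison in order to go through.
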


\begin{remark}
As we noted in Remark \ref{semisimple_remark} we could generalize the above to a general semisimple gauge group, at the cost of having a coupling constant for each simple factor.
\end{remark}

We can compute the one-loop $\beta$-function of Yang--Mills in the first-order formalism by Corollary \ref{first_second_order_quantum_equiv_cor}, which told us that the first and second-order Yang--Mills theories have the same one-loop $\beta$-function, up to an overall rescaling of $g$ corresponding to changing the choice of renormalizable quantization.  By Remark \ref{beta_function_is_a_function_remark} and the cohomology calculation in Corollary \ref{cohomology_calculation_cor} we know we can think about the one-loop $\beta$-function in Yang--Mills theory as a function of a single variable (for a simple gauge group).  We'll compute this function using Proposition \ref{beta_function_counterterm_prop}.

\subsection{Structure of the Calculation} \label{structure_section}
Let's begin the proof of Theorem \ref{YM_beta_function_theorem}.  In this section we'll reduce the claim to a sequence of slightly messy calculations.  We'll use Proposition \ref{beta_function_counterterm_prop}, which tells us that in order to compute the one-loop $\beta$-functional we need to compute the log part of the counterterm $I^{\mr{CT}}(\eps)$.  Equivalently, we need to compute the log divergent part of the functional $W_\Gamma(P(\eps,L), I)$ for all one-loop graphs $\Gamma$.

In our specific situation -- that of first-order Yang--Mills theory -- the interaction is purely cubic, so the only graphs that contribute are wheels with $k$ outgoing legs (it suffices to consider 1PI graphs only because the deletion of a separating edge doesn't affect the divergence).  In fact, we only need to consider a single graph.

\begin{prop} The weight
  $W_\Gamma(P^{\cY}(\eps,L) + P^{\cS}(\eps,L), I)$
  is convergent in the limit $\eps \to 0$ for all wheels $\Gamma$
  with more than 2 vertices.
\end{prop}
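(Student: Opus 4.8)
The plan is to estimate, by power counting, the degree of divergence of the weight $W_\Gamma(P(\eps,L),I)$ attached to a wheel $\Gamma$ with $n$ vertices (hence $n$ internal edges and $k = n - 2m$ external legs for some number $m$ of trivalent contractions — but in first-order Yang--Mills the interaction is strictly cubic, so each vertex is trivalent and a wheel with $n$ vertices has exactly $n$ internal edges forming the loop and $n$ external legs). The weight is a single loop integral: after Fourier transform, or equivalently after performing the position-space integrals using the explicit heat-kernel and propagator formulas from Section~\ref{YM_prequantization_section}, it reduces to an integral over the $n$ propagator proper-times $t_1,\dots,t_n \in [\eps,L]$ (together with one residual position integral around the loop, which produces a Gaussian that is harmless). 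The divergence as $\eps\to 0$ can only come from the corner where all $t_i \to 0$ simultaneously, so the task is to read off the small-$t$ behavior of the integrand.

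The key steps, in order, are as follows. First, write $W_\Gamma$ as $\int_{[\eps,L]^n} (\text{integrand})\, \d t_1\cdots \d t_n$ using Proposition~\ref{heat_kernel_YM_prop} and the propagator formulas $P^i_{AB}, P^i_{A^\vee c}, P^{ij}_{AA}, P^i_\psi$ of Sections~\ref{YM_puregauge_section}--\ref{YM_spinor_section}. Second, observe that each propagator $P(\eps,L)$ contributes either a factor $\partial k_t/\partial x^i$ (the $P^i$ summands) or $t\,\partial^2 k_t/\partial x^i\partial x^j$ (the $\Tilde P$ summand, i.e. $P^{ij}_{AA}$); in momentum space $\partial k_t \sim t \cdot (\text{momentum}) \cdot e^{-t|p|^2}$ roughly, and in any case each propagator carries \emph{at least} one power of $t$ and a heat-kernel Gaussian. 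Third, do the single loop-momentum integral: the product of $n$ Gaussians $e^{-(t_1+\cdots+t_n)|p|^2}$ integrates over $p\in\RR^4$ to something of order $(t_1+\cdots+t_n)^{-2}$, possibly with extra powers of momenta in the numerator coming from the $\partial k_t$ factors, each of which after integration produces an extra inverse power of $(\sum t_i)$. Fourth, rescale $t_i = s u_i$ with $\sum u_i = 1$, $s\in[\eps', L']$; the integrand then behaves near $s=0$ like $s^{a}\,\d s$ for some exponent $a$ that one computes by bookkeeping: $n$ propagators give at least $s^{\,n}$ (from the explicit $t$-factors, counting also the $\Tilde P$ double-derivative term which supplies the \emph{same} power $t$ as one factor of $s$), the loop integral gives $s^{-2}$ from the Gaussian volume and at most $s^{-\lceil n/2\rceil}$ or so from numerator momenta, and the measure change gives $s^{n-1}\d s$. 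The upshot is that for $n \ge 3$ the exponent $a$ is strictly greater than $-1$, so $\int_0 s^a\,\d s$ converges and the $\eps\to 0$ limit exists; only $n = 2$ (and, separately, $n=1$, excluded here) is logarithmically divergent. Fifth, note that mixing in matter propagators $P^i_\psi$ only improves convergence, since $P^i_\psi$ is again a first-order expression $\Gamma^i\partial k_t$ with the same $t$-power count as the bosonic $P^i$, and fermion loops do not change the power counting. Conclude.

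The main obstacle, and the place requiring genuine care rather than soft power counting, is the $\Tilde P$ summand $P^{ij}_{AA} = 4(\delta^{ij}\d x^\ell - \delta^{i\ell}\d x^j)\otimes \d y^\ell$, which came with a factor $t\,\partial^2 k_t/\partial x^i\partial x^j$: the second derivative of the heat kernel naively looks more singular, contributing effectively $t^{-1}$ relative to $k_t$ before the compensating explicit $t$, so one must check carefully that the net effect of an $\Tilde P$ edge is \emph{not} worse than that of a $P^\Delta$ edge in the $s\to 0$ counting. Writing $t\,\partial^2 k_t \sim t\cdot(t^{-1} + t^{-2}|x-y|^2)k_t$, the dangerous $t\cdot t^{-1} = 1$ piece is a \emph{constant} in $t$ rather than a positive power, so an $\Tilde P$ edge could lower the exponent $a$ by one relative to a $P^\Delta$ edge; one must then verify that even a wheel built entirely from $\Tilde P$ edges still has $a > -1$ for $n\ge 3$. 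Counting: $n$ $\Tilde P$-edges give $s^0$, the loop Gaussian gives $s^{-2}$, numerator momenta from $\partial^2 k_t$ after the loop integral give at worst $s^{-n}$ — wait, this must be organized correctly, because the residual position/momentum integral is not over $n$ independent variables but over a single loop variable, and the $|x-y|^2$ insertions are soaked up by that one Gaussian, each insertion costing one power of $s$, not $n$ of them. Done carefully this gives, for the all-$\Tilde P$ wheel, exponent $a = (n-1) + 0 + (\text{loop: } -2) = n-3 \ge 0 > -1$ for $n\ge 3$, with a matching or better bound once numerator contractions are accounted for; the honest argument is to organize the loop integral as a Gaussian in one $\RR^4$ variable with polynomial insertions of total degree bounded by $2n$, integrate term by term, and check each term. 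That bookkeeping — keeping the powers of $s$ straight through the double-derivative terms and the loop-momentum numerators simultaneously — is the crux, and is exactly the "slightly messy calculation" the next sections carry out.
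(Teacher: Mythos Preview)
Your power-counting sketch has a genuine gap at $n=3$ and $n=4$ in the all-$P^\Delta$ case, and the error is traceable to one false sentence: ``each propagator carries at least one power of $t$.'' That is not true for $P^\Delta$. In Schwinger representation the $P^\Delta$ edge contributes $\partial_{x^i} k_t$, whose Fourier transform is $ip^i\,e^{-t|p|^2}$ with \emph{no} prefactor of $t$; only the $\Tilde P$ edge carries an explicit $t$. Redoing your count for a wheel built entirely from $P^\Delta$ edges: the numerator is $p^n$, the loop Gaussian integrates to order $s^{-2-n/2}$ with $s=\sum t_i$, and the measure change $t_i=s u_i$ gives $s^{n-1}\,\d s$. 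The net exponent is $n/2-3$, which is $\le -1$ for $n\le 4$. So your argument, as written, fails exactly in the cases you need it.

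The paper closes this gap not by power counting but by an integration-by-parts identity specific to the quadratic form $Q_n(t)$ governing the loop Gaussian. They construct first-order operators $D^m_{i_m}(t)=\partial_{z_m^{i_m}}-(t_1+\cdots+t_n)^{-1}\sum_k t_k\,\partial_{z_k^{i_m}}$, bounded uniformly in $t$, satisfying $D^m_{i_m}\,e^{-z^TQ_n z}=-(z_m^{i_m}/t_m)\,e^{-z^TQ_n z}$. Each numerator factor $z_m^{i_m}/t_m$ coming from a $P^\Delta$ edge is thereby traded for a derivative landing on the smooth compactly supported external-field factor $\Phi$. This gains one power of $t_m$ per $P^\Delta$ edge, converting the effective integrand to $(t_1\cdots t_n)^{-2}e^{-z^TQ_n z}$; Wick then supplies $\det Q_n^{-1/2}=(t_1\cdots t_n)^2/(t_1+\cdots+t_n)^2$, and the residual $t$-integral $\int(t_1+\cdots+t_n)^{-2}$ is bounded via the AM--GM inequality by $\prod_i\int t_i^{-2/n}\,\d t_i$, convergent precisely for $n>2$. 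Your momentum-space scheme can be repaired along the same lines (write each $p^i$ as a derivative in the external momenta and integrate by parts onto the test functions), but that step is the entire content of the proof and is absent from your proposal. Ironically, you flagged the $\Tilde P$ edges as the hard case; in fact your count for the all-$\Tilde P$ wheel is correct and gives $n-3>-1$, while it is the $P^\Delta$ edges that require the extra idea.
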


\begin{proof} Let $\Gamma$ be a wheel with number of vertices equal to
  $n > 2$. We label the vertices by $v_i$, $1 \leq i \leq
  n$. We will show that both $\lim_{\eps \to 0}
  W_\Gamma(P^{\cY}(\eps,L), I)$ and $\lim_{\eps \to 0}
  W(P^{\cS}(\eps,L), I)$ exist. 

First, we focus on the pure gauge
  sector and hence the term involving the propagator $P^\cY(\eps,L)$. Recall that the gauge propagator splits as
  as $P^{\cY}(\eps, L) = P^\Delta(\eps,L) +
  \Tilde{P}(\eps,L)$. The weight of the wheel $\Gamma$
  splits up into a sum of terms involving a $v$ propagators each
  involving some number of propagators of type
  $P^\Delta(\eps,L)$ and propagators of type
  $\Tilde{P}(\eps,L)$. We label the inputs of the weight by
  $\alpha_1,\ldots,\alpha_n \in \cY$. Let us consider the term
  involving
  $k^\Delta$ propagators of type $P^\Delta(\eps,L)$ and
  $\Tilde{k}$ propagators of type $\Tilde{P}$. Necessarily, we have
  $k^\Delta + \Tilde{k} = n$. Moreover, without loss of generality we suppose that the
  $P^\Delta(\eps,L)$ connects vertices $v_i$ and $v_{i+1}$ for $1
  \leq i \leq k^\Delta+1$ and $\Tilde{P}(\eps,L)$ connects vertices
  $v_{j}$ and $v_{j+1}$ for $k^\Delta+1 \leq j \leq n$ (by convention
  $v_{n+1} = v_1$.  We indicate this in figure \ref{wheel_with_labels}.
  
\begin{figure}[!h] 
\centering
\vspace{8pt}
\begin{fmffile}{wheel_with_different_propagators}

\begin{fmfgraph*}(225, 175)
\fmfpen{thick}
\fmfleft{i_1,i_2,i_3,i_4}
\fmflabel{$v_1$}{i_1}
\fmflabel{$v_2$}{i_2}
\fmflabel{$v_3$}{i_3}
\fmflabel{$v_{k^\Delta +1}$}{i_4}
\fmflabel{$v_{k^\Delta +2}$}{o_4}
\fmflabel{$v_{k^\Delta +3}$}{o_3}
\fmflabel{$v_{k^\Delta +4}$}{o_2}
\fmflabel{$v_n$}{o_1}
\fmfright{o_1,o_2,o_3,o_4}
\fmf{plain,tension=2}{i_1,v_1}
\fmf{plain,tension=2}{i_2,v_2}
\fmf{plain,tension=2}{i_3,v_3}
\fmf{plain,tension=2}{i_4,v_k1}
\fmf{plain,tension=2}{v_k2,o_4}
\fmf{plain,tension=2}{v_k3,o_3}
\fmf{plain,tension=2}{v_k4,o_2}
\fmf{plain,tension=2}{v_n,o_1}
\fmf{plain,tension=2,label=$P^\Delta$}{v_1,v_2,v_3}
\fmf{dots,tension=2,label=$P^\Delta$}{v_3,v_k1}
\fmf{plain,tension=2,label=$\widetilde{P}$}{v_k1,v_k2,v_k3,v_k4}
\fmf{dots,tension=2,label=$\widetilde{P}$}{v_k4,v_n}
\fmf{plain,tension=2,label=$\widetilde{P}$}{v_n,v_1}
\end{fmfgraph*}
\end{fmffile}
\vspace{6pt}
\caption{An $n$-leg wheel with propagators labelled as indicated.}
\label{wheel_with_labels}
\end{figure}
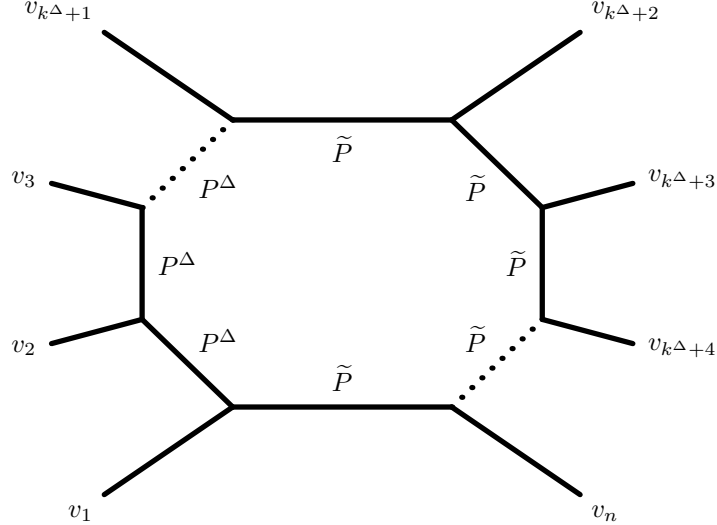

  Up to combinatorial factors, such a term has the form
\begin{equation}\label{wheel integral 1}
\int_{(x_1,\ldots,x_n) \in (\RR^4)^{\times v}} \prod_{i = 1}^{n}
\alpha_i(x_i) \prod_{i =1}^{k^\Delta} P^\Delta(\eps,L)(x_i,
x_{i+1}) \prod_{j=k^\Delta + 1}^{n}
\Tilde{P}(\eps,L)(x_{j}, x_{j+1})  .
\end{equation}

Now, we know that the propagators can be written as
\begin{align*}
P(\eps,L)(x,y) & = P_i \int_{t=\eps}^L \frac{\partial
                     k_t}{\partial x^i} (x,y) \d t = P_i \int_{t = \eps}^L \frac{x^i-y^i}{t^3}
  e^{-|x-y|^2/4t} \d t \\
\Tilde{P}(\eps,L)(x,y) & = P_{ij} \int_{t = \eps}^L t
                             \frac{\partial^2k_t}{\partial t^i \partial
                             t^j} (x,y) \d t = P_{ij} \int_{t =
                             \eps}^L \left(\frac{\delta^{ij}}{t^2}
                             + \frac{(x^i - y^i)(x^j - y^j)}{2 t^3} \right)
                             e^{-|x-y|^2/4t} \d t
\end{align*}
where $P_i,P_{ij}$ are constant coefficient differential forms whose
precise form will not be necessary for the proof. To simplify the
integral, we make the following change of coordinates. Define
\begin{align*}
z_i & = x_i - x_j, \mbox{  for } 1 \leq i < n \\
z_{n} & = x_{n} .
\end{align*}
Let $S \subset \{k^\Delta + 1, \ldots, n-1\}$ be an arbitrary
(possibly empty) subset and define functions
\begin{align*}
p^{ij}(t) & = \frac{\delta^{ij}}{t^2} \\
q^{ij}(z,t) & = \frac{z^iz^j}{t^3} .
\end{align*}
Finally, let $Q_n(t)=Q_n(t_1,\ldots,t_n)$ be the following block
diagonal, $t$-dependent, $4(n-1)\times 4(n-1)$ matrix
\[
Q_n(t) = \begin{pmatrix} R_n(t) & & \\ & \ddots & \\ & & R_n(t) \end{pmatrix} 
\]
where $R_n(t) = \left(Q_n(t)_{ij} \right)$ is the $(n-1) \times (n-1)$
matrix defined by
\[
R_n(t)_{ij} = 
\begin{cases}
    t_{i}^{-1} + t_n^{-1} ,& \text{if } i = j \\
    t_n^{-1},         & \text{if } i \ne j .
\end{cases}
\]

With this notation and the explicit forms of the propagators above can
write the integral (\ref{wheel integral 1}) as a sum of terms of the form
\begin{equation} \label{wheel integral 2}
\int_{z_1,\ldots,z_n} \int_{t_1,\ldots,t_n}
  \Phi \cdot
  \left(\frac{z_1^{i_1}}{t_1^3} \cdots \frac{z_{k^\Delta}^{i_{k^\Delta}}}{t_{k^\Delta}^3}\right) \left(\prod_{s \in S}
p^{j_s k_s}(z_s,t_s)\right) \left(\prod_{s \notin S} q^{j_s k_s}(z_s,
t_s) \right) p^{j_n k_n}(t_n) e^{-z^T Q_n(t) z}
\end{equation}
and 
\begin{equation} \label{wheel integral 3}
\int_{z_1,\ldots,z_n} \int_{t_1,\ldots,t_n}
  \Phi \cdot
  \left(\frac{z_1^{i_1}}{t_1^3} \cdots \frac{z_{k^\Delta +1}^{i_{k^\Delta}}}{t_{k^\Delta}^3}\right) \left(\prod_{s \in S}
p^{j_s k_s}(z_s,t_s)\right) \left(\prod_{s \notin S} q^{j_s k_s}(z_s,
t_s) \right) q^{j_n k_n}(z_1+\cdots+z_{n-1}, t_n) e^{-z^T Q_n(t) z}
\end{equation}
for some compactly supported function $\Phi \in C^\infty((\RR^4)^n)$
and integers $i_m,j_s,k_s \in \{1,\ldots,4\}$. 

We study the convergence of (\ref{wheel integral 2}). The main tool we
utilize is integration by parts to put it in a form where we may
readily apply Wick's lemma to estimate the $\eps \to 0$ limit. For $1 \leq m \leq n$ and $i_m
\in \{1,\ldots,4\}$ define the differential operator
\[
D^m_{i_m} (t) := \frac{\partial}{\partial z_m^{i_m}} - \frac{1}{t_1 + \cdots + t_n}
\sum_{k=1}^{n-1} t_j \frac{\partial}{\partial z_k^{i_m}} .
\] 
Then, we immediately verify that
\[
D^m_{i_m}(t) e^{-x^T Q_n(t) x} = - \frac{z^{i_m}_m}{t_m} e^{-z^T
  Q_n(t) z} .
\]
Note that the differential operator $D^m_{i_m}(t)$ is bounded in the
variables $t$. We integrate by parts using the operators
$D^1_{i_1},\ldots,D^{k^\Delta}_{i_{k^\Delta}}$. To show the $\eps \to 0$
convergence of (\ref{wheel integral 2}) it suffices to show
convergence of 
\[
\int_{z_1,\ldots,z_n} \int_{t_1,\ldots,t_n}
  \Phi' \cdot \frac{1}{(t_1\cdots t_{k^\Delta +1})^2} \left(\prod_{s
      \in S} \frac{1}{t_s^2} \right) \left(\prod_{s \notin S} q^{j_s k_s}(z_s,
t_s) \right) \frac{1}{t_n^2} e^{-z^T Q_n(t) z} .
\]
where $\Phi'$ is some (other) compactly supported function on
$(\RR^4)^n$ that is independent of $t$. Now, we wish to use the
operators $D^s_{j_s}, D^s_{k_s}$, for $s \in \{k^\Delta + 1, \cdots,
n\} \setminus S$, to integrate by parts. Since
$\frac{\partial}{\partial z_s^{j_s}} (z^{k_s}_s) = \delta^{j_sk_s}$
there are now two types of terms we must consider: (A) those corresponding to
the instances where the operator $D^s_{j_s}$ hits the linear term
$z_s^{k_s}$  and (B) where the operator hits the exponential $e^{-z^T Q_n(T)
  z}$. Terms of type (A) have the form
\begin{equation} \label{wheel integral 4}
\int_{z_1,\ldots,z_n} \int_{t_1,\ldots,t_n} \Phi_A \cdot
\left(\frac{1}{t_1 \cdots t_n}\right)^2 e^{-z^T Q_n(t) z}.
\end{equation} 
Terms of type (B) are of the form:
\begin{equation}
\int_{z_1,\ldots,z_n} \int_{t_1,\ldots,t_n} \Phi_B \cdot
\left(\frac{1}{t_1 \cdots t_{k^\Delta}}\right)^2 \left(\prod_{s \in S}
  \frac{1}{t_s}\right)^2 \left(\prod_{s \notin S}
  \frac{z_s^{k_s}}{t_s^2} \right) \frac{1}{t_n^2} e^{-z^T Q_n(t) z}
\end{equation}
We can apply an additional integration by parts to terms of type (B)
to put it in the form
\begin{equation}\label{wheel integral 5} 
\int_{z_1,\ldots,z_n} \int_{t_1,\ldots,t_n} \Phi_B' \cdot \left(\frac{1}{t_1 \cdots t_{k^\Delta}}\right)^2 \left(\prod_{s \in S}
  \frac{1}{t_s}\right)^2 \left(\prod_{s \notin S} \frac{1}{t_s}\right)
e^{-z^T Q_n(t) z} .
\end{equation}

To estimate the integrals we apply Proposition \ref{general_Wick_prop}, the version of Wick's lemma proven in Appendix \ref{analytic_appendix}, applied
to the variables $z_1,\ldots,z_{n-1} \in (\RR^{4})^{n-1}$. The determinant of
$Q_n(t)$ is given by
\[
\det Q_n(t) = \left( \det R_n(t) \right)^4 = \left( \frac{t_1 + \cdots
    + t_n}{t_1 \cdots t_n} \right)^4 .
\]
Up to factors of $2$ and $\pi$ we see that the first term in the Wick
expansion for terms of type (A) in
Equation (\ref{wheel integral 4}) is
\[
\int_{z_n \in \RR^4} \Phi(z_1 = 0,\ldots,z_{n-1}=0,z_n)
\int_{t_1,\ldots,t_n} \frac{1}{(t_1 + \cdots + t_n)^2} . 
\] 
It suffices to show the $\eps \to 0$ convergence of the
$t_1,\ldots,t_n$-integral, which is over the region
$[\eps,L]^{n}$. Indeed, we have
\[
\int_{t_1,\ldots,t_n} \frac{1}{(t_1 + \cdots + t_n)^2}  \dvol_t \leq
\int_{t_1,\ldots,t_n} \frac{1}{(t_1\cdots t_n)^{2/n}} \dvol_t =
\prod_{i=1}^n \int_{t_i= \eps}^L \frac{\d t_i}{t_i^{2/n}}  .
\]
This is clearly convergent in the limit $\eps \to 0$ if and only
if $n > 2$. 

To show the convergence of terms of type (B) in Equation (\ref{wheel
  integral 5}) we use the fact that 
\[
\left(\frac{1}{t_1 \cdots t_{k^\Delta}}\right)^2 \left(\prod_{s \in S}
  \frac{1}{t_s}\right)^2 \left(\prod_{s \notin S} \frac{1}{t_s}\right) \frac{1}{t_n^2}
\leq \left(\frac{1}{t_1 \cdots t_n}\right)^2
\]
in the region $|t_1|, \ldots, |t_n| \leq 1$. The term on the
right-hand side is exactly the integrand of terms of type (A), which
we have already shown to be convergent in the $\eps \to 0$
limit. The convergence of (\ref{wheel integral 3}) is analyzed in a
completely similar way.

The case of the weights involving the matter propagator is similar. Indeed, every such weight is a sum of terms of the form (\ref{wheel integral 1}) with $k^\Delta = n$. So it is a special case of the above analysis. 
\end{proof}

\begin{remark}
The tadpole -- the one-loop wheel -- vanishes, because the only propagator has the source and target.  As we saw in Section \ref{YM_prequantization_section} the propagators involve derivatives of the scalar heat kernel. Since the scalar heat kernel attains a maximum along the diagonal $\{x=y\}$ in $\RR^4_x\times \RR^4_y$ we see that the tadpole diagrams are all identically zero.
\end{remark}

Therefore we only need to compute the log divergent part of the weight $W_\Gamma(P(\eps,L), I)$ where $\Gamma$ is a wheel with two outgoing legs. It is natural to split this calculation up, according to the decomposition of the propagator $P$ and the vertex $I$ into their summands, as explained in Section \ref{YM_prequantization_section} above.  As usual we think of these summands as corresponding to particle flavours, and label the edges of the Feynman diagram $\Gamma$ accordingly.  There are five possible ways of labelling the diagram that contribute to the calculation, which we indicate in Figure \ref{feynman_diagram_figure_1} and \ref{feynman_diagram_figure_2}.

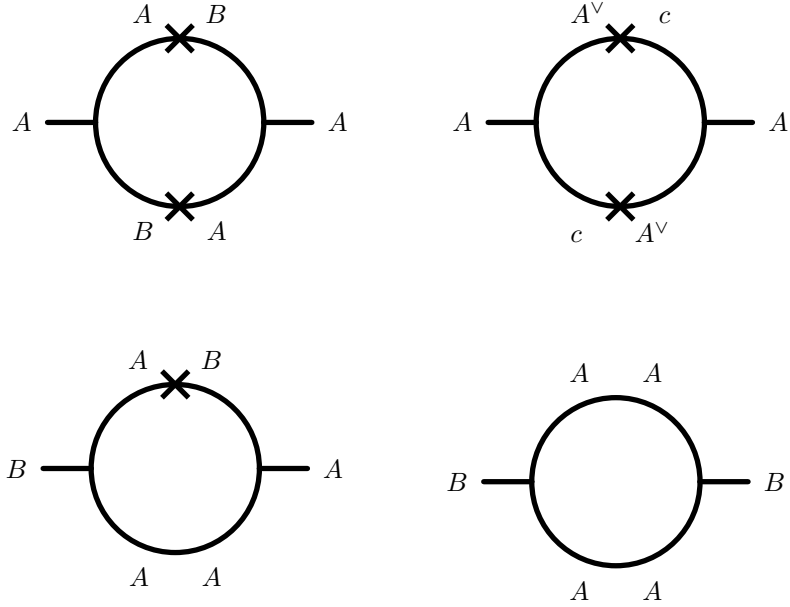
\begin{figure}[!h] 
\centering
\begin{fmffile}{bosonic_loops}
\fmfcmd{%
vardef cross_bar (expr p, len, ang) =
((-len/2,0)--(len/2,0))
rotated (ang + angle direction length(p)/2 of p)
shifted point length(p)/2 of p
enddef;
style_def crossed expr p =
cdraw p;
ccutdraw cross_bar (p, 5mm,  45);
ccutdraw cross_bar (p, 5mm, -45)
enddef;}

\begin{fmfgraph*}(100, 150)
\fmfpen{thick}
\fmfleft{i}
\fmflabel{$A$}{i}
\fmfright{o}
\fmflabel{$A$}{o}
\fmf{plain,tension=3.5}{i,v_1}
\fmf{plain,tension=3.5}{v_2,o}
\fmf{crossed,left,tension=0.5,label=$A\qquad B$}{v_1,v_2}
\fmf{crossed,right,tension=0.5,label=$B\qquad A$}{v_1,v_2}
\end{fmfgraph*}
\hspace{60pt}
\begin{fmfgraph*}(100, 150)
\fmfpen{thick}
\fmfleft{i}
\fmflabel{$A$}{i}
\fmfright{o}
\fmflabel{$A$}{o}
\fmf{plain,tension=3.5}{i,v_1}
\fmf{plain,tension=3.5}{v_2,o}
\fmf{crossed,left,tension=0.5,label=$A^\vee\qquad c$}{v_1,v_2}
\fmf{crossed,right,tension=0.5,label=$c\qquad A^\vee$}{v_1,v_2}
\end{fmfgraph*}

\begin{fmfgraph*}(100, 110)
\fmfpen{thick}
\fmfleft{i}
\fmflabel{$B$}{i}
\fmfright{o}
\fmflabel{$A$}{o}
\fmf{plain,tension=3.5}{i,v_1}
\fmf{plain,tension=3.5}{v_2,o}
\fmf{crossed,left,tension=0.5,label=$A\qquad B$}{v_1,v_2}
\fmf{plain,right,tension=0.5,label=$A\qquad A$}{v_1,v_2}
\end{fmfgraph*}
\hspace{60pt}
\begin{fmfgraph*}(100, 100)
\fmfpen{thick}
\fmfleft{i}
\fmflabel{$B$}{i}
\fmfright{o}
\fmflabel{$B$}{o}
\fmf{plain,tension=3.5}{i,v_1}
\fmf{plain,tension=3.5}{v_2,o}
\fmf{plain,left,tension=0.5,label=$A\qquad A$}{v_1,v_2}
\fmf{plain,right,tension=0.5,label=$A\qquad A$}{v_1,v_2}
\end{fmfgraph*}
\end{fmffile}
\caption{The four purely bosonic one-loop Feynman diagrams that contribute to the log divergence, and therefore to the one-loop $\beta$-function.  The internal propagators are decorated when the species of a particle alters between its two end points.}
\label{feynman_diagram_figure_1}
\end{figure}

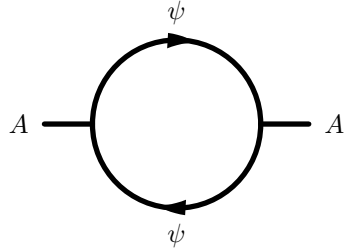
\begin{figure}[!h]
\centering
\begin{fmffile}{fermion_loop}
\begin{fmfgraph*}(100, 100)
\fmfpen{thick}
\fmfleft{i}
\fmflabel{$A$}{i}
\fmfright{o}
\fmflabel{$A$}{o}
\fmf{plain,tension=3.5}{i,v_1}
\fmf{plain,tension=3.5}{v_2,o}
\fmf{fermion,left,tension=0.5,label=$\psi$}{v_1,v_2}
\fmf{fermion,left,tension=0.5,label=$\psi$}{v_2,v_1}
\end{fmfgraph*}
\end{fmffile}
\caption{The remaining diagram that contributes to the one-loop $\beta$-function, depending on a choice of matter representation.}
\label{feynman_diagram_figure_2}
\end{figure}

We'll refer to the four diagrams in Figure \ref{feynman_diagram_figure_1} as diagram I to IV, or as $\Gamma_{\mr I}$ to $\Gamma_{\mr{IV}}$ (left to right, then top to bottom).  We'll refer to the fermionic diagram in Figure \ref{feynman_diagram_figure_2} as diagram V, or as $\Gamma_{\mr V}$.

\begin{remark} \label{no_fermion_propagator_correction_remark}
There is an additional labelling which makes sense combinatorially, where the external legs are labelled by $\psi$, and the internal edges are both labelled by $A$.  This diagram will not be relevant because it vanishes after taking cohomology, as we can see from Corollary \ref{cohomology_calculation_cor}, which tells us that the inclusion of the local functionals in pure Yang--Mills theory into the local functionals for Yang--Mills with matter is a quasi-isomorphism.
\end{remark}

Since we ultimately want to compute the \emph{$\beta$-function} rather than the functional, we'll want to compute the class of the sum of these terms $\sum_X I^{\mr{CT}}_{\log,\Gamma_X}(P(\eps,L), I)$ in cohomology.  In fact, we'll see that the terms $I^{\mr{CT}}_{\log,\Gamma_{\mr I}}(P(\eps,L), I) + I^{\mr{CT}}_{\log,\Gamma_{\mr{II}}}(P(\eps,L), I)$, $I^{\mr{CT}}_{\log,\Gamma_{\mr{III}}}(P(\eps,L), I)$, $I^{\mr{CT}}_{\log,\Gamma_{\mr{IV}}}(P(\eps,L), I)$, and $I^{\mr{CT}}_{\log,\Gamma_{\mr{V}}}(P(\eps,L), I)$ are \emph{individually} closed for the classical differential on $\OO_{\mr{loc}}(\mc E)$: this is clear for diagram $V$ from our calculations in Section \ref{deformation_complex_section}, and follows for the remaining terms by a similar spectral sequence using the filtration by sym-degree with respect to the summand $\Omega^2_+(\RR^4)$ in the BV complex.  Nevertheless we'll also see this from direct calculation.




To summarize, we've argued that the one-loop $\beta$-function is the sum of the five terms $W_{\Gamma_X}(P(\eps,L), I)$, where $X = \mr{I}, \ldots, \mr{V}$.  We must, therefore, compute these five terms.  There is, however, one more simplification which should clarify our thinking about these calculations.  Recall from the calculations in Section \ref{YM_prequantization_section} that all the bosonic propagators split into elements of $(\mc E_{\mr{ab}}^{\otimes 2})^\vee \otimes (\gg^*)^{\otimes 2}$, and the fermionic propagator splits into an element of $(\mc E_{\mr{ab}}^{\otimes 2})^\vee \otimes (V^*)^{\otimes 2}$ where $\mc E_{\mr{ab}}$ is the classical BV complex in the theory with gauge group $\mr{U}(1)$ and trivial matter representation.  The interaction terms also split in this way, and therefore the weight associated to each diagram, itself splits into the product of a quadratic functional in the abelian theory, and an element of $(\gg^*)^{\otimes 2}$.  We'll refer to this latter element as the \emph{Lie theoretic factor}.  It can be computed separately, which we'll do in Section \ref{Lie_factors_section} below.

Once we've done this, it remains for us to compute the weights $W_{\Gamma_X}(P(\eps,L), I)$ for each diagram in the abelian theory.  These are slightly messy calculations that themselves can be divided into an analytic part and a combinatorial part.  We'll explain how to do this at the beginning of Section \ref{analytic_and_combinatorial_factors_section}.  To summarize, we prove Theorem \ref{YM_beta_function_theorem} as follows.

\begin{itemize}
 \item Identify the one-loop $\beta$-functional with the sum of log divergences in one-loop diagrams.
 \item Check that only the two-leg wheel contributes to the log divergence, and notice that there are five ways of labelling this diagram by particle species.
 \item Find explicit coboundaries relating multiples of these cocycles, to determine the cohomology class of their sum.
 \item Split each of these five terms into a Lie theoretic factor and a factor coming from the purely abelian theory.
 \item Compute each of these five weights in the abelian theory.
\end{itemize}

\subsection{Lie Theoretic Factors} \label{Lie_factors_section}
Let's work out the algebraic factors in the relevant one-loop diagrams for first-order Yang--Mills theory.  These are elements in $(\gg \otimes \gg)^*$ obtained by contracting the tensors associated to the internal propagators with the tensor associated to the vertices.

First let's consider the $AA$, $AB$ and $A^\vee c$ propagators, which have the same algebraic part.  The part of the propagator is an element $P_\gg$ of the tensor product $\gg \otimes \gg$.  We can figure out exactly which element it is by considering the leading (tree level) term in the Feynman diagram expansion for the two-point function, which on the one hand sends a pair of functionals $\OO_1, \OO_2$ on the fields to $P(\OO_1 \otimes \OO_2)$, and on the other hand should be interpreted as having the value $\langle \OO_1, Q \OO_2 \rangle$ (up to a gauge fixing condition).  Since both the BV operator and the gauge fixing operator act trivially on the Lie algebra part of the fields, we expect
\[P_\gg(Y_1 \otimes Y_2) = \kappa^\vee(Y_1,Y_2)\]
where $Y_1$ and $Y_2$ are elements of $\gg^*$, $\kappa$ is the symmetric invariant pairing on $\gg$ we specified in order to define the action functional, and $\kappa^\vee$ is its dual pairing (using the fact that $\kappa$ was non-degenerate).

Similarly, we can consider the $\psi \psi$-propagator.  This has a Lie theoretic part, which is an element $P_V$ of $V \otimes V$, and by a similar argument we deduce that
\[P_V(w_1,w_2) = \mu^\vee(w_1,w_2),\]
where now $w_1$ and $w_2$ are elements of $V^*$, and $\mu \colon V \otimes V \to \RR$ is the non-degenerate pairing used to define the fermionic part of the action functional.

For the rest of this section we will work in index notation, with indices $a,b,c,\ldots$ representing a basis for $\gg$ and indices $i,j,k,\ldots$ representing a basis for $V$. We write $\kappa^{ab}$ for the pairing $\kappa \in \gg^* \otimes \gg^*$, $\kappa_{ab}$ for its dual in $\gg \otimes \gg$, $\mu^{ij}$ for the pairing $\mu \in V^* \otimes V^*$ and $\mu_{ij}$ for its dual, and finally $f^{ab}_c$ for the Lie bracket, viewed as an element of $\gg^* \otimes  \gg^* \otimes \gg$ and $\alpha^{ai}_j$ for the action of $\gg$ on $V$, viewed as an element of $\gg^* \otimes V^* \otimes V$.

\begin{prop}
The Lie theoretic part of the purely bosonic diagrams is $C(\gg)$, the quadratic Casimir invariant for the Lie algebra $\gg$.  The Lie theoretic part of the fermionic one-loop diagram is $C(V)$, the quadratic Casimir invariant for the representation $V$ of $\gg$.
\end{prop}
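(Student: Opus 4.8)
The plan is to exploit the factorized structure of the vertices and propagators of first-order Yang--Mills. Every cubic interaction vertex of the theory has a Lie-theoretic factor that is either the structure constant tensor $f\in\gg^*\otimes\gg^*\otimes\gg$ of $\gg$ --- for all of the purely bosonic vertices, since the binary bracket is built from the Lie bracket on $\gg$ together with the adjoint action of $\Omega^0(\RR^4;\gg)$, so every bosonic cubic term ($[A,A]B$, the pairing-type bracket $\Omega^1\otimes\Omega^3\to\Omega^4$, the ghost brackets such as $A\,c\,A^\vee$, etc.) carries $f$ --- or the representation tensor $\alpha\in\gg^*\otimes V^*\otimes V$ for the unique fermionic vertex $\mu(\psi,\alpha(A)\psi)$. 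Dually, by the computations of the preceding paragraphs, every propagator that can appear on an internal edge of the two-leg wheel has a Lie-theoretic factor equal to $\kappa^\vee\in\gg\otimes\gg$ (the inverse of the invariant pairing $\kappa$) in the bosonic cases $AA$, $AB$, $A^\vee c$, and to $\mu^\vee\in V\otimes V$ in the fermionic case $\psi\psi$. Since the Lie-theoretic factor of $W_\Gamma$ is exactly the contraction of the vertex tensors with the propagator tensors along the internal edges of $\Gamma$, it is insensitive to which analytic summand of a given propagator we use; in particular all four bosonic diagrams $\Gamma_{\mathrm I},\dots,\Gamma_{\mathrm{IV}}$ produce the same Lie-theoretic factor, and it remains only to evaluate two contractions.

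For the bosonic wheel I would contract the two copies of $f$ (one at each vertex) with the two copies of $\kappa^\vee$ (one on each internal edge), leaving the two external $\gg$-legs free. Using $\kappa$ to identify $\gg$ with $\gg^*$, and the invariance of $\kappa$ (which makes $f$ totally antisymmetric in the resulting lowered indices), the contraction becomes $\sum_{c,d} f_{acd}f_{bcd}$ in the external indices $a,b$; this is precisely the matrix, in the adjoint representation, of the quadratic Casimir element $\sum \kappa^{cd}T_cT_d\in U(\gg)$, and hence equals $C(\gg)\,\kappa$ by definition of $C(\gg)$. So the Lie-theoretic factor of each bosonic diagram is $C(\gg)$ (times the evident pairing on the two external legs). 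For the fermionic wheel the analogous bookkeeping glues the two copies of $\alpha$ around the loop with the two copies of $\mu^\vee$, producing the trace $\mathrm{tr}_V(\alpha^a\alpha^b)$ over the representation $V$ (indices on $\gg$ raised with $\mu$, which is legitimate since the loop is closed). This is a $G$-invariant symmetric bilinear form on $\gg$, so --- because $\gg$ is simple and therefore admits a one-dimensional space of invariant symmetric bilinear forms, spanned by $\kappa$ --- it equals $C(V)\,\kappa$ for a single scalar $C(V)$, namely the Casimir-type invariant of $V$ appearing in Theorem \ref{YM_beta_function_theorem}. This gives the second assertion.

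The one place where care is genuinely needed is the matching of conventions: one must verify that the identification $\gg\cong\gg^*$ (resp.\ $V\cong V^*$) used in the previous section to present the propagators as elements of $\gg\otimes\gg$ (resp.\ $V\otimes V$) is the same one implicitly used when the vertex tensors are dualized for contraction, so that the contraction really does assemble into the adjoint Casimir matrix (resp.\ the representation trace) rather than some other invariant tensor, and that the various bosonic vertices all carry $f$ with mutually consistent signs. Once those normalizations are pinned down, the only remaining inputs are the two standard Lie-algebra facts used above --- the value of the Casimir element in the adjoint representation, and the uniqueness of the invariant symmetric form on a simple Lie algebra --- and both parts of the proposition follow; the statistical signs and the factors of $\tfrac12$ from the symmetrized propagators are bundled into the analytic/combinatorial factors and play no role here.
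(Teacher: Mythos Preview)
Your proposal is correct and follows essentially the same approach as the paper: identify the Lie-theoretic tensors at each vertex ($f$ for bosonic, $\alpha$ for fermionic) and on each internal edge ($\kappa^\vee$ or $\mu^\vee$), contract around the two-leg wheel, and then invoke the uniqueness of the invariant symmetric bilinear form on a simple $\gg$ to identify the result as a multiple of $\kappa$, with the constant pinned down by the definition of the Casimir. The paper does the fermionic case first in explicit index notation and then observes that the bosonic case is the specialization $V=\gg$, whereas you do the bosonic case directly as the adjoint Casimir and then repeat for general $V$; this is only a difference in presentation, not in substance.
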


\begin{proof}
First, we'll identify the Lie theoretic parts of the relevant vertices.  This is fairly easy, because the vertices only involve the Lie bracket.  The $AAB$-vertex brackets first the two $A$ fields together, then pairs the result with the $B$ field, yielding the element $\kappa^{ab}f^{cd}_b \in \gg^* \otimes \gg^* \otimes \gg^*$.  The $AA^\vee c$-vertex is identical.  Finally the $A\psi\psi$-vertex has the $A$-field act on one of the $\psi$ fields, then pair with the other, so the resulting element is $\mu^{ij}\alpha^{ak}_j \in \gg^* \otimes V^* \otimes V^*$.

Now we can start evaluating diagrams (at least, their algebraic parts).  First let's consider the two leg wheel where the external legs are both $A$-fields, and the internal lines are spinors.  Evaluating this diagram yields
\begin{align*}
\mu_{ij}\mu_{kl}\mu^{im}\mu^{jn}\alpha^{ak}_m \alpha^{bl}_n &= \delta^j_m \mu_{kl} \mu^{jn}\alpha^{ak}_m \alpha^{bl}_n \\
&= \mu_{kl} \mu^{mn}\alpha^{ak}_m \alpha^{bl}_n \\
&=\alpha^{an}_l \alpha^{bl}_n \\
&= C(V) \kappa^{ab}.
\end{align*}
Here we observed that the third line involved the composition of the action $\alpha$ with $\mu \otimes \mu^\vee$, which has the affect of replacing the representation by its dual.  To deduce the last line, note that $\alpha^{an}_l \alpha^{bl}_n$ defines an invariant symmetric bilinear pairing on the Lie algebra $\gg$ (in coordinate free notation it's the pairing $\langle X, Y \rangle = \tr(\rho(X)\rho(Y))$).  Since $\gg$ is semisimple and the pairing respects the decomposition of $\gg$ into simple factors, it is proportional to the Killing form, so $\alpha^{an}_l \alpha^{bl}_n = c \cdot \kappa^{ab}$.  Finally recall that the Casimir invariant $C(V)$ is defined by
\[\alpha^{ai}_j \alpha^{bj}_k \kappa_{ab} = C(V) \delta^i_k,\]
so
\[\alpha^{ai}_j \alpha^{bj}_i \kappa_{ab} = C(V) \dim(V)\]
and thus $c = C(V)$.

Let's also consider the two leg wheels where the internal lines are given by bosonic propagators.  These diagrams can be evaluated to be
\begin{align*}
\kappa_{ab}\kappa_{cd} f^{ae}_g \kappa^{gb} f^{bf}_h \kappa^{hd} &= C(\gg) \kappa^{ef}
\end{align*}
by the same method as above (indeed, the algebraic part of this diagram is the same as the algebraic part of the previous diagram for $V = \gg$ the adjoint representation).

\end{proof}

\subsection{Analytic and Combinatorial Factors} \label{analytic_and_combinatorial_factors_section}
Having computed the Lie theoretic factors, we just have to extract the relevant singular parts of our one-loop diagrams for the gauge group $G = \mr{U}(1)$ and its trivial representation.  To compute these factors, we'll write out and evaluate the integrals computing the weight of the Feynman diagrams.  As well as an analytic calculation the expressions will involve a contraction of simple tensors, which we must evaluate to obtain an additional combinatorial factor for each diagram.

According to the discussion in Section \ref{structure_section}, it remains for us to evaluate the weight $W_{\Gamma_{\mr{X}}}(P(\eps,L), I)$ for each diagram $\mr X = \mr I, \ldots, \mr V$, and in the abelian theory.  Recall that the weight associated to a diagram is defined by contracting a copy of a propagator for each internal edge in the diagram with an interaction term for each vertex.  The propagators split up -- even in the abelian theory -- as we saw in Section \ref{YM_prequantization_section} into the tensor product of a scalar propagator with an element of a finite-dimensional graded vector space $((Y \oplus S \oplus S[-1])^*)^{\otimes 2}$: the ``combinatorial factors''.  When we evaluate the counterterms we integrate the scalar propagators and extract the logarithmic divergences, and contract the combinatorial factors with interaction terms in $(Y \oplus S \oplus S[-1])^{\otimes 3}$ for each vertex.

As such, we think about our diagrammatic calculation as follows.  First decompose the constituents of the weight of each diagram in the following way.  We'll write $\Phi$ for the graded vector space $Y \oplus S \oplus S[-1]$ -- the combinatorial part of the classical BV complex.
\begin{itemize}
 \item Decompose the external fields as elements of $C^\infty(\RR^4) \otimes \Phi$: we'll write the external fields as $\phi^i \otimes v_i$ where $\{v_i\}$ is a basis for the graded vector space $\Phi$.
 \item Likewise, decompose the propagators as elements of $\left(C^\infty(\RR_t) \otimes (C^\infty(\RR^4))^{\otimes 2}\right) \otimes (\Phi^*)^{\otimes 2}$.  According to our calculations in Section \ref{YM_prequantization_section} the propagators are not pure tensors, but can be written as a sum of the form $f^\alpha(t;x,y) \otimes c_\alpha$.
 \item Write the interaction vertices as elements $\iota$ of $\Phi^{\otimes 3}$.
 \item The resulting counterterm can now be computed as a sum over the indices $\alpha$ associated to the propagators.  If our two external fields are $\phi^i \otimes v_i$ and $\phi'^j \otimes v'_j$, our two source terms are two propagators are $f^\alpha(t;x,y) \otimes c_\alpha$ and $g^\beta(t;x,y) \otimes d_\beta$, and our two interaction terms are $\iota$ and $\iota'$, then the logarithmic counterterm associated to the diagram has the form
 \begin{align}\label{form_of_counterterms}
I^{\mr{CT}}_{\log} &= \sum_{\alpha, \beta, i, j} \mr{sing}_{\log(\eps)} \left(\int \dvol_x \int \dvol_y \int_\eps^L \d t_1 \int_\eps^L \d t_2 \phi^i \phi'^j f^\alpha(t_1; x,y) g^\beta(t_2;x,y)\right) \cdot \langle \iota\otimes \iota', c_\alpha \otimes d_\beta \otimes v_i \otimes v'_j \rangle \nonumber \\
 &= \sum_{\alpha, \beta,i,j} I_\Gamma^{\alpha \beta i j} C_{\Gamma,\alpha \beta i j},
 \end{align}
 where we've explicitly written out the sum for clarity.  Here the angle brackets indicate the contraction of tensors according to the shape of the diagram.  We'll refer to $I_\Gamma^{\alpha \beta i j}$ as the \emph{analytic weights} of the diagram, and to $C_{\Gamma,\alpha \beta i j}$ as the \emph{combinatorial weights} of the diagram.
\end{itemize}

We've computed all the relevant logarithmic singular parts of regularization integrals in Appendix \ref{analytic_appendix}; in the rest of this section we'll compute, for each diagram, the sum over the indices $\alpha$ and $\beta$ weighted by the combinatorial factors $\langle \iota\otimes \iota', c_\alpha \otimes d_\beta \rangle$. 

\subsubsection{Diagram I}
\begin{definition}
From now on we'll write $I^{\mr{CT}}_{\mr{X}}$ for the logarithmic counterterm  $\lim_{L\to 0}\lim_{\eps \to 0}I^{\mr{CT}}_{\log,\Gamma_{\mr X}}(P(\eps,L), I)$ that computes the part of the observable $\OO_\beta^{(1)}$ associated to the diagram $\Gamma_X$.
\end{definition}

We compute to the logarithmic divergent part $I^{\mr{CT}}_{\mr{I}}$ of the weight $W_{\Gamma_1}(P(\eps,L), I)$ of diagram I using the structure of Equation \ref{form_of_counterterms}.  In this diagram the external legs are both labelled by $A = A_a \d x^a$, and the internal propagators are both copies of $P^i_{AB}$. As mentioned above we will write the weight as the contraction of an analytic weight and a combinatorial weight. Indeed, we see that
\[I^{\mr{CT}}_{\Gamma_{\mr{I}}, \log} (A) = g^2 C_{\Gamma_{\mr{I}}, abij} I^1_{ij}(A_a(x) A_b(y))\]
where, for $\varphi \in C^\infty_c(\RR^4 \times \RR^4)$, we define
\[I^1_{ij}(\varphi) := \mr{Sing}_{\log \eps} \left(\int_x \int_y \int_{t_1 =\eps}^L \d t_1 \int_{t_2 = \eps}^L \d t_2 \varphi(x,y) \frac{\partial k_{t_1}}{\partial x^i}(x,y) \frac{\partial k_{t_2}}{\partial x^j} (x,y) \dvol_x \dvol_y\right) .\]
 In Appendix \ref{analytic_appendix} we compute, using Wick's formula, the integral $I_1^{ij}$. Indeed, according to Proposition \ref{xx_integral_prop} we see that
\[I^1_{ij}(\varphi) = - \frac{1}{16 \pi^2}\frac{1}{6} \left(\int_x \left(\frac{\partial^2 \varphi}{\partial x^i \partial x^j}\right)_{x = y} \dvol_x +
\frac{1}{2} \delta^{ij} \int_x \left(\frac{\partial^2 \varphi}{\partial x^m \partial x^m} \right)_{x=y} \dvol_x \right).\]

Next, we can compute the combinatorial weights.  Again using the formula for $P^i_{AB}$ and the Feynman rules for first-order Yang--Mills, we can write down these weights as
\[C_{\Gamma_{\mr{I}}, abij} \dvol_x \otimes \dvol_y = \left(\d x^a \d x^n \sigma^{1m}_x \right) \otimes \left( \d y^b \ast(\d y^i \sigma^{1m}) \sigma^{jn}_y\right).\]
We can compute these via an elementary calculation.
\begin{lemma}
If $a=b$, the combinatorial tensor can be computed as
\[C_{\Gamma_{\mr{I}}, aaij} =\begin{cases} 0 & \text{if } i \ne j \\ 3 &\text{if } i = j = a \\ -2 &\text{if } i = j \ne a \end{cases}.\]
If $a \ne b$, the combinatorial tensor is instead given by
\[C_{\Gamma_{\mr{I}}, abij} =\begin{cases} 3 & \text{if } i = b, j = a \\ 2 &
    \text{if } i = a, j = b \\ \pm 1 &\text{if } \eps_{ijab} = \pm 1 \\ 0 &\text{otherwise} \end{cases}.\]
In particular in this latter situation only the cases where $i=b, j=a$ and $i = a, j = b$ contributes to the counterterm, because the analytic integral is symmetric in $i$ and $j$, and the contraction with a purely antisymmetric tensor vanishes.
\end{lemma}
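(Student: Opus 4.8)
The plan is to reduce the lemma to a short, completely mechanical computation of contractions of constant-coefficient forms on $\RR^4$. Recall we have fixed the self-dual basis $\sigma^{12}_x, \sigma^{13}_x, \sigma^{14}_x$ with $I, J, K$ ranging over $\{12,13,14\}$, and the combinatorial tensor for diagram I is
\[
C_{\Gamma_{\mr I}, abij}\,\dvol_x\otimes\dvol_y = \bigl(\d x^a\wedge\d x^n\wedge\sigma^{1m}_x\bigr)\otimes\bigl(\d y^b\wedge{\ast}(\d y^i\wedge\sigma^{1m}_y)\wedge\sigma^{jn}_y\bigr),
\]
with the summation convention on the repeated indices $m,n$. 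First I would record the elementary wedge identities on $\RR^4$ that are needed: $\d x^a\wedge\d x^n\wedge\sigma^{1m}$ is a multiple of $\dvol$, nonzero only for specific $(a,n,m)$, and likewise ${\ast}(\d x^i\wedge\sigma^{1m})$ is a one-form whose expansion in $\d x^1,\dots,\d x^4$ I would write down explicitly (using $\sigma^{12}=\d x^1\d x^2+\d x^3\d x^4$ etc.), and then $\d y^b\wedge(\text{that one-form})\wedge\sigma^{jn}$ is again a multiple of $\dvol$. So the whole tensor is a finite sum of products of these numerical coefficients, indexed by $m,n\in\{1,\dots,4\}$ (effectively $m\ne 1$, $n\ne 1$ since $\sigma^{1m}$ and $\sigma^{jn}$ force this), which I would simply tabulate.

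The computation then naturally splits into the case $a=b$ and $a\ne b$. In the case $a=b$: by the antisymmetry forced in the first factor, $\d x^a\wedge\d x^n\wedge\sigma^{1m}$ needs $n\ne a$ and the pair $\{1,m\}$ disjoint from $\{a,n\}$; running through the three possibilities for $a$ (and $a=1$ separately), summing over the allowed $(m,n)$, one finds the contraction is diagonal in $(i,j)$ and reproduces the claimed values $0$ for $i\ne j$, $3$ for $i=j=a$, and $-2$ for $i=j\ne a$. Here the $3$ versus $-2$ asymmetry comes from the self-dual $B$-field: the coefficient $-\tfrac12$ in $K_{BB^\vee}$ together with the three-term structure of ${\ast}(\d x^i\wedge\sigma^{1m})$ produces the split $3 = 1 + 2$ in the direction $i=a$ and a cancellation leaving $-2$ in the transverse directions. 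In the case $a\ne b$: the same bookkeeping gives a nonzero contraction only when $\{i,j\}\subseteq\{a,b\}$ or when $(i,j,a,b)$ is a permutation of $(1,2,3,4)$; the $i=b,j=a$ case yields $3$, the $i=a,j=b$ case yields $2$, and the genuinely four-distinct-indices case yields $\pm 1 = \eps_{ijab}$, which is totally antisymmetric in all four indices. Finally I would invoke the shape of the analytic weight $I^1_{ij}$ from Proposition \ref{xx_integral_prop}: it is manifestly symmetric under $i\leftrightarrow j$, so its contraction with the $\eps_{ijab}$ piece (antisymmetric in $i,j$) vanishes, leaving only the $i=b,j=a$ and $i=a,j=b$ contributions — which is the ``in particular'' clause.

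I do not expect any genuine obstacle: this is purely a finite linear-algebra verification over $\RR^4$, and the only care needed is (i) keeping the orientation/Hodge-star sign conventions consistent with those fixed in Section \ref{YM_prequantization_section}, so that the coefficients of $K_{AA^\vee}$, $K_{BB^\vee}$ and the propagator $P^i_{AB}$ match, and (ii) correctly handling the summation over the dummy indices $m$ and $n$, since it is easy to miscount which terms survive the wedge constraints. The mildly delicate bit is organizing the $a\ne b$ case so that all the permutation cases are covered once and only once; I would do this by symmetry-reducing to, say, $a<b$ and then listing the finitely many $(i,j)$ and $(m,n)$ that contribute, rather than blindly expanding everything.
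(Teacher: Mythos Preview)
Your proposal is correct and takes the same approach as the paper: the paper introduces the lemma with ``We can compute these via an elementary calculation'' and gives no further proof, so a direct tabulation of the wedge/Hodge-star contractions in the formula
\[
C_{\Gamma_{\mr I},abij}\,\dvol_x\otimes\dvol_y=\bigl(\d x^a\,\d x^n\,\sigma^{1m}_x\bigr)\otimes\bigl(\d y^b\,{\ast}(\d y^i\,\sigma^{1m}_y)\,\sigma^{jn}_y\bigr)
\]
is exactly what is intended. One small caution on your parenthetical ``effectively $m\ne 1$, $n\ne 1$'': the index $m$ in $\sigma^{1m}$ indeed runs over $\{2,3,4\}$, but $\sigma^{jn}$ here is the self-dual two-form built from $\d x^j\wedge\d x^n$ with $j$ a free index, so the constraint on $n$ is $n\ne j$ rather than $n\ne 1$; be sure to track this when you tabulate.
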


Let's put the analytic and combinatorial factors together.  We'll write $J^{abij}$ for the singular part of the integral $\mr{Sing}_{\log \eps}\left(\int \frac{\dd A_a}{\dd x^i \dd x^j} A_b \dvol_x\right)$.  If $a=b$ the contraction of the analytic and combinatorial tensors contributes
\begin{align*}
\sum_{i,j} C_{\Gamma_{\mr{I}}, abij}\left(J^{abij}+ \frac{1}{2} \delta_{ij} J^{abij} \right)
&= \left(3-\frac 32 \right) J^{aaaa} + \sum_{i \ne a} \left(-2 - \frac 32 \right) J^{aaii}\\
&= \frac 32 J^{aaaa} - \sum_{i \ne a} \frac 72 J^{aaii}
\end{align*}
and if $a \ne b$ it contributes
\[\sum_{i,j} C_{\Gamma_{\mr{I}}, abij}\left(J^{abij} \dvol_x + \frac{1}{2}\delta_{ij} J^{abij} \right) = 5 J^{abba}.\]
Therefore, the logarithmic counterterm associated to diagram $I$ in the abelian theory is the local functional
\[I^{\mr{CT}}_{\mr{I}}(A) = \frac {g^2}{16\pi^2} \frac 16 \left(-\sum_a \frac 32 J^{aaaa} + \sum_{i \ne a} \frac 72 J^{aaii} - 5 \sum_{a \ne b} J^{abba}\right).\]
This functional is not a cocycle, but we'll see shortly that its sum with the logarithmic counterterm associated to diagram II \emph{is} a cocycle.

\subsubsection{Diagram II}
Now, let's compute the logarithmic divergent piece $I^{\mr{CT}}_{\mr{II}}$ of the weight of diagram II in the same way.   Again, we write the weight as a contraction
\[I^{\mr{CT}}_{\mr{II}} = g^2 C_{\Gamma_{\rm II}, ij} I_{1}^{ij}(A_a(x) A_b(y))\]
of the combinatorial and analytic weights. Since the analytic part of this propagator is the same as the analytic part of $P^i_{AB}$, the analytic tensor is the same as in diagram I, that is $I_{\Gamma_{\mr {II}}}^{ij} = - \frac{1}{16 \pi^2}\frac{1}{6} (J^{abij} + \frac 12 \delta_{ij} J^{abmm})$ where $J^{abij} = \mr{Sing}_{\log \eps}\left(\int \frac{\dd A_a}{\dd x^i \dd x^j} A_b \dvol_x\right)$.

It remains to compute the combinatorial weights.  These are straightforward to evaluate:
\begin{align*}
C_{\Gamma_{\mr II},ij} \dvol_x \dvol_y &= -((\d x^a \ast \d x^i) \otimes (\d y^b \ast \d y^j))\\
\text{so } C_{\Gamma_{\mr II},ij} &= -\delta^{ia}\delta^{jb}.
\end{align*}

Putting the analytic and combinatorial factors together, the total contribution of diagram II is
\begin{align*}
I^{\mr{CT}}_{\mr{II}}(A) &= \frac{g^2}{16 \pi^2}\frac{1}{6} \delta^{ia}\delta^{jb}\left(J^{abij} + \frac 12 J^{abii}\right)\\
&= \frac{g^2}{16 \pi^2}\frac{1}{6} \left(\frac 32 \sum_a J^{aaaa} + \sum_{a \ne b} J^{abba} + \frac 12 \sum_{a\ne m} J^{aamm}\right).
\end{align*}
Again, this local functional is not a cocycle. However, when we take the sum of the logarithmic counterterms for diagrams I and II, the result \emph{is} a cocycle:
\begin{align*}
I^{\mr{CT}}_{\mr{I}}(A) + I^{\mr{CT}}_{\mr{II}}(A) &= \frac{g^2}{16\pi^2}\frac{1}{6}\left(4\sum_{a\ne m} J^{aamm}-4 \sum_{a \ne b} J^{abba}\right)  \\
&= -\frac{g^2}{16 \pi^2}\frac 23 \int_x \d A \wedge \ast \d A \\
&= -\frac{g^2}{16 \pi^2}\frac 43 \int_x F_{A+}\wedge F_{A+}.
\end{align*}

\subsubsection{Diagram III}
We now consider diagram III.  This diagram uses the $AA$ and $AB$ propagators and has two inputs both labelled by $A = A_a(x) \d x^a$ and $B = B_b(x) \sigma^{1b}_x$.  As above, the log counterterm associated to diagram III can be evaluated as a contraction
\[I^{\mr{CT}}_{\Gamma_{\mr{III}}, \log} (A,B) = g^2 C_{\Gamma_{\mr{III}}, abijk} I^3_{ijk}\left(A_a(x)B_b(y)\right)\]
of an analytic and a combinatorial weight.  The analytic weight associated to diagram III now depends on both the external fields $A$ and $B$, and the analytic parts of the propagators $P^i_{AB}$ and $P^{jk}_{AA}$.  It'll be simplest to split this analytic weight up into two summands.  Indeed, we observe that $\d^*_+ = -\d_+ \ast$, and $\d^* \d^*_+ = \frac 12(-\ast \Delta + \d \ast \d )$, which means we can split up the weight of diagram III -- computed using the $AB$ and $AA$ propagators from Section \ref{YM_puregauge_section} -- as 
\begin{align*}
I^{\mr{CT}}_{\mr{III}}(A,B) &=\mr{Sing}_{\log \eps}2\int_{x,y,t_1,t_2} A(x) B(y) \left( 2(\d^*_+ \otimes 1) K_{t_1}^{AA^\vee} \right) \left(4 t_2 (\d^* \d_+^*\otimes 1)K_{t_2}^{AA^\vee} \right)\\
&= \mr{Sing}_{\log \eps}8 \int_{x,y,t_1,t_2} A(x) B(y) \left( (\d_+ * \tensor 1) K_{t_1}^{AA^\vee} \right) \left( t_2 (* \Delta \tensor 1)K_{t_2}^{AA^\vee} \right) +\\
&\quad - \mr{Sing}_{\log \eps}8 \int_{x,y,t_1,t_2} A(x) B(y) \left( (\d_+ * \tensor 1) K_{t_1}^{AA^\vee} \right) \left( t_2 (\d * \d \tensor 1) K_{t_2}^{AA^\vee} \right).
\end{align*}
where the factor of 2 appears because we must count not only the diagram with a $P_{AB}$ propagator in the $t_1$ slot and a $P_{AA}$ propagator in the $t_2$ slot as written here, but also the (equal) diagram with a $P_{AA}$ propagator in the $t_1$ slot and a $P_{AB}$ propagator in the $t_2$ slot.

We can write the analytic weights explicitly for a compactly supported function $\varphi(x,y)$, as
\begin{align*}
I^3_{ijk}\left(\varphi(x,y)\right) &= \mr{Sing}_{\log \eps}4 \int_{x,y,t_1,t_2} \varphi(x,y) t_2 \frac{\dd k_{t_1}}{\dd x^i} \frac{\dd k_{t_2}}{\dd t_2} \dvol_x \otimes \dvol_y \d t_1 \d t_2 + \\
&\quad - \mr{Sing}_{\log \eps}4 \int_{x,y,t_1,t_2} \varphi(x,y) t_2 \frac{\dd k_{t_1}}{\dd x^i} \frac{\dd^2 k_{t_2}}{\dd x^j \dd x^k} \dvol_x \otimes \dvol_y \d t_1 \d t_2.
\end{align*}
We can again compute these log divergences using results from the appendix.  By Proposition \ref{tt_integral_prop} and Proposition \ref{xxx_integral_prop} respectively, we have
\begin{align*}
&\mr{Sing}_{\log \eps} \int_{x,y,t_1,t_2} \varphi(x,y) t_2 \frac{\dd k_{t_1}}{\dd x^i} \frac{\dd k_{t_2}}{\dd t_2} \dvol_x \otimes \dvol_y \d t_1 \d t_2 = -\frac{1}{16 \pi^2} \frac{1}{4} \int_{x} \left. \frac{\partial \varphi}{\partial x^i} \right|_{x=y} \dvol_x\\
\text{and } &\mr{Sing}_{\log \eps} \int_{x,y,t_1,t_2} \varphi(x,y) t_2 \frac{\dd k_{t_1}}{\dd x_i} \frac{\dd^2 k_{t_2}}{\dd x^j \dd x^k} \dvol_x \otimes \dvol_y \d t_1 \d t_2 = \\
&\qquad \frac{1}{16 \pi^2} \frac{1}{12} \left(\delta^{ij} \int_{x} \left. \frac{\partial
  \varphi}{\partial x^k} \right|_{x=y} \dvol_x + \delta^{ik} \int_{x} \left. \frac{\partial
  \varphi}{\partial x^j} \right|_{x=y} \dvol_x -2 \delta^{jk} \int_{x} \left. \frac{\partial
  \varphi}{\partial x^i} \right|_{x=y} \dvol_x\right)
\end{align*}

Next we need to compute the combinatorial weights $C_{\Gamma_{\mr{III}}, abijk}$.  From the calculation above we can see that
\begin{align*}
 C_{\Gamma_{\mr{III}}, abijk} \dvol_x \otimes \dvol_y &= \d x^a \d x^k \sigma^{im}_x \otimes \sigma^{1b}_y \otimes \d y^m \otimes \d y^j \\
 \text{so } C_{\Gamma_{\mr{III}}, 1bijk}\delta^{ij} &= -4\delta^{bk} \\
 C_{\Gamma_{\mr{III}}, 1bijk}\delta^{ik} &= -3\delta^{bj} \\
 \text{and } C_{\Gamma_{\mr{III}}, 1bijk}\delta^{jk} &= +1\delta^{bi}.
\end{align*}
where for simplicity we've analyzed the combinatorial factors for $a=1$; the general combinatorial factor is similar.  Contracting the combinatorial and analytic weights, in the case where $A = A_1 \d x^1$, we find
\begin{align*}
I^{\mr{CT}}_{\mr{III}}(A,B) &= \frac{g^2}{16\pi^2}\Bigg(-\frac 44 \delta^{bi} \int \frac{\dd A_1}{\dd x^i} B_b(x) \dvol_x +\\
&\quad -\frac 4{12}\left(-4\delta^{bk}\int \frac{\dd A_1}{\dd x^k} B_b(x) \dvol_x -3\delta^{bj}\int \frac{\dd A_1}{\dd x^j} B_b(x) \dvol_x -2\delta^{bi}\int \frac{\dd A_1}{\dd x^i} B_b(x) \dvol_x\right)\Bigg)\\
&= \frac{g^2}{16\pi^2} 2\int \frac{\dd A_1}{\dd x^b} B_b(x) \\
&= -\frac{g^2}{16\pi^2} 2\int F_A \wedge B.
\end{align*}
The calculation is identical for $a=2,3$ and 4. Since the functional is linear in $A$, the functional $I^{\mr{CT}}_{\mr{III}}(A,B)$ is equal to the cocycle $-\frac{g^2}{16\pi^2} 2\int F_A \wedge B$.

\subsubsection{Diagram IV}

The final diagram involving pure gauge propagators is diagram IV. This diagram uses only the $AA$ propagator and has two inputs both labelled by $B$. As above we can write the logarithmic counterterm as a contraction of an analytic and a combinatorial weight
\[I^{\rm CT}_{\Gamma_{\rm IV}, \log}(B) = g^2 C_{\Gamma_{\rm IV}, ijk\ell m n} I_3^{ijk\ell} (B_a(x)B_b(y)),\]
but this time we'll evaluate the analytic weight by splitting it up into \emph{four} summands.  Indeed, we know from the calculation of $P_{AA}$ in Section \ref{YM_puregauge_section}, combined with the identity $\d^* \d^*_+ = \frac 12(-\ast \Delta + \d \ast \d )$, that we can write the weight associated to $\Gamma_{\mr{IV}}$ as
\begin{align*}
I^{\mr{CT}}_{\mr{IV}}(B) &= \mr{Sing}_{\log \eps} \int_{x,y,t_1,t_2} B(x) B(y) \left( 4 t_1 (\d^* \d^*_+ \tensor 1)  K_{t_1} (x,y) \right) \left(4 t_2 (\d^* \d^*_+ \tensor 1)  K_{t_2} (x,y) \right) \\
&=  \mr{Sing}_{\log \eps}4\int_{x,y,t_1,t_2} B(x) B(y) \left((t_1 * \Delta \tensor 1) K^{AA^\vee}_{t_1}(x,y) \right) \left((t_2 * \Delta \tensor 1) K^{AA^\vee}_{t_2} (x,y) \right) +\\
 &\quad -  \mr{Sing}_{\log \eps}4\int_{x,y,t_1,t_2} B(x) B(y) \left((t_1 * \Delta \tensor 1) K^{AA^\vee}_{t_1}(x,y) \right) \left((t_2 \d * \d \tensor 1) K^{AA^\vee}_{t_2} (x,y)\right) \\ 
 &\quad -   \mr{Sing}_{\log \eps}4\int_{x,y,t_1,t_2} B(x) B(y) \left((t_1 \d * \d \tensor 1) K^{AA^\vee}_{t_1} (x,y) \right) \left((t_2 * \Delta \tensor 1) K^{AA^\vee}_{t_2} (x,y)\right) +\\ 
&\quad +  \mr{Sing}_{\log \eps}4\int_{x,y,t_1,t_2} B(x) B(y) \left((t_1 \d * \d \tensor 1) K^{AA^\vee}_{t_1}(x,y) \right) \left((t_2 \d * \d \tensor 1) K^{AA^\vee}_{t_2} (x,y) \right) .
\end{align*}

We can rewrite this a bit more explicitly by expanding $B(x)$ as $B_a(x) \sigma^{1a}_x$ and using the observation that the second and third integrals are equal.  We find
\begin{align*}
I^{\mr{CT}}_{\mr{IV}}(B) &= 4C_{\Gamma_{\rm IV}, abmnmn}\mr{Sing}_{\log \eps}\int_{x,y,t_1,t_2} B_a(x) B_b(y)t_1 t_2 \frac{\dd k_{t_1}}{\dd t_1} \frac{\dd k_{t_2}}{\dd t_2} \dvol_x \otimes \dvol_y \d t_1 \d t_2 + \\
&\quad +8C_{\Gamma_{\rm IV}, abinmn} \mr{Sing}_{\log \eps}\int_{x,y,t_1,t_2} B_a(x) B_b(y)t_1 t_2 \frac{\dd^2 k_{t_1}}{\dd x^m \dd x^i} \frac{\dd k_{t_2}}{\dd t_2} \dvol_x \otimes \dvol_y \d t_1 \d t_2 + \\
&\quad +4C_{\Gamma_{\rm IV}, abijmn}\mr{Sing}_{\log \eps} \int_{x,y,t_1,t_2} \frac{\partial B_a(x)}{\partial x^n} B_b(y)t_1 t_2 \frac{\dd^2 k_{t_1}}{\dd x^m \dd x^i} \frac{\dd^2 k_{t_2}}{\dd x^j} \dvol_x \otimes \dvol_y \d t_1 \d t_2.
\end{align*}
In the last line, we have performed an integration by parts.

We can simplify these analytic weights using our calculations from the appendix.  By Propositions \ref{tt_integral_prop}, \ref{xxt_integral_prop} and \ref{xxx2_integral_prop} respectively, we know that for any compactly supported function $\varphi(x,y)$ we have
\begin{align*}
&\mr{Sing}_{\log \eps}\int_{x,y,t_1,t_2} \varphi(x,y) t_1 t_2 \frac{\dd k_{t_1}}{\dd t_1} \frac{\dd k_{t_2}}{\dd t_2} \dvol_x \otimes \dvol_y \d t_1 \d t_2 = - \frac{1}{16 \pi^2} \int_x \varphi(x,x) \dvol_x\\
&\mr{Sing}_{\log \eps}\int_{x,y,t_1,t_2} \varphi(x,y) t_1 t_2 \frac{\dd^2 k_{t_1}}{\dd x^m \dd x^i} \frac{\dd k_{t_2}}{\dd t_2} \dvol_x \otimes \dvol_y \d t_1 \d t_2 = \frac{1}{16 \pi^2} \delta^{im} \frac{1}{4} \int_x \varphi(x,x) \dvol_x \\
\text{and } &\mr{Sing}_{\log \eps} \int_{x,y,t_1,t_2} \varphi(x,y)t_1 t_2 \frac{\dd^2 k_{t_1}}{\dd x^m \dd x^i} \frac{\dd k_{t_2}}{\dd x^j} \dvol_x \otimes \dvol_y \d t_1 \d t_2 = 0.
\end{align*}

It remains for us to compute the combinatorial weight $C_{\Gamma_{\rm IV}, abijmn}$, which is fairly straightforward.  In general the combinatorial weight is given by the formula
\[C_{\Gamma_{\rm IV}, abijmn} \dvol_x \otimes \dvol_y = \sigma^{1a}_x \d x^m \d x^j \otimes \sigma^{1b}_y \d y^n \d y^i\]
so in particular $C_{\Gamma_{\rm IV}, abmnmn} = -4 \delta^{ab}$.  Therefore the total weight is given by the contraction
\begin{align*}
I^{\mr{CT}}_{\mr{IV}}(B) &= -\frac {g^2}{16\pi^2}\left(-16 + \frac {32}4\right) \int_x B_a(x) B_a(x) \dvol_x  \\
&= -\frac {g^2}{16\pi^2} \frac{32}4  \int_x B_a(x) B_a(x) \dvol_x  \\ 
&= -\frac {g^2}{16\pi^2} 4  \int B \wedge B.
\end{align*}

\subsubsection{Diagram V}
We conclude with diagram V, which uses the spinor propagator from Section \ref{YM_spinor_section} and has two inputs both labelled by $A$. The weight here is easy to write down explicitly: 
\[I^{\mr{CT}}_{\mr{V}}(A) = g^2 C_{\Gamma_{\mr{V}}, abij} I^1_{ij}(A_a(x) A_b(y))\]
where, as for diagrams I and II
\[I^1_{ij}(A_a(x) A_b(y)) = \mr{Sing}_{\log \eps}\int_{x,y,t_1,t_2} A_a(x) A_b(y) \frac{\dd k_{t_1}}{\dd x^i}\frac{\dd k_{t_2}}{\dd x^j} \dvol_x \otimes \dvol_y,\]
which can be simplified just as we did for those diagrams: $I^1_{ij} = - \frac{1}{16 \pi^2}\frac{1}{6} (J^{abij} + \frac 12 \delta_{ij} J^{abmm})$ where $J^{abij} = \mr{Sing}_{\log \eps}\left(\int \frac{\dd A_a}{\dd x^i \dd x^j} A_b \dvol_x\right)$. 

The combinatorial weights are given by $C_{\Gamma_{\mr{V}}, abij} = -\tr(\Gamma^i\Gamma^a\Gamma^j\Gamma^b)$, which can be simplified using standard facts about $\Gamma$-matrices:
\[\tr(\Gamma^i\Gamma^a\Gamma^j\Gamma^b) = 4(\delta^{ia}\delta^{jb} - \delta^{ij}\delta^{ab} + \delta^{ib}\delta^{aj}).\]
Therefore, the contraction of the analytic and combinatorial weights gives the overall expression
\begin{align*}
I^{\mr{CT}}_{\mr{V}}(A) &= \frac{g^2}{16\pi^2}\frac 46 \left(J^{abab}- J^{aaii} + J^{abba} - \frac 12\left(J^{aamm}-\delta^{ii}J^{aamm}+J^{aamm}\right) \right) \\
&= \frac{g^2}{16\pi^2}\frac 43\left(J^{abba}-J^{aamm}\right)\\
&= \frac{g^2}{16\pi^2}\frac 43 \int \d A \wedge \ast \d A\\
&= \frac{g^2}{16\pi^2}\frac 83 \int F_{A+} \wedge F_{A+}.
\end{align*}

\subsection{Completing the Proof}
To conclude the proof of Theorem \ref{YM_beta_function_theorem} we must compare the cocycles we computed above.  That is, since we know from the calculations of Section \ref{deformation_complex_section} that the complex of local functionals is quasi-isomorphic to $\RR$, in order to define the one-loop $\beta$-function we must fix a choice of quasi-isomorphism.  We'll choose the quasi-isomorphism given by the action functional of the first-order formalism: the one that sends the cocycle $(A,B) \mapsto S_{\mr{FO}}(A,B)$ to $1 \in \RR$.  As we'll see, this is equal to half the trivialization that sends the cocycle $(A,B) \mapsto \int F_{A+} \wedge F_{A+}$ to 1.

Since $H^0(\OO_{\mr{loc}}(\mc E))$ is 1-dimensional, any 0-cocycle is cohomologous to some multiple of $\int F_{A+} \wedge F_{A+}$.  In particular, we can write down some explicit coboundaries.  Specifically, it's straightforward to compute
\begin{align*}
\d_{\mr{cl}}\left(\int F_{A+} \wedge B^\vee\right) = 2\left(\int F_{A+} \wedge F_{A+} - \int F_{A+} \wedge B\right) \\
\d_{\mr{cl}}\left(\int B \wedge B^\vee\right) = 2\left(\int B \wedge F_{A+} - \int B \wedge B\right).
\end{align*}
This tells us that, in cohomology, 
\[\left[\int F_{A+} \wedge F_{A+}\right] = \left[\int B \wedge F_{A+}\right] = \left[\int B \wedge B\right].\]
The action functional in our first-order theory took the form $\int B \wedge F_{A+} - \frac 12 \int B \wedge B$, so it represents the cohomology class $\frac 12\left[\int F_{A+} \wedge F_{A+}\right]$.

Now we can apply this to the weights we calculated above.  We calculated the weights in the abelian theory, but the weights in the general non-abelian theory are simply the products of the abelian weights with the relevant Casimir invariants, as we saw in Section \ref{Lie_factors_section}.  Thus in a general first-order Yang--Mills theory we've shown that
\[\OO_\beta^{(1)}(A,B) = \frac {g^2}{16\pi^2}\frac 12 \left(-\frac 43 C(\gg) \int F_{A+} \wedge F_{A+} - 2C(\gg) \int F_{A+} \wedge B -4C(\gg) \int B \wedge B + \frac 83 C(V) \int F_{A+} \wedge F_{A+}\right),\]
and therefore, taking the cohomology class of this functional and applying our chosen trivialization, we get
\[\beta^{(1)}(g) = \frac {g^3}{16\pi^2}\left( -\frac {11}3 C(\gg) + \frac 43 C(V)\right)\]
recovering the physically expected result.

\appendix
\appendixpage
\addappheadtotoc
\section{A General Version of Wick's Lemma} \label{gaussian_appendix}
Let $Q$ be an $n \times n$ symmetric positive-definite matrix. Define
the following differential operator acting on $C^\infty(\RR^n)$

\begin{align*}
D(Q) & = \exp \left( \frac{1}{2} (Q^{-1})_{ij} \frac{\partial}{\partial
    x^i} \frac{\partial}{\partial x^j} \right) \\ & = 1 + \frac{1}{2}
(Q^{-1})_{ij} \frac{\partial}{\partial
    x^i} \frac{\partial}{\partial x^j} + \frac{1}{8} (Q^{-1})_{ij}
  (Q^{-1})_{k \ell} \frac{\partial}{\partial
    x^i} \frac{\partial}{\partial x^j}\frac{\partial}{\partial
    x^k} \frac{\partial}{\partial x^\ell} + \cdots .
\end{align*}

Introduce a formal parameter $\lambda$ and consider the symmetric matrix $Q / \lambda$ thought of as a matrix in $\RR[\lambda , \lambda^{-1}] \subset \RR[\lambda^{1/2}, \lambda^{-1/2}]$. If $f(\lambda)$ is an element in $\RR[\lambda^{1/2},\lambda^{-1/2}]$ we let $f(\lambda)_{(N)}$ denote its $N$th truncation, i.e. the sum of terms of $f(\lambda)$ of homogeneous degree $\leq N$. 

\begin{prop} \label{general_Wick_prop}
Let $\varphi \in
  C^\infty_c(\RR^n)$ be a compactly supported function and $Q$ be a symmetric positive-definite matrix $Q$. Then, for any $N > 0$ one has
\begin{equation}\label{cpt wick}
\int_{x \in \RR^n} \varphi(x) e^{-\frac{1}{2} x^T (Q/\lambda) x} \dvol_x =
\left(\lambda^{d/2} \sqrt{\det \left(2 \pi Q^{-1}\right)} \left(D(Q/\lambda) \varphi\right)(0)\right)_{(N)} + O(\lambda^{N+d/2+1}) .
\end{equation}
\end{prop}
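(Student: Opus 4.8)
The plan is to prove this by the classical Laplace (Watson lemma) method: rescale so that the Gaussian concentrates at the origin, Taylor-expand the amplitude $\varphi$ there, integrate the resulting polynomial terms against the Gaussian via the finite-dimensional Wick formula, and absorb the tail using that all derivatives of $\varphi$ are uniformly bounded. Throughout, write $d=n$ for the dimension. First I would substitute $x=\sqrt{\lambda}\,u$, which turns $x^{T}(Q/\lambda)x$ into $u^{T}Qu$ and $\dvol_x$ into $\lambda^{d/2}\dvol_u$, so that
\[
\int_{\RR^n}\varphi(x)\,e^{-\frac12 x^{T}(Q/\lambda)x}\,\dvol_x \;=\; \lambda^{d/2}\int_{\RR^n}\varphi(\sqrt{\lambda}\,u)\,e^{-\frac12 u^{T}Qu}\,\dvol_u .
\]

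Next I would fix $M=2N+1$ and split $\varphi(\sqrt\lambda\,u) = T_M(u) + R_M(u)$, where $T_M(u)=\sum_{|\alpha|\le M}\frac{\lambda^{|\alpha|/2}}{\alpha!}(\del^{\alpha}\varphi)(0)\,u^{\alpha}$ is the order-$M$ Taylor polynomial of $\varphi$ at the origin, evaluated at $\sqrt\lambda\,u$. Taylor's theorem with integral remainder, together with the fact that $\sup_y|\del^{\alpha}\varphi(y)|<\infty$ for every $\alpha$ — which is exactly where compact support of $\varphi$ enters — produces a constant $C_M$ independent of $\lambda$ and $u$ with $|R_M(u)|\le C_M\,\lambda^{(M+1)/2}|u|^{M+1}$. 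Since $\int_{\RR^n}|u|^{M+1}e^{-\frac12 u^{T}Qu}\,\dvol_u$ is finite, the $R_M$-contribution to the rescaled integral is $O(\lambda^{d/2+(M+1)/2})=O(\lambda^{N+d/2+1})$, which goes into the error term.

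For the polynomial part I would use the classical Wick/Isserlis identity in generating-function form: for each multi-index $\alpha$,
\[
\int_{\RR^n} u^{\alpha}\,e^{-\frac12 u^{T}Qu}\,\dvol_u \;=\; \sqrt{\det(2\pi Q^{-1})}\,\bigl(D(Q)(x^{\alpha})\bigr)(0),
\]
which follows by completing the square in $\int e^{-\frac12 u^{T}Qu+j\cdot u}\,\dvol_u = \sqrt{\det(2\pi Q^{-1})}\,e^{\frac12 j^{T}Q^{-1}j}$, writing $e^{j\cdot x}=\sum_\alpha j^{\alpha}x^{\alpha}/\alpha!$, and matching coefficients of $j^{\alpha}$ with $\bigl(D(Q)(e^{j\cdot x})\bigr)(0)=e^{\frac12 j^{T}Q^{-1}j}$. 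Applying this to $T_M$ term by term — and using the elementary identity $\bigl(D(Q)g\bigr)(0)=\bigl(D(Q/\lambda)f\bigr)(0)$ for $g(u)=f(\sqrt\lambda\,u)$, valid because each $\del_{u_i}\del_{u_j}$ landing on $f(\sqrt\lambda\,u)$ yields a factor $\lambda$ that converts $Q^{-1}$ into $(Q/\lambda)^{-1}$ — I would obtain that $\lambda^{d/2}\int_{\RR^n}T_M(u)e^{-\frac12 u^{T}Qu}\,\dvol_u$ equals exactly $\lambda^{d/2}\sqrt{\det(2\pi Q^{-1})}\sum_{k=0}^{N}\frac{\lambda^{k}}{2^{k}k!}(Q^{-1})_{i_1j_1}\cdots(Q^{-1})_{i_kj_k}(\del_{i_1}\del_{j_1}\cdots\del_{i_k}\del_{j_k}\varphi)(0)$, using that $D(Q/\lambda)=\exp(\frac{\lambda}{2}(Q^{-1})_{ij}\del_i\del_j)$ involves only integer powers of $\lambda$ and that derivatives of $T_M$ of order $>M=2N+1$ vanish. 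Unwinding the definition of the truncation $(\,\cdot\,)_{(N)}$, this partial sum is precisely $\bigl(\lambda^{d/2}\sqrt{\det(2\pi Q^{-1})}(D(Q/\lambda)\varphi)(0)\bigr)_{(N)}$, so adding the $R_M$-estimate from the previous step gives the claim.

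The hard part is really just the uniform tail bound: because $\varphi$ is only smooth with compact support — not a polynomial, not a single-weight-controlled function — one must use that \emph{all} of its derivatives are bounded to get a remainder estimate of the shape $|R_M(u)|\le C_M\lambda^{(M+1)/2}|u|^{M+1}$ that remains integrable against the fixed Gaussian uniformly in $\lambda$. Everything else is bookkeeping: reducing Gaussian moments to $D(Q)$ via the generating function, and matching the resulting partial sum against the truncation $(\,\cdot\,)_{(N)}$.
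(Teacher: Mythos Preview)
Your proof is correct and follows the same overall strategy as the paper: Taylor-expand $\varphi$ at the origin, apply the polynomial Wick identity term by term, and bound the remainder. The organization differs in one useful way. The paper works in the original coordinates, where the Taylor remainder bound $|R(x)|\le C|x|^{2N+2}$ is only local, and so it must split $\RR^n$ into three regions---a small ball $B_r(0)$, the rest of $\mr{supp}(\varphi)$, and the complement of $\mr{supp}(\varphi)$---and argue separately on each piece (using exponential decay of the Gaussian on the middle region and the fact that $R$ is a polynomial outside $\mr{supp}(\varphi)$). Your rescaling $x=\sqrt{\lambda}\,u$ fixes the Gaussian once and for all and pushes the $\lambda$-dependence into the amplitude $\varphi(\sqrt{\lambda}\,u)$; the integral-form Taylor remainder then gives the global bound $|R_M(u)|\le C_M\,\lambda^{(M+1)/2}|u|^{M+1}$ directly from the uniform boundedness of $\partial^\alpha\varphi$, and a single Gaussian-moment estimate finishes the job. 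This is cleaner and makes the role of compact support (bounded derivatives) more transparent; the paper's argument, on the other hand, makes more explicit use of the actual support of $\varphi$ and would adapt slightly more readily to amplitudes that are not globally $C^\infty$-bounded.
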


\begin{proof}

We will use the ordinary statement of Wick's lemma. It states that for
any polynomial $p(x) = p(x^1,\ldots,x^d)$ one has
\[
\int_x p(x)
  e^{- x^T Q x} \dvol_x = \sqrt{\det \left(2 \pi Q^{-1}\right)} \left(D(Q) p(x)\right)(0) .
\]
Notice that if $p$ is homogenous and of odd degree then the right-hand side is necessarily zero. Now, we write the compactly supported function $\varphi$ as a Taylor
expansion near zero. For any $N \geq 0$ we can write
\[
\varphi(x) = \sum_{k \geq 0}^{2N+1} \sum_{|(a^1,\ldots,a^d)| = k}
\frac{1}{(a^1)!\cdots (a^d)!} 
\left(\frac{\partial^{k} \varphi}{\partial (x^1)^{a^1} \cdots \partial
  (x^d)^{a^d}} \right)_{x=0} (x^1)^{a^1} \cdots (x^d)^{a^d} + R(x) 
\]
where $R(x)$ is the remainder. Applying Wick's lemma for polynomials we obtain the first term on the right-hand side of Equation (\ref{cpt wick}). To complete the proof, it suffices to show that the integral of $R(x)$ against $e^{-x^T (Q / \lambda) x}$ is
of order $\lambda^{N+1}$. We break the integration over $\RR^d$ into three regions. First, let $r > 0$ be such that $|R(x)| < C |x|^{2N+2}$ for $|x| < r$ (this exists by Taylor's theorem). Then, by (a slightly modified version of) Wick's lemma for polynomials we have
\[
\int_{B_r(0)} |R(x)| e^{-x^T (Q/\lambda) x} \dvol_x \leq C \int_{B_r(0)} |x|^{2N+2} e^{-x^T (Q/\lambda) x} \dvol_x \leq C' |\lambda|^{N+d/2+1} 
\]
for some constant $C' > 0$. 

We write $\RR^d = B_r(0) \cup J_r \cup J'_r$ where $J_r = {\rm supp}(\varphi) \setminus B_r (0)$ and $J'_r = \RR^4 \setminus {\rm supp}(\varphi)$. It remains to show that the integral over the regions $J_r$, $J_r'$ is of order $\lambda^{N+d/2+1}$. 

If $J_r$ is empty, we are done. If it is not empty, let $M = \max_{x \in \RR^d} |R(x)|$. Then
\[
\int_{J_r} e^{-x^T (Q/\lambda) x} \dvol_x \leq C \lambda^{d/2} \max_{x \in J} e^{-x^T (Q /\lambda) x} \leq C' M e^{-s^2/\lambda}
\]
for some constants $C,C'$ and some $s \in \RR^d$ with $|s| = r$. The right-hand side vanishes exponentially fast as $\lambda$ goes to zero, so we may discard it. 

Finally, we estimate the integral over $J_r'$. As $\varphi$ vanishes in $J_r'$ we know that $R(x)$ is given by the Taylor polynomial in that domain. Thus, there is some constant $C$ such that $|R(x)| \leq C f(x)$, where $f(x)$ is a homogenous polynomial of degree $2N+2$, in $J_r'$. Then, we have by Wick's lemma for polynomials
\[
\int_{J_r'} |R(x)| e^{-x^T (Q/\lambda) x} \dvol_x \leq C \int_{\RR^d} f(x) e^{-x^T (Q/\lambda) x} = C' \lambda^{N+d/2+1} 
\]
as desired. 
\end{proof}

An immediate corollary that we use often in the proof of asymptotic
freedom is when we take $n=4$ and $A = \frac{1}{2\tau} \cdot {\rm id}_{4 \times 4}$. We state it here for
reference. 
 
\begin{corollary} \label{Wick_cor}
For any compactly supported function $\varphi \in
  C^\infty_c(\RR^4)$ and $\tau > 0$ one has
\[
\int_{x \in \RR^4} \varphi(x) e^{- \tau
  |x|^2/4} \dvol_x = (4 \pi)^2 \tau^{-2} \left(\exp\left(\tau^{-1} \sum_m
  \frac{\partial}{\partial x^m}\frac{\partial}{\partial x^m} \right)
\varphi\right)(0) .
\]
\end{corollary}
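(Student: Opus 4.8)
The plan is to deduce Corollary \ref{Wick_cor} directly from Proposition \ref{general_Wick_prop} by carrying out the substitution indicated in the statement and then matching normalizations. First I would pin down the role of the formal parameter: take $n = 4$ and $Q = A = \frac{1}{2\tau}\,\mathrm{id}_{4\times 4}$, and set $\lambda = \tau^{-2}$. With this choice the Gaussian appearing in (\ref{cpt wick}) is exactly the one we want, since $\frac12 x^T (Q/\lambda) x = \frac{1}{4\tau\lambda}|x|^2 = \frac{\tau}{4}|x|^2$. Consequently the quadratic form governing the right-hand side is the scalar matrix $Q/\lambda = \frac{\tau}{2}\,\mathrm{id}$, whose inverse is $\frac{2}{\tau}\,\mathrm{id}$, so the operator $D(Q/\lambda) = \exp\!\big(\frac12 \big((Q/\lambda)^{-1}\big)_{ij}\,\partial_{x^i}\partial_{x^j}\big)$ is precisely $\exp\!\big(\tau^{-1}\sum_m \partial_{x^m}\partial_{x^m}\big)$, matching the operator in the corollary.

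The second step is the bookkeeping of constants. From Proposition \ref{general_Wick_prop} the prefactor is $\lambda^{d/2}\sqrt{\det(2\pi Q^{-1})}$ with $d = 4$; since $Q^{-1} = 2\tau\,\mathrm{id}_{4\times 4}$ this product equals $\sqrt{\det\big(2\pi (Q/\lambda)^{-1}\big)} = \sqrt{\det\big(\tfrac{4\pi}{\tau}\,\mathrm{id}_{4\times 4}\big)} = \big(\tfrac{4\pi}{\tau}\big)^2 = (4\pi)^2\tau^{-2}$, which is exactly the coefficient claimed in the corollary. Assembling the two steps reproduces the right-hand side of Corollary \ref{Wick_cor} term by term, so there is essentially nothing to prove beyond this elementary Gaussian computation.

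The remaining point, which I would state carefully, is the sense in which the displayed identity holds. Proposition \ref{general_Wick_prop} provides, for each $N$, an equality between the integral and the $N$-th truncation of the series only up to an error $O(\lambda^{N + d/2 + 1})$; accordingly the identity in Corollary \ref{Wick_cor} is to be read as the associated asymptotic expansion as $\tau \to \infty$ (equivalently, as a formal power-series identity in $\tau^{-1}$ after extracting the overall factor $\tau^{-2}$), which is the only way it is invoked in the divergence computations of Section \ref{one_loop_section}, where only finitely many terms are ever needed. When $\varphi$ is a polynomial the Taylor remainder in the proof of Proposition \ref{general_Wick_prop} vanishes identically and the equality is exact, recovering the classical Wick lemma. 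The only genuine subtlety, and the step I would be most careful about, is this matching of powers of $\tau$ with the $\lambda$-bookkeeping of the proposition; once the identification $\lambda = \tau^{-2}$ is fixed, everything else is routine.
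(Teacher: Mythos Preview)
Your proposal is correct and matches the paper's approach: the paper presents this corollary as an immediate specialization of Proposition~\ref{general_Wick_prop} with the indicated substitution $n=4$, $A=\frac{1}{2\tau}\,\mathrm{id}_{4\times 4}$, and you have simply supplied the missing bookkeeping (identifying $\lambda=\tau^{-2}$ so that $Q/\lambda=\tfrac{\tau}{2}\,\mathrm{id}$, and checking the prefactor via $\lambda^{d/2}\sqrt{\det(2\pi Q^{-1})}=\sqrt{\det(2\pi(Q/\lambda)^{-1})}=(4\pi)^2\tau^{-2}$). Your remark that the identity is to be read as an asymptotic expansion in $\tau^{-1}$ is a useful clarification that the paper leaves implicit.
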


\section{Calculation of Analytic Factors} \label{analytic_appendix}
In this appendix we'll evaluate the analytic integrals we needed for the diagram calculations in Section \ref{analytic_and_combinatorial_factors_section} using the form of Wick's lemma given as Corollary \ref{Wick_cor} above.  Throughout these calculations we'll write $\tau = \frac{1}{t_1} + \frac{1}{t_2}$.  We'll frequently use the following elementary calculations which we present here for reference.
\begin{align*}
\mr{Sing}_{\log \eps} \int_{t_1,t_2 = \eps}^L \d t_1 \d t_2 \frac 1{t_1^3t_2^2\tau^3} = \mr{Sing}_{\log \eps} \int_{t_1,t_2 = \eps}^L \d t_1 \d t_2 \frac{t_2}{(t_1 + t_2)^3} &= -\frac 12\\
\mr{Sing}_{\log \eps} \int_{t_1,t_2 = \eps}^L \d t_1 \d t_2 \frac 1{t_1^3t_2^3\tau^3} = \mr{Sing}_{\log \eps} \int_{t_1,t_2 = \eps}^L \d t_1 \d t_2 \frac{1}{(t_1 + t_2)^3} &= 0 \\
\mr{Sing}_{\log \eps} \int_{t_1,t_2 = \eps}^L \d t_1 \d t_2 \frac 1{t_1^3t_2^3\tau^4} = \mr{Sing}_{\log \eps} \int_{t_1,t_2 = \eps}^L \d t_1 \d t_2 \frac{t_1 t_2}{(t_1 + t_2)^4} &= -\frac{1}{6} \\
\mr{Sing}_{\log \eps} \int_{t_1,t_2 = \eps}^L \d t_1 \d t_2 \frac 1{t_1^4t_2^2\tau^4} = \mr{Sing}_{\log \eps} \int_{t_1,t_2 = \eps}^L \d t_1 \d t_2 \frac{t_1^2}{(t_1 + t_2)^4} &= -\frac{1}{3}.
\end{align*}

\begin{prop} \label{xx_integral_prop}
For any compactly supported function $\varphi(x,y) \in C^\infty_c(\RR^4 \times \RR^4)$, we have
\begin{align*}
&\mr{Sing}_{\log \eps} \left(\int_x \int_y \int_{t_1 =\eps}^L \d t_1 \int_{t_2 = \eps}^L \d t_2 \varphi(x,y) \frac{\partial k_{t_1}}{\partial x^i}(x,y) \frac{\partial k_{t_2}}{\partial x^j} (x,y) \dvol_x \dvol_y\right) \\
= &- \frac{1}{16 \pi^2}\frac{1}{6} \left(\int_x \left(\frac{\partial^2 \varphi}{\partial x^i \partial x^j}\right)_{x = y} \dvol_x +
\frac{1}{2} \delta^{ij} \int_x \left(\frac{\partial^2 \varphi}{\partial x^m \partial x^m} \right)_{x=y} \dvol_x \right). 
\end{align*}
\end{prop}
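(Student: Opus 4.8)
The plan is to carry out the $x,y$--integration exactly, for each fixed pair $t_1,t_2>0$, by Wick's lemma, and then to extract the $\log\eps$ coefficient of the remaining integral over $(t_1,t_2)\in[\eps,L]^2$. First I would substitute $x=y+w$. Since $k_t(x,y)=(4\pi t)^{-2}e^{-|x-y|^2/4t}$ gives $\frac{\partial k_t}{\partial x^i}(x,y)=-\frac{x^i-y^i}{2t}k_t(x,y)$, the integrand becomes
\[
\varphi(y+w,y)\,\frac{w^iw^j}{4(4\pi)^4 t_1^3 t_2^3}\,e^{-\tau|w|^2/4},\qquad \tau:=\tfrac1{t_1}+\tfrac1{t_2},
\]
using $e^{-|w|^2/4t_1}e^{-|w|^2/4t_2}=e^{-\tau|w|^2/4}$. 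For each fixed $y$ the function $w\mapsto\varphi(y+w,y)\,w^iw^j$ is smooth and compactly supported, so I would apply Proposition~\ref{general_Wick_prop} (equivalently Corollary~\ref{Wick_cor}), with the truncation order taken large enough; the error estimate is uniform as $y$ ranges over the compact projection of $\mr{supp}\,\varphi$, so one may then integrate in $y$ freely.

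The key observation is that $w^iw^j$ vanishes to second order at $w=0$, so in the expansion produced by Wick's lemma the constant and linear terms drop out: after the universal prefactor $(4\pi)^2\tau^{-2}$ the $w$--integral is a sum of terms of order $\tau^{-3},\tau^{-4},\tau^{-5},\dots$, each an integral over $y$ of some derivative of $\varphi$ evaluated on the diagonal. Carrying the factor $t_1^{-3}t_2^{-3}$ along and using $\tau^{-1}=t_1t_2/(t_1+t_2)$, the term of order $\tau^{-m}$ contributes a multiple of $\int_{[\eps,L]^2}t_1^{m-3}t_2^{m-3}(t_1+t_2)^{-m}\,\d t_1\,\d t_2$, whose integrand is homogeneous of degree $m-6$. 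For $m=3$ this is the elementary integral $\int(t_1+t_2)^{-3}$ listed at the start of this appendix, which carries no $\log\eps$; for $m\ge 5$ the integrand is integrable at the corner $(0,0)$, so there is no $\log\eps$ there either (and the Wick remainder, of still higher order, is harmless). Hence only the $m=4$ term survives, and there $\mr{Sing}_{\log\eps}\int t_1t_2(t_1+t_2)^{-4}=-\frac16$.

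To finish I would compute that $m=4$ term. A Leibniz-rule calculation, using $\Delta_w(w^iw^j)=2\delta^{ij}$, gives
\[
\frac{1}{2!}\,\Delta_w^2\!\big[\varphi(y+w,y)\,w^iw^j\big]\big|_{w=0}
= 2\delta^{ij}\Big(\frac{\partial^2\varphi}{\partial x^m\partial x^m}\Big)_{x=y}+4\Big(\frac{\partial^2\varphi}{\partial x^i\partial x^j}\Big)_{x=y}.
\]
Multiplying by the prefactors $(4\pi)^2\tau^{-2}$, $\tau^{-2}$ and $\frac1{4(4\pi)^4 t_1^3 t_2^3}$, integrating in $y$, and applying $\mr{Sing}_{\log\eps}\int t_1t_2(t_1+t_2)^{-4}=-\frac16$ together with $(4\pi)^2=16\pi^2$, produces exactly the right-hand side of the proposition. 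I expect the main obstacle to be bookkeeping: keeping the numerical constants straight through Wick's lemma and the two-variable $t$--integral, and — the one genuinely conceptual point — recognizing that the naive leading ($\tau^{-3}$) term is invisible to $\mr{Sing}_{\log\eps}$, so that the answer is governed by the \emph{second}-order Taylor coefficient rather than the zeroth; everything else is routine.
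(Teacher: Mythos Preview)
Your proposal is correct and follows essentially the same route as the paper: the change of variables $w=x-y$, application of Corollary~\ref{Wick_cor} to the Gaussian integral in $w$, the observation that only the $\tau^{-4}$ term contributes a $\log\eps$ (the $\tau^{-3}$ term gives $\int(t_1+t_2)^{-3}$ with no log, higher terms are finite), and the identification of that term with $-\tfrac16$ times the displayed combination of second derivatives. Your Leibniz computation of $\tfrac{1}{2}\Delta_w^2\big[\varphi(y+w,y)\,w^iw^j\big]\big|_{w=0}$ and the final bookkeeping of constants are exactly what the paper does, only organized a bit more systematically via the $m$-grading.
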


\begin{proof}
The derivative of the heat kernel is given by
\[
\frac{\partial k_{t_1}}{\partial x^i} = - \frac{1}{(4 \pi)^2}
\frac{x^i-y^i}{2 t^3} e^{-|x-y|^2 / 4t} .
\]
Thus the integral on the left-hand side can be written as
\[
\frac{1}{(4\pi)^4} \frac{1}{4} \int_{x,y,t_1,t_2} \varphi(x,y)
  \frac{(x^i-y^i) (x^j-y^j)}{t_1^3 t_2^3} \exp \left( -\frac{\tau}{4} |x-y|^2
  \right)
\]
where $\tau = \frac{1}{t_1} + \frac{1}{t_2}$ as above. Next, we make the following change of coordinates $(x,y) \mapsto (z,w) = (x-y,y)$.  In these coordinates the integral simplifies to
\begin{align*}
\frac{1}{(4\pi)^4} \frac{1}{4} \int_{z,w,t_1,t_2} \varphi(z,w) \frac{z^i z^j}{t_1^3 t_2^3} \exp \left( -\frac{\tau}{4} |z|^2\right) 
\end{align*}
We now perform a Wick expansion using Corollary \ref{Wick_cor} in $z \in \RR^4$, which gives us
\[\frac{1}{16 \pi^2} \frac{1}{4} \int_{w,t_1,t_2} \d t_1 \d t_2\dvol_w \frac{1}{t^3_1 t^3_2 \tau^2} \left(\exp\left(\tau^{-1} \sum_m \frac{\partial}{\partial z^m}\frac{\partial}{\partial z^m} \right) z^i z^j \varphi(z,w)\right)(z=0) .\]
The first nonzero term in the exponential above is of the form
\begin{align*}
&\frac{1}{16 \pi^2} \frac{1}{4} \int_{w,t_1,t_2} \d t_1 \d t_2\dvol_w \left.\frac{1}{t^3_1
  t^3_2 \tau^2} \left(\tau^{-1} \sum_m
  \frac{\partial}{\partial z^m}\frac{\partial}{\partial z^m} 
(z^i z^j \varphi(z,w)) \right)\right|_{z=0}
\end{align*}
which doesn't contribute a $\log \eps$ divergence.  The next nonzero term in the Wick expansion is
\begin{align*}
\frac{1}{16 \pi^2} \frac{1}{4} \frac{1}{2} \int_w \left(8\frac{\partial}{\partial z^i} \frac{\partial}{\partial z^j} \varphi + 4 \delta^{ij} \sum_{m} \frac{\partial^2\varphi}{\partial (z^m)^2}
\right)(z=0, w) \int_{t_1,t_2 = \eps}^1 \d t_1 \d t_2 \frac{t_1 t_2}{(t_1 + t_2)^4}  
\end{align*}
The logarithmic divergent part of the $t_1,t_2$ integral is given by $-\frac{1}{6} \log \eps$. It's easy to see that the higher
terms in the Wick expansion are convergent as $\eps \to 0$, so do not contribute to the log divergence. We conclude that
\begin{align*}
&\mr{Sing}_{\log \eps} \left(\int_x \int_y \int_{t_1 =\eps}^L \d t_1 \int_{t_2 = \eps}^L \d t_2 \varphi(x,y) \frac{\partial k_{t_1}}{\partial x^i}(x,y) \frac{\partial k_{t_2}}{\partial x^j} (x,y) \dvol_x \dvol_y\right) \\
= &- \frac{1}{16 \pi^2}\frac{1}{6} \left(\int_x \left(\frac{\partial^2 \varphi}{\partial x^i \partial x^j}\right)_{x = y} \dvol_x +
\frac{1}{2} \delta^{ij} \int_x \left(\frac{\partial^2 \varphi}{\partial x^m \partial x^m} \right)_{x=y} \dvol_x \right). 
\end{align*}
as desired. 
\end{proof}

\begin{prop} \label{xt_integral_prop}
For any compactly supported function $\varphi(x,y) \in C^\infty_c(\RR^4 \times \RR^4)$, we have
\begin{align*}
\mr{Sing}_{\log \eps} \int_{x,y,t_1,t_2} \varphi(x,y) t_2 \frac{\dd k_{t_1}}{\dd x^i} \frac{\dd k_{t_2}}{\dd t_2} \dvol_x \otimes \dvol_y \d t_1 \d t_2 = -\frac{1}{16 \pi^2} \frac{1}{4} \int_{x} \left. \frac{\partial \varphi}{\partial x^i} \right|_{x=y} \dvol_x.
\end{align*} 
\end{prop}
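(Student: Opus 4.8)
The plan is to follow exactly the strategy of the proof of Proposition~\ref{xx_integral_prop}: substitute the explicit heat kernels, change coordinates to isolate the diagonal variable, apply the Gaussian integration lemma of Corollary~\ref{Wick_cor}, and read off the $\log\eps$ coefficient from the elementary $t$-integrals recorded at the start of this appendix. Concretely, I would first write
\[
\frac{\partial k_{t_1}}{\partial x^i}(x,y) = -\frac{1}{(4\pi)^2}\,\frac{x^i-y^i}{2 t_1^3}\, e^{-|x-y|^2/4 t_1},
\]
and, differentiating $k_{t_2}=(4\pi t_2)^{-2}e^{-|x-y|^2/4 t_2}$ in $t_2$ (equivalently, using the heat equation $\partial_{t_2}k_{t_2}=\Delta_x k_{t_2}$ with $\Delta_x = \sum_m \partial_{x^m}^2$),
\[
t_2\,\frac{\partial k_{t_2}}{\partial t_2}(x,y) = \frac{1}{(4\pi)^2}\left(-\frac{2}{t_2^2}+\frac{|x-y|^2}{4 t_2^3}\right)e^{-|x-y|^2/4 t_2}.
\]
After the unit-Jacobian change of coordinates $(x,y)\mapsto(z,w)=(x-y,y)$, the integrand becomes $\varphi(z+w,w)\,e^{-\tau|z|^2/4}$, with $\tau=t_1^{-1}+t_2^{-1}$ as above, multiplied by a sum of two monomials in $z$: one proportional to $z^i/(t_1^3 t_2^2)$ and one proportional to $z^i|z|^2/(t_1^3 t_2^3)$.

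The next step is to perform the $z$-integral using Corollary~\ref{Wick_cor}, expanding $\varphi(z+w,w)$ in its Taylor series in $z$ about $0$. For the first monomial the single factor $z^i$ must be paired against one $z$-derivative of $\varphi$, so the leading nonvanishing Wick term produces $\partial\varphi/\partial x^i$ on the diagonal together with a $t$-integrand proportional to $(t_1^3 t_2^2 \tau^3)^{-1}$; for the second monomial, $z^i|z|^2$ has odd degree three, so again one derivative of $\varphi$ is needed to reach even total degree, and the contraction again produces $\partial\varphi/\partial x^i$ on the diagonal, now multiplied by the more singular $t$-integrand $(t_1^3 t_2^3 \tau^4)^{-1}$. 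Both of these $t$-integrands are homogeneous of degree $-2$ in $(t_1,t_2)$ and hence log-divergent, with $\mr{Sing}_{\log\eps}$ equal to $-\frac12$ and $-\frac16$ respectively by the elementary computations at the start of the appendix; every further term in the Taylor expansion of $\varphi$ raises the homogeneity degree of the $t$-integrand to $-1$ or higher, so converges as $\eps\to 0$ and does not contribute, exactly as in Proposition~\ref{xx_integral_prop}. Summing the two surviving contributions and keeping track of the numerical factors coming out of Corollary~\ref{Wick_cor} and the coordinate change then gives the stated identity.

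I expect the main difficulty to be bookkeeping rather than analysis. The subtlety to watch is that \emph{both} terms of $t_2\,\partial k_{t_2}/\partial t_2$ contribute to the log divergence: one might naively discard the $|x-y|^2/4 t_2^3$ term as subleading, but the extra power of $t_2^{-1}$ is precisely compensated, after one more Wick contraction, by a $t$-integrand of the same homogeneity degree, so it enters at the same order. The other place to be careful is matching each order of the expansion of $(\exp(\tau^{-1}\sum_m \partial_{z^m}^2)\,z^\alpha \varphi)(0)$ against the corresponding power of $\tau$ in the $t$-integration, in order to decide which terms genuinely survive the $\eps\to0$ limit; this is the same accounting as in the proof of Proposition~\ref{xx_integral_prop}, just with one extra monomial to handle.
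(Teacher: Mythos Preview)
Your proposal is correct and follows essentially the same approach as the paper's proof: expand the heat kernel derivatives into the two monomials $z^i/(t_1^3 t_2^2)$ and $z^i|z|^2/(t_1^3 t_2^3)$, apply Corollary~\ref{Wick_cor} in the $z$-variable, and read off the $\log\eps$ coefficients from the tabulated $t$-integrals $(t_1^3 t_2^2\tau^3)^{-1}$ and $(t_1^3 t_2^3\tau^4)^{-1}$. Your remark that both terms of $t_2\,\partial k_{t_2}/\partial t_2$ contribute at the same homogeneity degree is exactly the point, and the paper handles it in the same way.
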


\begin{proof}
Expanding the derivatives of the heat kernels we find that this integral is equal to
\begin{align*}
\frac{1}{(4 \pi)^4} \frac 12 \int \varphi(x,y) (x^i - y^i) \frac{1}{t_1^3
  t_2^2} e^{- \frac{\tau}{4} |x-y|^2} \dvol_x \d t_1 \d t_2-  \frac{1}{(4 \pi)^4} \frac{1}{8} \int \varphi(x,y) (x^i - y^i)
    |x-y|^2 \frac{1}{t_1^3 t_2^3} e^{- \frac{\tau}{4} |x-y|^2} \dvol_x \d t_1 \d t_2.
\end{align*}
Applying Wick's lemma we find that in the first term only the linear summand in the Wick expansion contributes to the log divergence, 
\begin{align*}
\mr{Sing}_{\log \eps} \frac{1}{(4 \pi)^4} \frac 12 \int \varphi(x,y) (x^i - y^i) \frac{1}{t_1^3
  t_2^2} e^{- \frac{\tau}{4} |x-y|^2} \dvol_x \d t_1 \d t_2 &= \frac 1{16\pi^2}  \mr{Sing}_{\log \eps} \int \dd_i \varphi(x,x) \frac{1}{t_1^3 t_2^2\tau^3} \dvol_x \d t_1 \d t_2\\
  &= - \frac 1{16\pi^2} \frac 12\int \dd_i \varphi(x,x) \dvol_x.
\end{align*}

Likewise in the second term only the quadratic summand contributes,
\begin{align*}
\mr{Sing}_{\log \eps} \frac{1}{(4 \pi)^4} \frac{1}{8} \int_{t_1,t_2,x,y} \varphi(x,y) (x^i - y^i)|x-y|^2 \frac{1}{t_1^3 t_2^3} e^{- \frac{\tau}{4} |x-y|^2}  &= \frac{1}{16\pi^2} \frac{1}{8} 12 \, \mr{Sing}_{\log \eps} \int \dd_i \varphi(x,x) \frac{1}{t_1^3 t_2^3\tau^4} \dvol_x  \d t_1 \d t_2 \\
&= - \frac{1}{16\pi^2} \frac{1}{4} \int \dd_i \varphi(x,x) \dvol_x.
\end{align*}

Thus the total log divergence of the sum of the above two integrals is given by $- \frac{1}{16 \pi^2} \frac{1}{4} \int_x \partial_i \varphi(x,x) \dvol_x$ as desired.
\end{proof}

\begin{prop} \label{xxx_integral_prop}
For any compactly supported function $\varphi(x,y) \in C^\infty_c(\RR^4 \times \RR^4)$, we have
\begin{align*}
&\mr{Sing}_{\log \eps} \int_{x,y,t_1,t_2} \varphi(x,y) t_2 \frac{\dd k_{t_1}}{\dd x_i} \frac{\dd^2 k_{t_2}}{\dd x^j \dd x^k} \dvol_x \otimes \dvol_y \d t_1 \d t_2 \\
= &\frac{1}{16 \pi^2} \frac{1}{12} \left(\delta^{ij} \int_{x} \left. \frac{\partial
  \varphi}{\partial x^k} \right|_{x=y} \dvol_x + \delta^{ik} \int_{x} \left. \frac{\partial
  \varphi}{\partial x^j} \right|_{x=y} \dvol_x -2 \delta^{jk} \int_{x} \left. \frac{\partial
  \varphi}{\partial x^i} \right|_{x=y} \dvol_x\right).
\end{align*}
\end{prop}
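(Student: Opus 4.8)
The strategy is to follow the recipe used in the proof of Proposition~\ref{xx_integral_prop}: substitute the explicit heat-kernel derivatives, separate off the diagonal by a change of variables, apply the Wick expansion of Corollary~\ref{Wick_cor} in the fibre direction, and read off the logarithmic divergence from the residual $t_1,t_2$-integral using the elementary identities collected at the start of this appendix.

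First I would record
\[
\frac{\partial k_{t_1}}{\partial x^i}(x,y) = -\frac{x^i-y^i}{2(4\pi)^2 t_1^3}\,e^{-|x-y|^2/4t_1}, \qquad \frac{\partial^2 k_{t_2}}{\partial x^j\partial x^k}(x,y) = \left(-\frac{\delta^{jk}}{2(4\pi)^2 t_2^3} + \frac{(x^j-y^j)(x^k-y^k)}{4(4\pi)^2 t_2^4}\right)e^{-|x-y|^2/4t_2},
\]
so that, after the change of coordinates $(x,y)\mapsto(z,w)=(x-y,y)$ and with $\tau = t_1^{-1}+t_2^{-1}$, the integrand $t_2\,(\partial k_{t_1}/\partial x^i)\,(\partial^2 k_{t_2}/\partial x^j\partial x^k)$ is a sum of two Gaussian terms paired with $\varphi(z+w,w)$: one proportional to $z^i\,\delta^{jk}\,t_1^{-3}t_2^{-2}\,e^{-\tau|z|^2/4}$ and one proportional to $z^iz^jz^k\,t_1^{-3}t_2^{-3}\,e^{-\tau|z|^2/4}$. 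These two summands I would treat separately.

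For each summand I would apply Corollary~\ref{Wick_cor} in the variable $z$. Since the $\ell$-th term of the Wick expansion contributes a factor $\tau^{-2-\ell}$, the accompanying $t$-integrand is homogeneous of degree $\ell-3$ in $(t_1,t_2)$ for the first summand and of degree $\ell-4$ for the second, hence exactly logarithmically divergent when $\ell=1$ in the first case and $\ell=2$ in the second. The lower terms vanish because $z^i$ and $z^iz^jz^k$ (and also $\Delta_z(z^iz^jz^k\varphi)$, which is still linear in $z$) are zero at $z=0$, and the higher terms give convergent $t$-integrals. The two $t$-integrals that survive are $\mr{Sing}_{\log\eps}\int t_1^{-3}t_2^{-2}\tau^{-3} = -1/2$ and $\mr{Sing}_{\log\eps}\int t_1^{-3}t_2^{-3}\tau^{-4} = -1/6$ from the list above.

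It remains to compute the two Laplacian factors produced by the Wick expansion: $\Delta_z(z^i\varphi(z+w,w))|_{z=0} = 2\,(\partial\varphi/\partial x^i)|_{x=y}$, and $\frac12\Delta_z^2(z^iz^jz^k\varphi(z+w,w))|_{z=0}$, where the only surviving Leibniz contributions are those in which three of the four derivatives hit $z^i,z^j,z^k$ (one each) and the fourth hits $\varphi$, giving $\Delta_z^2(z^iz^jz^k\varphi)|_{z=0} = 8(\delta^{ij}\partial_k + \delta^{ik}\partial_j + \delta^{jk}\partial_i)\varphi|_{x=y}$. Assembling the two pieces with prefactors $(4\pi)^{-4}\cdot\frac14$ and $-(4\pi)^{-4}\cdot\frac18$, the factor $(4\pi)^2$ from each Wick integral, and the $t$-integral values above, produces the $\delta^{ij}\partial_k$ and $\delta^{ik}\partial_j$ terms with coefficient $\frac1{12}\cdot\frac1{16\pi^2}$ and the $\delta^{jk}\partial_i$ term with coefficient $(\frac1{12}-\frac14)\frac1{16\pi^2} = -\frac2{12}\cdot\frac1{16\pi^2}$, which is the asserted formula. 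I expect the main obstacle to be the Leibniz evaluation of $\Delta_z^2(z^iz^jz^k\varphi)$ together with the homogeneity count that pins down the unique log-divergent Wick term in each summand; the remainder is routine.
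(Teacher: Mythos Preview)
Your proposal is correct and follows essentially the same route as the paper's proof: expand the heat-kernel derivatives into the $z^i\delta^{jk}t_1^{-3}t_2^{-2}$ and $z^iz^jz^k t_1^{-3}t_2^{-3}$ pieces, apply the Wick expansion of Corollary~\ref{Wick_cor} in $z$, and read off the log divergence from the $t$-integrals $-\tfrac12$ and $-\tfrac16$. Your homogeneity count pinning down which Wick order contributes and your Leibniz evaluation $\Delta_z^2(z^iz^jz^k\varphi)|_{z=0}=8(\delta^{ij}\partial_k+\delta^{ik}\partial_j+\delta^{jk}\partial_i)\varphi$ are both correct, and the final assembly $\tfrac1{12}-\tfrac14=-\tfrac2{12}$ matches the paper exactly.
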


\begin{proof}
Expanding the
heat kernels we see this integral is given by
\begin{align*} 
& \frac{1}{(4\pi)^4} \frac{1}{4} \delta^{jk} \int (x^i-y^i)  \varphi
  \frac{1}{t_1^3 t_2^2} e^{- \frac{\tau}{4} |x-y|^2} - \frac{1}{(4\pi)^4} \frac{1}{8} \int  (x^i -
    x^i)(x^j-x^j)(x^k-x^k) \varphi \frac{1}{t_1^3 t_2^3} e^{- \frac{\tau}{4} |x-y|^2}
\end{align*}
Applying Wick's lemma, in the first term only the linear part of the Wick expansion contributes.  The log divergence of this term is given by $-\frac{1}{16 \pi^2}
\frac{1}{4} \delta^{jk} \int \partial_i \varphi(x,x) \dvol_x$. In the second term only the quadratic part of the Wick expansion contributes, and the log divergence of the second term is read off as
\[
\frac{1}{16 \pi^2} \frac{1}{12} \left(\delta^{ij} \int_{x} \left. \frac{\partial
  \varphi}{\partial x^k} \right|_{x=y} \dvol_x + \delta^{ik} \int_{x} \left. \frac{\partial
  \varphi}{\partial x^j} \right|_{x=y} \dvol_x + \delta^{jk} \int_{x} \left. \frac{\partial
  \varphi}{\partial x^i} \right|_{x=y} \dvol_x\right)
\]
Adding these terms up we obtain the result. 
\end{proof}

\begin{prop} \label{tt_integral_prop}
For any compactly supported function $\varphi(x,y) \in C^\infty_c(\RR^4 \times \RR^4)$, we have
\begin{align*}
\mr{Sing}_{\log \eps}\int_{x,y,t_1,t_2} \varphi(x,y) t_1 t_2 \frac{\dd k_{t_1}}{\dd t_1} \frac{\dd k_{t_2}}{\dd t_2} \dvol_x \otimes \dvol_y \d t_1 \d t_2 = - \frac{1}{16 \pi^2} \int_x \varphi(x,x) \dvol_x.
\end{align*} 
\end{prop}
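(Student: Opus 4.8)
The plan is to substitute the explicit scalar heat kernel and reduce everything to the Gaussian integrals from Corollary \ref{Wick_cor} together with the elementary $t$-integrals collected at the head of this appendix. First I would record that $t\,\partial_t k_t(x,y) = k_t(x,y)\bigl(-2 + \tfrac{|x-y|^2}{4t}\bigr)$, so that, writing $r = |x-y|$ and $\tau = t_1^{-1}+t_2^{-1}$,
\[
t_1 t_2\,\frac{\partial k_{t_1}}{\partial t_1}\frac{\partial k_{t_2}}{\partial t_2} = \frac{1}{(4\pi)^4 t_1^2 t_2^2}\, e^{-\tau r^2/4}\Bigl(4 - \tfrac{\tau}{2} r^2 + \tfrac{1}{16 t_1 t_2} r^4\Bigr).
\]
After the change of variables $(x,y)\mapsto(z,w)=(x-y,y)$ the left-hand side of the proposition becomes a sum of three terms, one for each monomial $|z|^{2p}$ with $p=0,1,2$, each of the form $\int_{w,t_1,t_2}|z|^{2p}\varphi(z+w,w)\,e^{-\tau|z|^2/4}$ times a rational function of $t_1,t_2$.

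Next I would apply Corollary \ref{Wick_cor} in the variable $z\in\RR^4$ to each of these three terms. Writing $L_z = \sum_m \partial^2/\partial(z^m)^2$, the $z$-integral of the $p$-th term yields the series $(4\pi)^2\tau^{-2}\sum_{j\ge 0}\tfrac{\tau^{-j}}{j!}(L_z)^j\bigl(|z|^{2p}\varphi(z+w,w)\bigr)\big|_{z=0}$. A power count in $t_1,t_2$ of the resulting integrand shows that exactly one value of $j$ produces a $\log\eps$ divergence, namely $j=p$; for that value $(L_z)^p(|z|^{2p})|_{z=0}=c_p$ is a constant ($c_0=1$, $c_1=L_z|z|^2 = 8$, $c_2 = (L_z)^2|z|^4 = 192$) and no derivatives of $\varphi$ survive, so the divergent part of each term is proportional to $\varphi(w,w)$. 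I would then evaluate the three $t$-integrals using the formulas at the start of the appendix: the $p=0$ and $p=1$ contributions are $\pm\tfrac{4}{(4\pi)^2}\varphi(w,w)\,\tau^{-2}t_1^{-2}t_2^{-2}$, which are equal up to sign and so cancel pointwise in $(t_1,t_2)$ before any integration; the $p=2$ contribution is $\tfrac{6}{(4\pi)^2}\varphi(w,w)\,\tfrac{t_1 t_2}{(t_1+t_2)^4}$, whose logarithmic part is $\tfrac{6}{16\pi^2}\cdot\bigl(-\tfrac16\bigr) = -\tfrac{1}{16\pi^2}$. Integrating over $w$ gives $-\tfrac{1}{16\pi^2}\int_x\varphi(x,x)\,\dvol_x$, as claimed.

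The only genuine bookkeeping hazard is precisely this cancellation: the $p=0$ and $p=1$ terms are each individually logarithmically divergent (with coefficient $\pm\tfrac{4}{16\pi^2}$), and it is the apparently sub-leading $p=2$ term that determines the answer. So I would be careful to combine the first two terms before integrating in $t_1,t_2$, and to verify by the degree count that all higher Wick terms — which genuinely involve derivatives of $\varphi$ — are convergent as $\eps\to 0$. Everything else is the same Taylor/Wick expansion used in the neighbouring propositions of this appendix.
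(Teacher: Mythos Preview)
Your proposal is correct and follows essentially the same route as the paper: the identical three-term decomposition coming from $t\,\partial_t k_t = k_t(-2 + |x-y|^2/4t)$, followed by the Wick expansion in $z=x-y$ and the same power-counting to isolate the $\log\eps$ piece. The one cosmetic difference is that you notice the $p=0$ and $p=1$ divergent pieces cancel pointwise in $(t_1,t_2)$, whereas the paper computes each $t$-integral separately (both giving $\mp\tfrac{4}{16\pi^2}\int\varphi(x,x)$) and only sees the cancellation after integration; your version avoids evaluating the extra integral $\int (t_1+t_2)^{-2}$, but otherwise the argument is identical.
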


\begin{proof}
By expanding the $t$-derivative of the heat kernel the desired integral can be written as a sum of three terms
\begin{align} 
& \label{anIV.1a} \frac{1}{(4 \pi)^4} 4 \int_{z,w,t_1,t_2} \varphi(z,w) \frac{1}{t_1^2 t_2^2} e^{- \frac{\tau}{4} |z|^2} \\
- \label{anIV.1b} & \frac{1}{(4 \pi)^4} \frac{1}{2} \int_{z,w,t_1,t_2}\varphi(z,w) |z|^2 \frac{t_1 + t_2}{t_1^3 t_2^3}e^{- \frac{\tau}{4} |z|^2} \\
+ \label{anIV.1c} & \frac{1}{(4 \pi)^4} \frac{1}{16}\int_{z,w,t_1,t_2} \varphi(z,w) \left(|z|^2\right)^2 \frac{1}{t_1^3 t_2^3} e^{-\frac{\tau}{4} |z|^2}  .
\end{align}
We evaluate each of these integrals using Wick's lemma. First, for term (\ref{anIV.1a}) only the first term in the Wick expansion  contributes to the log divergence.  We can read this divergence off as
\begin{align*}
 \mr{Sing}_{\log \eps} \frac{1}{(4 \pi)^4} 4 \int_{z,w,t_1,t_2} \varphi(z,w)
  \frac{1}{t_1^2 t_2^2} e^{- \frac{\tau}{4} |z|^2} &= -\frac{1}{16 \pi^2} 4 \int \varphi(x,x) \dvol_x
\end{align*}

To evaluate (\ref{anIV.1b}) again only the linear term in the Wick expansion contributes to the log divergence. It is read off as 
\begin{align*}
\mr{Sing}_{\log \eps} \frac{1}{(4\pi)^4} \frac{1}{2} \int_{z,w,t_1,t_2} \varphi(z,w) |z|^2 \frac{t_1+t_2}{t_1^3
  t_2^3} e^{-\frac{\tau}{4} |z|^2} &= - \frac{1}{16 \pi^2} 4\int \varphi(x,x) \dvol_x.
\end{align*}

Finally, we compute the log divergence of (\ref{anIV.1c}). Now only the quadratic term of Wick expansion contributes to the
logarithmic divergence. This term is of the form
\begin{align*}
&\mr{Sing}_{\log \eps} \frac{1}{(4\pi)^4} \frac{1}{16} \sum_{i,j =1}^4 \int_{z,w,t_1,t_2}
\varphi(z,w) (z^i)^2 (z^j)^2 \frac{1}{t_1^3 t_2^3} e^{-\frac{\tau}{4}
  |z|^2} \\
  = &\frac{1}{16 \pi^2} \frac{1}{16} \frac{1}{2} \sum_{i,j =1}^4 \sum_{m,\ell =1}^4 \int_{w}
\left.\left( \frac{\partial^2}{\partial (z^m)^2}\frac{\partial^2}{\partial (z^\ell)^2} (z^i)^2(z^j)^2\varphi\right)\right|_{z=0} \dvol_w \mr{Sing}_{\log \eps}\int \d t_1 \d t_2 \frac{1}{t_1^3 t_2^3\tau^4}\\
= &-\frac{1}{16 \pi^2} \frac{1}{16} \frac{1}{2} \frac 16 192 \int \varphi(x,x)\dvol_x\\
= &-\frac{1}{16 \pi^2} \int \varphi(x,x)\dvol_x.
\end{align*}
Summing up these three terms, the result follows. 
\end{proof}

\begin{prop} \label{xxt_integral_prop}
For any compactly supported function $\varphi(x,y) \in C^\infty_c(\RR^4 \times \RR^4)$, we have
\begin{align*}
\mr{Sing}_{\log \eps}\int_{x,y,t_1,t_2} \varphi(x,y) t_1 t_2 \frac{\dd^2 k_{t_1}}{\dd x^i \dd t} \frac{\dd k_{t_2}}{\dd x^j} \dvol_x \otimes \dvol_y \d t_1 \d t_2 = \frac{1}{16 \pi^2} \delta^{ij} \frac{1}{4} \int_x \varphi(x,x) \dvol_x.
\end{align*}
\end{prop}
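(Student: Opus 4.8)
The plan is to follow the same recipe as in the proofs of Propositions \ref{xx_integral_prop}--\ref{xxt_integral_prop}: substitute the explicit Gaussian form of the heat kernel, perform the spatial integral via the form of Wick's lemma recorded as Corollary \ref{Wick_cor}, isolate the single term of the Wick expansion that produces a $\log\eps$ divergence in the remaining $(t_1,t_2)$-integral, and evaluate that integral using the elementary integrals tabulated at the start of this appendix.

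Concretely, I would first differentiate $k_t(x,y) = (4\pi)^{-2}t^{-2}e^{-|x-y|^2/4t}$. Writing $u = x-y$, one gets $\partial_t k_t = (4\pi)^{-2}\bigl(-2t^{-3} + \tfrac{|u|^2}{4}t^{-4}\bigr)e^{-|u|^2/4t}$, hence
\[
\partial_{x^i}\partial_t k_{t_1} = (4\pi)^{-2}\,u^i\Bigl(\tfrac32 t_1^{-4} - \tfrac{|u|^2}{8}t_1^{-5}\Bigr)e^{-|u|^2/4t_1}, \qquad \partial_{x^j}k_{t_2} = -\tfrac12(4\pi)^{-2}\,u^j\,t_2^{-3}e^{-|u|^2/4t_2}.
\]
Multiplying these, inserting the $t_1t_2$ weight and the test function, and changing coordinates to $(z,w) = (x-y,\,y)$ writes the integral (with $\tau = t_1^{-1}+t_2^{-1}$) as the sum of the two terms
\[
-\tfrac34(4\pi)^{-4}\!\int \varphi(z,w)\,z^iz^j\,t_1^{-3}t_2^{-2}\,e^{-\tau|z|^2/4} \qquad\text{and}\qquad \tfrac{1}{16}(4\pi)^{-4}\!\int \varphi(z,w)\,z^iz^j|z|^2\,t_1^{-4}t_2^{-2}\,e^{-\tau|z|^2/4}.
\]

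Next I would apply Corollary \ref{Wick_cor} in the variable $z\in\RR^4$ to each term. By the same power-counting in $t_1,t_2$ used elsewhere in this appendix, exactly one term of each Wick expansion is log-divergent. For the first term the relevant Wick coefficient is $\sum_m (\partial_{z^m})^2(z^iz^j\varphi)\big|_{z=0} = 2\delta^{ij}\varphi(0,w)$, paired with $\mr{Sing}_{\log\eps}\int t_1^{-3}t_2^{-2}\tau^{-3}\,\d t_1\d t_2 = \mr{Sing}_{\log\eps}\int \tfrac{t_2}{(t_1+t_2)^3}\,\d t_1\d t_2 = -\tfrac12$, so it contributes $\tfrac34\cdot\tfrac{1}{16\pi^2}\delta^{ij}\int_x\varphi(x,x)\dvol_x$. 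For the second term the relevant coefficient is $\tfrac12\bigl(\sum_m (\partial_{z^m})^2\bigr)^2(z^iz^j|z|^2\varphi)\big|_{z=0} = 24\,\delta^{ij}\varphi(0,w)$, using $\bigl(\sum_m (\partial_{z^m})^2\bigr)^2(z^iz^j|z|^2) = 48\,\delta^{ij}$ on $\RR^4$; this is paired with $\mr{Sing}_{\log\eps}\int t_1^{-4}t_2^{-2}\tau^{-4}\,\d t_1\d t_2 = \mr{Sing}_{\log\eps}\int \tfrac{t_2^2}{(t_1+t_2)^4}\,\d t_1\d t_2 = -\tfrac13$, giving $-\tfrac12\cdot\tfrac{1}{16\pi^2}\delta^{ij}\int_x\varphi(x,x)\dvol_x$. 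Adding the two pieces yields $\bigl(\tfrac34-\tfrac12\bigr)\tfrac{1}{16\pi^2}\delta^{ij}\int_x\varphi(x,x)\dvol_x = \tfrac{1}{16\pi^2}\,\tfrac14\,\delta^{ij}\int_x\varphi(x,x)\dvol_x$, which is the claim.

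The only genuine work is the bookkeeping, and I expect this to be the main (though routine) obstacle: checking that precisely one term in each Wick expansion survives as a log divergence — a scaling argument in $t_1,t_2$ identical to those in Propositions \ref{xx_integral_prop} and \ref{tt_integral_prop} — and correctly computing the purely combinatorial constants $\sum_m (\partial_{z^m})^2(z^iz^j) = 2\delta^{ij}$ and $\bigl(\sum_m (\partial_{z^m})^2\bigr)^2(z^iz^j|z|^2) = 48\,\delta^{ij}$, the latter via the intermediate identity $\sum_m (\partial_{z^m})^2(z^iz^j|z|^2) = 2\delta^{ij}|z|^2 + 16\,z^iz^j$ in dimension four.
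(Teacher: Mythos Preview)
Your proposal is correct and follows essentially the same route as the paper: expand the heat-kernel derivatives, split into the two terms coming from the $\tfrac{3}{2}t_1^{-4}$ and $-\tfrac{|u|^2}{8}t_1^{-5}$ pieces of $\partial_{x^i}\partial_t k_{t_1}$, apply Corollary~\ref{Wick_cor} in the $z$-variable, and read off the unique log-divergent term in each via the tabulated $(t_1,t_2)$-integrals. Your bookkeeping is in fact cleaner than the paper's --- you carry the minus sign from $\partial_{x^j}k_{t_2}$ correctly into the decomposition (giving coefficients $-\tfrac34$ and $+\tfrac{1}{16}$), whereas the paper's displayed decomposition has the opposite signs and a stray factor of $2$ in the evaluation of the first term; these are compensating typos in the paper and your version is the accurate one.
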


\begin{proof}
We compute the second derivative 
\[
\frac{\partial^2 k_{t}}{\partial x^i \partial t} = \frac{1}{(4 \pi)^2} \left( \frac{3}{2}
  \frac{(x^i-y^i)}{t_1^4} - \frac{1}{8} \frac{(x^i - y^i) |x - y|^2}{t_1^5} \right) e^{- \frac{\tau}{4} |x-y|^2} .
\] 
Thus, the integral on the left-hand side can be written as a sum of two terms, namely 
\begin{align}
\label{anIV.2a} & \frac{1}{(4 \pi)^4} \frac{3}{4} \int_{z,w,t_1,t_2}
  \varphi(z,w) z^i z^j \frac{1}{t_1^3 t_2^2} e^{- \frac{\tau}{4}
  |z|^2} \\
- \label{anIV.2b} & \frac{1}{(4\pi)^4} \frac{1}{16}
                     \int_{z,w,t_1,t_2} \varphi(z,w) z^i z^j |z|^2
                     \frac{1}{t_1^4 t_2^2} e^{- \frac{\tau}{4}
                     |z|^2} .
\end{align}
As usual, we evaluate these using Wick's lemma. For term (\ref{anIV.2a}),
only the linear term in the Wick expansion contributes to the
logarithmic divergence. We read it off as 
\begin{align*}
\mr{Sing}_{\log \eps} \frac{1}{(4 \pi)^4} \frac{3}{4} \int_{z,w,t_1,t_2} \varphi(z,w) z^i z^j \frac{1}{t_1^3 t_2^2} e^{- \frac{\tau}{4}
  |z|^2} &= \frac{1}{16 \pi^2} \frac{3}{4} 2 \delta^{ij} \int \varphi(x,x)
\dvol_x \mr{Sing}_{\log \eps}\int \d t_1 \d t_2 \frac{1}{t_1^3 t_2^2 \tau^3} \\
&= -\frac{1}{16 \pi^2} \frac{3}{4} 2 \delta^{ij} \int \varphi(x,x) \dvol_x 
\end{align*}

Finally, we evaluate the logarithmic divergence of term (\ref{anIV.2b}). Only the quadratic part of the Wick expansion contributes to the
logarithmic divergence. It's given by 
\begin{align*}
&-\mr{Sing}_{\log \eps} \frac{1}{(4\pi)^4} \frac{1}{16} \sum_{k=1}^4 \int_{z,w,t_1,t_2} \varphi(z,w) z^i z^j (z^k)^2\frac{1}{t_1^4 t_2^2} e^{- \frac{\tau}{4}|z|^2} \\
= &- \frac{1}{16 \pi^2} \frac{1}{16} \frac{1}{2} \sum_{m,\ell =1}^4 \int_{w}
\left.\left( \frac{\partial^2}{\partial (z^m)^2}\frac{\partial^2}{\partial (z^\ell)^2} z^i z^j |z|^2 \varphi\right)\right|_{z=0} \mr{Sing}_{\log \eps}\int \d t_1 \d t_2 \frac {1}{t_1^4t_2^2\tau^4}\\
= &\frac{1}{16 \pi^2} \frac{1}{16} \frac{1}{2} \frac{48}{3}\int \varphi(x,x) \dvol_x \\
= &\frac{1}{16 \pi^2} \frac 12\int \varphi(x,x) \dvol_x .
\end{align*}
Again, summing these terms together yields the desired result.
\end{proof}

\begin{prop} \label{xxx2_integral_prop}
For any compactly supported function $\varphi(x,y) \in C^\infty_c(\RR^4 \times \RR^4)$ the logarithmic singular part
\[\mr{Sing}_{\log \eps} \int_{x,y,t_1,t_2} \varphi(x,y)t_1 t_2 \frac{\dd^2 k_{t_1}}{\dd x^m \dd x^i} \frac{\dd k_{t_2}}{\dd x^j} \dvol_x \otimes \dvol_y \d t_1 \d t_2\]
vanishes identically.
\end{prop}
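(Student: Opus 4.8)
The plan is to follow exactly the template used for the other propositions in this appendix; in fact the integral here is the one in Proposition \ref{xxx_integral_prop} multiplied by an extra factor of $t_2$ (after relabelling $t_1 \leftrightarrow t_2$), which is precisely what improves the $t$-power-counting enough to kill the logarithmic divergence. Concretely: expand the heat kernels, change coordinates to $z = x - y$, $w = y$, apply the Wick expansion of Corollary \ref{Wick_cor} in $z$, and then power-count in $t_1, t_2$ to isolate the single order of the Wick expansion that could produce a $\log\eps$ divergence.

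First I would record the derivatives of the scalar heat kernel: writing $u = x - y$ and using $k_t(x,y) = \frac{1}{(4\pi t)^2}e^{-|u|^2/4t}$,
\[\frac{\partial^2 k_{t_1}}{\partial x^m \partial x^i} = \left(-\frac{\delta^{mi}}{2t_1} + \frac{u^m u^i}{4t_1^2}\right)k_{t_1}, \qquad \frac{\partial k_{t_2}}{\partial x^j} = -\frac{u^j}{2t_2}\, k_{t_2}.\]
Multiplying by $t_1 t_2$, using $k_{t_1}k_{t_2} = (4\pi)^{-4}(t_1 t_2)^{-2}e^{-|u|^2\tau/4}$ with $\tau = t_1^{-1} + t_2^{-1}$, and passing to $(z,w)$, the integral becomes a sum of a term with integrand proportional to $\delta^{mi}\,\varphi(z,w)\, z^j\,(t_1^2 t_2^2)^{-1} e^{-|z|^2\tau/4}$ and a term with integrand proportional to $\varphi(z,w)\, z^m z^i z^j\,(t_1^3 t_2^2)^{-1} e^{-|z|^2\tau/4}$.

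Next I would apply Corollary \ref{Wick_cor} to the $z$-integral of each term. Its $n$-th order term carries a factor $\tau^{-2-n} = (t_1 t_2)^{2+n}(t_1+t_2)^{-2-n}$, so a term with overall $t$-prefactor $t_1^{-a}t_2^{-b}$ leaves a $t$-integrand over $[\eps,L]^2$ homogeneous of degree $n - a - b + 2$; this contributes to $\mr{Sing}_{\log\eps}$ only when $n - a - b + 2 = -2$, i.e. $n = a + b - 4$, being convergent for larger $n$ and producing only (irrelevant) power divergences for smaller $n$. For the first term $(a,b) = (2,2)$, so only the $n = 0$ Wick term could matter, but it merely evaluates $z^j$ at $z = 0$ and hence vanishes. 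For the second term $(a,b) = (3,2)$, so only the $n = 1$ Wick term could matter; it is a constant multiple of $\sum_{\ell}\left.\partial_{z^\ell}^2\!\left(z^m z^i z^j\,\varphi(z,w)\right)\right|_{z=0}$, which vanishes because two derivatives applied to the cubic monomial $z^m z^i z^j$ still leave a factor of $z$, and any term in which a derivative lands on $\varphi$ likewise retains a factor of $z$. Therefore no contribution to $\mr{Sing}_{\log\eps}$ survives, which is the assertion. (The same observations in fact kill the lower-order power divergences as well, so the integral is genuinely convergent, but we do not need this.)

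The only real work is the $t$-power-counting that identifies the one relevant Wick order for each of the two terms; after that, the vanishing is immediate. I expect this bookkeeping, together with the care needed to keep the $\log\eps$ piece separate from the power divergences, to be the only — and rather mild — obstacle.
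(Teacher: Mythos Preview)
Your argument is correct and follows essentially the same route as the paper: expand the heat-kernel derivatives into the $\delta^{mi}z^j$ and $z^m z^i z^j$ pieces with $t$-prefactors $t_1^{-2}t_2^{-2}$ and $t_1^{-3}t_2^{-2}$, apply the Wick expansion in $z$, and observe that the only Wick orders that could matter have vanishing coefficients because an odd $z$-monomial survives after the derivatives. The paper phrases the endgame slightly differently, simply noting that the first \emph{nonzero} Wick term in each piece already yields the convergent $t$-integrals $\int \frac{t_1 t_2}{(t_1+t_2)^3}$ and $\int \frac{t_1^2 t_2}{(t_1+t_2)^4}$.

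One small point to tighten: your claim that the Wick orders below the critical one produce ``only (irrelevant) power divergences'' is not quite a self-standing justification --- a homogeneous degree $-3$ integrand such as $t_1^{-1}(t_1+t_2)^{-2}$ over $[\eps,L]^2$ does carry a subleading $\frac{1}{L}\log\eps$ piece. What actually saves you (and what you do note parenthetically) is that those lower-order coefficients vanish identically because of the odd $z$-monomial; promote that observation from a parenthetical to the main line of the argument and you match the paper exactly.
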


\begin{proof}
This is similar in form to Proposition \ref{xxx_integral_prop}. Expanding the
heat kernels we see this integral is given by
\begin{align*} 
& \frac{1}{(4\pi)^4} \frac{1}{4} \delta^{jk} \int  (x^i-y^i)\varphi
  \frac{1}{t_1^2 t_2^2} e^{- \frac{\tau}{4} |x-y|^2} - \frac{1}{(4\pi)^4} \frac{1}{8} \int (x^i -
    y^i)(x^j-y^j)(x^k-y^k)\varphi  \frac{1}{t_1^3 t_2^2} e^{- \frac{\tau}{4} |x-y|^2} .
\end{align*}
The first term of the Wick expansion of each of the two integrals above has $t$-integrals of the form
\[
\int_{t_1,t_2=\eps}^L \frac{t_1 t_2}{(t_1+t_2)^3} \; \; \mbox{ and} \;\; \int_{t_1,t_2=\eps}^L \frac{t_1^2 t_2}{(t_1+t_2)^4}
\]
respectively. It is easy to see that both of these integrals are convergent in the limit $\eps \to 0$. 
\end{proof}

\bibliographystyle{alpha}
\bibliography{beta}

\begin{thebibliography}{FMRS86}

\bibitem[BBH00]{BBHSurvey}
Glenn Barnich, Friedemann Brandt, and Marc Henneaux.
\newblock Local {BRST} cohomology in gauge theories.
\newblock {\em Physics Reports}, 338(5):439--569, 2000.

\bibitem[BGV04]{BGV}
Nicole Berline, Ezra Getzler, and Mich{\`e}le Vergne.
\newblock {\em Heat kernels and {D}irac operators}.
\newblock Grundlehren Text Editions. Springer-Verlag, Berlin, 2004.
\newblock Corrected reprint of the 1992 original.

\bibitem[Bry09]{Brydges}
David~C. Brydges.
\newblock Lectures on the renormalisation group.
\newblock In {\em Statistical mechanics}, volume~16 of {\em IAS/Park City Math.
  Ser.}, pages 7--93. Amer. Math. Soc., Providence, RI, 2009.

\bibitem[CG16]{CostelloGwilliam1}
Kevin Costello and Owen Gwilliam.
\newblock {\em Factorization Algebras in Quantum Field Theory, Volume I}.
\newblock Number~31 in New Mathematical Monographs. Cambridge University Press,
  2016.

\bibitem[CG17]{CostelloGwilliam2}
Kevin Costello and Owen Gwilliam.
\newblock {\em Factorization Algebras in Quantum Field Theory, Volume II}.
\newblock 2017.
\newblock book in progress, available at
  \url{http://people.mpim-bonn.mpg.de/gwilliam/}.

\bibitem[CL12]{CostelloLiBCOV}
Kevin Costello and Si~Li.
\newblock Quantum {BCOV} theory on {Calabi-Yau} manifolds and the higher genus
  {B}-model.
\newblock {\em arXiv preprint arXiv:1201.4501}, 2012.

\bibitem[CLL15]{ChanLeungLi}
Kwokwai Chan, Naichung~Conan Leung, and Qin Li.
\newblock A mathematical foundation of {Rozansky-Witten} theory.
\newblock {\em arXiv preprint arXiv:1502.03510}, 2015.

\bibitem[Cos11]{CostelloBook}
Kevin Costello.
\newblock {\em Renormalization and Effective Field Theory}, volume 170.
\newblock AMS, 2011.

\bibitem[Cos13]{CostelloYangian}
Kevin Costello.
\newblock Supersymmetric gauge theory and the {Yangian}.
\newblock {\em arXiv preprint arXiv:1303.2632}, 2013.

\bibitem[FMRS86]{FMRS}
Joel Feldman, Jacques Magnen, Vincent Rivasseau, and Roland S{\'e}n{\'e}or.
\newblock A renormalizable field theory: the massive {Gross-Neveu} model in two
  dimensions.
\newblock {\em Communications in mathematical physics}, 103(1):67--103, 1986.

\bibitem[GG14]{GradyGwilliam}
Ryan Grady and Owen Gwilliam.
\newblock One-dimensional {C}hern-{S}imons theory and the {$\hat A$} genus.
\newblock {\em Algebr. Geom. Topol.}, 14(4):2299--2377, 2014.

\bibitem[GGW16]{GGW}
Vassily Gorbounov, Owen Gwilliam, and Brian Williams.
\newblock Chiral differential operators via {Batalin-Vilkovisky} quantization.
\newblock {\em arXiv preprint arXiv:1610.09657}, 2016.

\bibitem[GK85]{GawedzkiKupiainen}
Krysztof Gaw{\k{e}}dzki and Antti Kupiainen.
\newblock {Gross-Neveu} model through convergent perturbation expansions.
\newblock {\em Communications in Mathematical Physics}, 102(1):1--30, 1985.

\bibitem[GW73]{GrossWilczek}
David~J Gross and Frank Wilczek.
\newblock Ultraviolet behavior of non-abelian gauge theories.
\newblock {\em Physical Review Letters}, 30(26):1343, 1973.

\bibitem[Hen95]{HenneauxSurvey}
Marc Henneaux.
\newblock Homological algebra and {Yang-Mills} theory.
\newblock {\em Journal of Pure and Applied Algebra}, 100(1-3):3--17, 1995.

\bibitem[HL88]{HughesLiu}
James Hughes and Jun Liu.
\newblock {$\beta$}-functions and the exact renormalization group.
\newblock {\em Nuclear Physics B}, 307(1):183--197, 1988.

\bibitem[Lan05]{Langlands}
Robert Langlands.
\newblock The renormalization fixed point as a mathematical object.
\newblock In {\em Twenty years of {B}ialowieza: a mathematical anthology},
  volume~8 of {\em World Sci. Monogr. Ser. Math.}, pages 185--216. World Sci.
  Publ., Hackensack, NJ, 2005.

\bibitem[LL16]{LiLi}
Qin Li and Si~Li.
\newblock On the {B}-twisted topological sigma model and {C}alabi-{Y}au
  geometry.
\newblock {\em J. Differential Geom.}, 102(3):409--484, 2016.

\bibitem[Mas08]{Mastropietro}
Vieri Mastropietro.
\newblock {\em Non-perturbative renormalization}.
\newblock World Scientific Publishing Co., Hackensack, NJ, 2008.

\bibitem[MZ97]{MartelliniZeni}
Maurizio Martellini and Mauro Zeni.
\newblock Feynman rules and {$\beta$}-function for the {BF Yang-Mills} theory.
\newblock {\em Physics Letters B}, 401(1):62--68, 1997.

\bibitem[Ngu15]{Nguyen2dYM}
Timothy Nguyen.
\newblock Quantum {Yang-Mills} theory in two dimensions: Exact versus
  perturbative.
\newblock {\em arXiv preprint arXiv:1508.06305}, 2015.

\bibitem[Ngu16]{Nguyen}
Timothy Nguyen.
\newblock Quantization of the nonlinear sigma model revisited.
\newblock {\em Journal of Mathematical Physics}, 57(8):082301, 2016.

\bibitem[Pol73]{Politzer}
David~H. Politzer.
\newblock Reliable perturbative results for strong interactions?
\newblock {\em Physical Review Letters}, 30(26):1346, 1973.

\bibitem[Pol84]{Polchinski}
Joseph Polchinski.
\newblock Renormalization and effective {Lagrangians}.
\newblock {\em Nuclear Physics B}, 231(2):269--295, 1984.

\bibitem[Sal99]{Salmhofer}
Manfred Salmhofer.
\newblock {\em Renormalization}.
\newblock Texts and Monographs in Physics. Springer-Verlag, Berlin, 1999.
\newblock An introduction.

\bibitem[tHV72]{VeltmantHooft}
Gerardus 't~Hooft and Martinus Veltman.
\newblock Regularization and renormalization of gauge fields.
\newblock {\em Nuclear Phys. B}, B-44:189--213, 1972.

\bibitem[Wil71]{Wilson}
Kenneth Wilson.
\newblock Renormalization group and critical phenomena. {I.} {Renormalization}
  group and the {Kadanoff} scaling picture.
\newblock {\em Physical review B}, 4(9):3174, 1971.

\end{thebibliography}


\textsc{Institut des Hautes \'Etudes Scientifiques}\\
\textsc{35 Route de Chartres, Bures-sur-Yvette, 91440, France}\\
\texttt{celliott@ihes.fr}\\
\vspace{5pt}\\
\textsc{Department of Mathematics, Northwestern University}\\
\textsc{2033 Sheridan Road, Evanston, IL 60208, USA} \\
\texttt{bwill@math.northwestern.edu}\\
\vspace{5pt}\\
\textsc{Department of Mathematics, Northwestern University}\\
\textsc{2033 Sheridan Road, Evanston, IL 60208, USA} \\
\texttt{philsang@math.northwestern.edu}
\end{document}